\def\draft{0}
\def\llncs{0}
\def\anon{0}
\title{A New Approach to Generic Lower Bounds: 
\large Classical/Quantum MDL, Quantum Factoring, and More}
\author{%
Minki Hhan\inst{1}
}
\institute{
Institute 1\\
\email{email 1}\\
}
\theoremstyle{plain}
\title{A New Approach to Generic Lower Bounds: \\
\large Classical/Quantum MDL, Quantum Factoring, and More}
\author{
Minki Hhan\thanks{\texttt{E-mail:minkihhan@gmail.com}. KIAS}}
\newcommand{\E}{\mathbb E}
\newcommand{\cA}{\mathcal{A}}
\newcommand{\cB}{\mathcal{B}}
\newcommand{\cC}{\mathcal{C}}
\newcommand{\cM}{\mathcal{M}}
\newcommand{\cD}{\mathcal{D}}
\newcommand{\cP}{\mathcal{P}}
\newcommand{\cG}{\mathcal{G}}
\newcommand{\cZ}{\mathcal{Z}}
\newcommand{\cR}{\mathcal{R}}
\newcommand{\Z}{\mathbb{Z}}
\newcommand{\veca}{\mathbf a}
\newcommand{\vecb}{\mathbf b}
\newcommand{\vecc}{\mathbf c}
\newcommand{\vecw}{\mathbf w}
\newcommand{\vecx}{\mathbf x}
\newcommand{\vecz}{\mathbf z}
\newcommand{\vecB}{\mathbf B}
\newcommand{\vecC}{\mathbf C}
\newcommand{\vecX}{\mathbf X}
\newcommand{\vecY}{\mathbf Y}
\newcommand{\vecZ}{\mathbf Z}
\newcommand{\vecT}{\mathbf T}
\newcommand{\vecW}{\mathbf W}
\newcommand{\bit}{\{0,1\}}
  \theoremstyle{plain}
  \newtheorem{theorem}{Theorem}[section]
  \newtheorem{lemma}[theorem]{Lemma}
  \newtheorem{corollary}[theorem]{Corollary}
  \theoremstyle{definition}
  \newtheorem{remark}{Remark}
  \newtheorem{problem}{Problem}
\newtheorem{fact}{Fact}
\providecommand{\qedhere}{
\ifmmode
  \eqno \def\@badmath{$$}
    \let\eqno\relax \let\leqno\relax \let\veqno\relax
    \hbox{\qed}
\else
  \qed
\fi
}
\newcommand{\dl}{{\sf{DL}}}
\newcommand{\chal}{{\sf{Chal}}}
\newcommand{\omdl}{{\sf{OM\text{-}DL}}}
\newcommand{\mmdl}{{\sf{\text{-}M\text{-}DL}}}
\newcommand{\gapdl}{{\sf{Gap\text{-}DL}}}
\newcommand{\gapcdh}{{\sf{Gap\text{-}CDH}}}
\newcommand{\ddh}{{\sf{DDH}}}
\newcommand{\mdl}{{\sf{MDL}}}
\newcommand{\safe}{{\sf{safe}}}
\newcommand{\pprime}{{\sf{prime}}}
\newcommand{\ord}{{\sf{ord}}}
\newcommand{\base}{{\sf{base}}}
\newcommand{\smooth}{{\sf{smooth}}}
\newcommand{\spann}{{\sf{span}}}
\newcommand{\han}[1]{\textcolor{blue}{ $\langle \! \langle$ Minki: ``#1" $\rangle \! \rangle$}}
\newcommand{\han}[1]{}
\newcommand{\Gop}{{\cG.{\sf {op}}}}
\newcommand{\Geq}{{\cG.{\sf {eq}}}}
\newcommand{\Ginv}{{\cG.{\sf {inv}}}}
\newcommand{\FindLabel}{{{\sf {FindLabel}}}}
\newcommand{\FindElt}{{{\sf {FindElement}}}}
\newcommand{\Encode}{{{\sf {Encode}}}}
\newcommand{\Decode}{{{\sf {Decode}}}}
\newcommand{\eq}{{{\sf {eq}}}}
\newcommand{\Delegate}{{\sf{Delegate}}}
\newcommand{\DGop}{{\Delegate.{\sf {Gop}}}}
\newcommand{\DGinv}{{\Delegate.{\sf {Ginv}}}}
\newcommand{\DGeq}{{\Delegate.{\sf {Geq}}}}
\newcommand{\Test}{{{\sf {Test}}}}
\begin{document}
\maketitle
\begin{abstract}
This paper studies the limitations of the generic approaches to solving cryptographic problems in classical and quantum settings in various models. 
\begin{itemize}
    \item In the classical generic group model (GGM), we find simple alternative proofs for the lower bounds of variants of the discrete logarithm (DL) problem: the multiple-instance DL and one-more DL problems (and their mixture). We also re-prove the unknown-order GGM lower bounds, such as the order finding, root extraction, and repeated squaring.
    \item In the quantum generic group model (QGGM), 
    we study the complexity of variants of the discrete logarithm. We prove the logarithm DL lower bound in the QGGM even for the composite order setting. We also prove an asymptotically tight lower bound for the multiple-instance DL problem. Both results resolve the open problems suggested in a recent work by Hhan, Yamakawa, and Yun.
    \item In the quantum generic ring model we newly suggested,
    we give the logarithmic lower bound for the order-finding algorithms, an important step for Shor's algorithm.
    We also give a logarithmic lower bound for a certain generic factoring algorithm outputting relatively small integers, which includes a modified version of Regev's algorithm.
    \item Finally, we prove a lower bound for the basic index calculus method for solving the DL problem in a new idealized group model regarding smooth numbers.
\end{itemize}
The quantum lower bounds in both models allow certain (different) types of classical preprocessing.

All of the proofs are significantly simpler than the previous proofs and are through a single tool, the so-called compression lemma, along with linear algebra tools. Our use of this lemma may be of independent interest.
\end{abstract}
\ifnum\llncs=0 
\newpage
\tableofcontents
\newpage
\fi
\section{Introduction}
What is the source of the generic hardness of some cryptographic problems?

The generic group models (GGM)~\cite{Nec94,Shoup97,Maurer05} are the most successful and influential idealized models in cryptography. In this model, the group operations can be carried out by making queries to a group oracle, and any other use of the particular features of the group is not allowed.
Despite its restricted nature, many important algorithms, such as Pohlig-Hellman~\cite{PH78} or Pollard's rho algorithm~\cite{Pol78}, are encompassed by the class of generic group algorithms. Despite some criticisms~\cite{Dent02,KM06} and non-generic algorithms, e.g., index-calculus, the GGM plays an important test bed for cryptographic protocols, and the security proofs in the GGM provide a sanity check guaranteeing that there are no simple attacks. The proofs in the GGM become more meaningful in the elliptic-curve groups.

The GGM is especially promising because of its simple security proofs; most of the security proofs in the GGM heavily rely on the Schwartz-Zippel (SZ) lemma that is already used in~\cite{Shoup97}. This lemma roughly states that for a non-zero multivariate linear polynomial $P$ over $\Z_p$ for a prime $p$, the probability that a random element becomes a root of $P$ is $1/p$. This provides a meaningful limitation for generic algorithms obtaining a single piece of information and is used to prove the generic lower bounds for the discrete logarithm (DL) problem and computational/decisional Diffie-Hellman (C/DDH) problems, as well as many cryptographic applications.

We face hurdles in proving the generic security when we slightly tweak the model or problems. If we consider the unknown-order groups, the lower bounds for various problems can be proven with relatively small efforts, including the order-finding problem~\cite{Sut07} or the root extraction problems~\cite{DK02}. When we consider the problems where enormous amounts of information can be obtained, the security proofs based on the SZ lemma do not work. To remedy this, other idealized problems (e.g., the search-by-hyperplane/surfaces)~\cite{Yun15,AGK20,AHP23} are suggested or seemingly involved techniques (e.g., compression lemmas or pre-sampling) are used~\cite{CK18,CDG18} in the proofs. In the quantum setting, the rigorous proofs are rather complicated and pass through the classical lower bounds~\cite{HYY23}.

Extensions beyond the group structures~\cite{BL96,BR98,AM09,JS13,YYHK20} become much more complicated. In many cases, there is some evidence that the unconditional lower bounds unlikely exist, and the proofs are done through the reduction between the problems. To our knowledge, there are no known unconditional lower bounds, even in the idealized models. 

This state of affairs makes the genuine source of the generic hardness elusive and asks for case-by-case studies for each model. In particular, the unconditional lower bounds in idealized settings are only known for the generic group models.

\subsection{Our Results}
We provide a unified way to prove the old and new hardness proofs in the various idealized models: the known/unknown-order classical generic groups, quantum generic groups, quantum generic \emph{rings}, and the new group model embracing index calculus.

Our main technical lemma is, along with some linear algebraic observations, (variants of) \emph{the compression lemma}, which roughly asserts that there is no way to compress $n$-bit strings to strings less than $n$-bit.
This lemma is occasionally used in proving the time-space tradeoff lower bounds~\cite{DTT10,NABT15,DGK17,CK18,HXY19,CLL19}, and introduces a highly involved proof.
Our proofs are significantly simpler, as shown in this section. Roughly, we compress the problem instances along with some relevant randomness into the information that generic algorithms can obtain; the decoding simulates the generic algorithm using the encoding, \emph{without} accessing the oracles, but still recovers the problem instances. This gives some clues that the generic hardness is from the limited way of obtaining information on generic algorithms.

This paper mainly focuses on the abstract model of Maurer~\cite{Maurer05}. In this model, the group (or ring) elements are stored in \emph{element} wires, and they can be accessed only by the group operation gates or the equality gates.

\paragraph{The Known-order GGM Lower Bounds}
Let us begin with the lower bound of folklore for the DL problem (\Cref{thm: GGM_DL}). 
Let $\cG\simeq \Z_p$ be the underlying group of prime order $p$.
In this problem, the algorithm $\cA$ is given $(g,g^x)$ and is asked to find $x$.
Suppose that $\cA$ solves the DL problem with almost certainty with $T$ group operations. 
We also associate a polynomial $aX+b$ to the group element $g^{ax+b}$.
It is not hard to argue that (slightly modified) $\cA$ finds two equal group elements with different polynomials.

We use this algorithm to compress $x\in[p]$. Among $T$ group elements, there are $T^2$ possibilities for a pair of equal group elements. In other words, we can encode the discrete logarithm $x$ in $T^2$ possible collisions, and the compression lemma says that $\log T^2 = 2\log T \ge \log p$ for the high success probability. This implies that $T \ge \sqrt{|\cG|}$ as in the previous proofs.

We proceed to the MDL problem (\Cref{thm:MDL_GGM}). Suppose that the algorithm $\cA$ with $T$ group operations is given $(g,g^{x_1},...,g^{x_m})$ and is asked to find $\vecx=(x_1,...,x_m)$. As before, we can assume that $\cA$ finds $m$ collisions. We can encode $\vecx$ using the information of the collisions, which requires
\[
    \log \binom{\binom{T}{2}}{m} \approx m \log \frac{eT^2}{2m}
\]
bits. To solve the MDL problem with certainty, it must be larger than $m \log |\cG|$, which is the information that $\vecx$ possesses, implying that $T \ge \sqrt{mp}$. Previously, this bound was first proven in~\cite{Yun15} and required an involved argument regarding the related problem called the search-by-hyperplane-queries (SHQ). We note that we have \emph{another} simple proof for this lower bound solely based on the linear-algebra reasoning in~\Cref{thm:APP_MDL}.

The same proof strategy easily extends to the other problems. This includes the gap-DL and gap-CDH problems (\Cref{thm: GGM_GapDL,thm: GGM_GapCDH}) and the one-more DL problem (OM-DL) (\Cref{thm: GGM_OMDL}) that was first proven recently~\cite{BFP21} (and was falsely proven in~\cite{CDG18}). We actually prove the lower bounds for a much more general problem, where the adversary is asked to find the $n$-more DL solutions than its queries to the DL oracles; $n=1$ corresponds to the OM-DL problem.

\paragraph{The Unknown-order GGM Lower Bounds}
We also consider the unknown-order GGM. In~\Cref{sec: unknownGGM}, we show that the same strategy can prove the lower bounds for the order-finding in the prime-order group (\Cref{thm: uGGM-order}) that is shown in~\cite{Sut07} and in the RSA group (\Cref{thm: uGGM-RSA}). We also prove the hardness of the root extraction (\Cref{thm: uGGM_rootext}), which was proven in~\cite{DK02}, and the repeated squaring (\Cref{thm: uGGM_repeatsq}) in the unknown-order GGM. 
We stress that we do \emph{not} consider the ring operations. Thus, its implications are limited to the group setting.

We sketch the proof for the order-finding problem. In this model, the generic algorithm can compute $g^{x\pm y}$ for given $g^x,g^y$ as in the previous GGM, but does not know the order of the underlying group. Therefore, the corresponding polynomials have a bounded coefficient after $T$ group operations, so the number of their prime factors is bounded. It turns out that each equality gate can contain $T$ prime divisors. The encoding contains the equality gate that specifies the order, and the index of its divisors. The length becomes $3\log T$ to compress $\log |\cG|$-bit order, giving the $T \ge |\cG|^{1/3}$ bound.

\paragraph{The Quantum GGM Lower Bounds}
We prove the quantum lower bounds for solving the DL problem (\Cref{thm: QGGM_DL}) and variants in the quantum GGM (QGGM). This direction was suggested in~\cite{HYY23}, and the authors gave the lower bounds for the DL and C/DDH problems in the QGGM.

The proof strategy is different from the classical lower bounds. Instead of the one-shot encoding as in the classical setting, we need an interactive version of the compression lemma (\Cref{lem: interactive compression}) proven in~\cite{HNR18}. This roughly states that if Alice wants to send an $n$-bit message to Bob, Alice needs to send $n$-bit anyway, regardless of the amounts of Bob's messages to Alice and the number of rounds.

In the QGGM, the algorithm can make group operations coherently. Given a generic algorithm, we construct the interactive protocol between Alice and Bob, where Alice holds all the group elements, and Bob holds the other registers. Bob runs the DL algorithm, and whenever it needs to make a quantum group operation, he sends the relevant registers to Alice; Alice applies the group operations and returns the relevant registers to Bob. 
For simplicity, we assume that the indices for the target group elements are classical. In this setting, Alice sends two bits (or one qubit) to delegate the group operation Bob requested.
If the algorithm makes $Q$ group operations, the interactive version of the compression lemmas proves $2Q \ge \log |\cG|$, recovering the previous lower bound; actually, with a better constant than the previous bound $4Q\ge \log |\cG|$.

If we allow the indices to be quantum, delegating quantum group operations requires more communication to include them. 
The lower bound becomes $Q = \Omega(\log |\cG|/\log\ell)$ for the length of indices $\ell$.
The same strategy naturally extends to the MDL problem (\Cref{thm: QGGM_fine-grained-DL}) in the QGGM, proving $Q =\Omega(m\log G/\log \ell.)$ 

Our proof equally works for the \emph{composite order} DL problems and holds even regarding the classical preprocessing.
The composite order DL lower bound and MDL lower bound in the QGGM resolves the open problems asked in~\cite{HYY23}, where the matching algorithms were suggested.
In fact, our lower bound implies that the number of quantumly accessible indices is an important measure, while the previous results only consider the memory-bounded setting, which naturally bounds the number of quantum indices.
Also, our lower bound implies that the speed-up for the MDL problem beyond Shor in terms of the group operation complexity requires a large quantum data structure.

We also prove the QGGM variant for the order-finding problems (\Cref{thm: uQGGM-order}), showing the order-finding in the QGGM requires $\Omega(\log |\cG|)$ quantum group operations, even with classical preprocessing.

\paragraph{The Quantum Generic Ring Model and Lower Bounds}
We study a quantum variant of the generic ring model~\cite{AM09,JS13}, which we call the quantum generic ring model (QGRM). 
In this model, the algorithm has oracle access to the ring elements as in the GGM.
However, we do \emph{not} give the explicit value of $N$ to the algorithm because we aim for the unconditional lower bounds in the idealized model. If the algorithm knows $N$, we cannot rule out the direct use of $N$, and the proof must be through reductions as in~\cite{AM09}.

We prove that the logarithmic lower bound for the QGRM order finding algorithm in the ring isomorphic to $\Z_N$ where $N$ is a product of two safe primes (\Cref{thm: QGRM_order}). The order (or period) finding problem is a major subroutine in Shor's factoring algorithm~\cite{Shor99}. 

Note that a recent work of Regev~\cite{Regev23} solves the integer factorization with a different method. In this approach, small integers are extensively used, taking advantage of the fact that small integer arithmetic operations are faster than large integer operations, giving an improved algorithm with better circuit complexity.

We observe that this advantage results in a modified algorithm that outputs a plain integer with a nontrivial common factor with $N$ of relatively small size. We consider the generic algorithms that output such an integer to solve the integer factoring that can be computed \emph{without} modulus reductions---this must be done in plain because QGRM algorithms do not know $N$. We prove that if the output is relatively small, the logarithmic ring operation lower bound holds for factoring (\Cref{thm: QGRM_fact}). Intriguingly, the output of Shor's algorithm with this modification is too large to apply this lower bound.

These results give the first evidence that the quantum factoring algorithm needs a logarithmic number of group operations. Our result extends to the straight-line classical preprocessing that reflects the real world better.

\paragraph{Beyond GGMs: Index Calculus}
Finally, we study the idealized group model, called the smooth GGM, beyond the generic groups, encompassing the index calculus method.
This model provides the abstraction for the notion of smooth elements and efficient factoring for the smooth integers. 

We prove that the DL algorithm must make $\exp(C\sqrt{\log |\cG|\log\log |\cG|} )$ group operations for some constant $C>0$ in the SGGM (\Cref{thm: SGGM-DL}), giving some evidence that going beyond this bound requires a new idea, as the ones in the number field sieves.

We do not claim this lower bound provides new insights or {strong} evidence for the index calculus. We believe that the ideas used in the proof for the SGGM lower bound must have been observed and used in the development of the index calculus, especially for optimization.
Still, our result shows that a proper abstraction of the generic approaches, where only limited operations are used, can indeed prove that these approaches cannot go further; asking for new ideas.

\paragraph{Notations.}
For a positive integer $N$, a finite cyclic group of order $N$ is denoted by $\Z_N$, identified by $\{0,1,..., N-1\}$ with the natural group operation, and $[N]:=\{1,...,N\}$.
\section{Compression Lemmas}
This section presents our main lemmas, which are usually called the compression lemma.
The classical compression lemma is stated as follows.
\begin{lemma}\label{lem: compression}
    Let $\cM,R$ be finite sets. Let $\Encode:\cM\times R \rightarrow \bit^m$ and $\Decode:\bit^m\times R \rightarrow \cM$ be deterministic algorithms. For $\epsilon \in (0,1]$, if
    \[
        \Pr_{r \gets R,x \gets \cM} \left[
            \Decode(\Encode(x,r),r) = x
        \right] \ge \epsilon,
    \]
    then we have $m \ge \log |\cM| + \log \epsilon$.
\end{lemma}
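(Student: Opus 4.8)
The plan is to use a counting (pigeonhole) argument. Fix the randomness $r \in R$ and think of $\Encode(\cdot, r)$ as a map from $\cM$ into $\bit^m$. The crucial observation is that for this fixed $r$, decoding is deterministic: $\Decode(\cdot, r)$ is a single fixed function from $\bit^m$ to $\cM$. Therefore the set of $x \in \cM$ for which $\Decode(\Encode(x,r), r) = x$ holds is contained in the image of $\Decode(\cdot, r)$, which has size at most $2^m$. In other words, for every fixed $r$ there are at most $2^m$ ``good'' inputs $x$.

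Next I would translate this into a probability bound. Define the good set $G_r := \{ x \in \cM : \Decode(\Encode(x,r),r) = x \}$, so $|G_r| \le 2^m$ for all $r$. The hypothesis says
\[
    \epsilon \le \Pr_{r \gets R,\, x \gets \cM}\big[ x \in G_r \big] = \E_{r \gets R}\left[ \frac{|G_r|}{|\cM|} \right] \le \frac{2^m}{|\cM|}.
\]
Here the first equality just unfolds the definition of the probability as an average over $r$ of the conditional probability (over a uniform $x \in \cM$) that $x$ lands in $G_r$, and the final inequality uses $|G_r| \le 2^m$ uniformly in $r$. Rearranging gives $2^m \ge \epsilon |\cM|$, and taking logarithms yields $m \ge \log |\cM| + \log \epsilon$, as claimed.

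I do not anticipate a serious obstacle here; the only thing to be careful about is the fixing-the-randomness step — one must note that, although $\Encode$ and $\Decode$ are each allowed to read the shared randomness $r$, once $r$ is fixed the composition $x \mapsto \Decode(\Encode(x,r),r)$ is a deterministic function whose range has size at most $|\bit^m| = 2^m$, independent of how $\Encode$ behaves. The identity $x = \Decode(\Encode(x,r),r)$ can only hold for $x$ in that range, which is what bounds $|G_r|$. (If one prefers, one can instead average over $x$ first and invoke an averaging argument to find a single good $r$, but the direct computation above is cleaner.) Everything else is elementary counting and a monotone application of $\log$.
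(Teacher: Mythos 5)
Your proof is correct: fixing $r$ makes $\Decode(\cdot,r)$ a deterministic function $\bit^m \to \cM$, so the set $G_r$ of correctly-recovered $x$ lies in its image and thus $|G_r| \le 2^m$; averaging over $r$ then gives $\epsilon \le 2^m/|\cM|$, hence $m \ge \log|\cM| + \log\epsilon$. This is the standard, self-contained pigeonhole argument for the classical compression lemma, and every step checks out.

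However, it is a genuinely different route from the one the paper takes. The paper does not prove \Cref{lem: compression} by direct counting; instead it states the lemma as a corollary of the quantum \emph{interactive} compression lemma of Nayak--Hhan--Renner (\cite[Theorem 1.2]{HNR18}), by viewing the one-way deterministic encoding with shared randomness $r$ as an interactive protocol between Alice and Bob with the preshared entangled state $\sum_{r \in R}\ket{r,r}$ and a single classical message from Alice. The trade-off is clear. Your direct proof is shorter and needs nothing but pigeonhole, and is the cleanest way to establish the classical one-shot lemma in isolation. The paper's derivation is heavier for this particular statement, but it is deliberate: the same HNR18 result also yields \Cref{lem: interactive compression} (multi-round, quantum communication, average-case success probability), which is the workhorse for all of the QGGM and QGRM lower bounds later in the paper. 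Deriving \Cref{lem: compression} from the same source keeps a single unified tool rather than two separate lemmas, and in particular the average-case (rather than worst-case over $x$) guarantee, which the paper emphasizes as an advantage over earlier incompressibility arguments, falls out uniformly. So your proof is a valid and arguably more elementary alternative for the classical case, but it does not subsume the quantum interactive corollary the paper also needs.
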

This is a direct corollary of the following quantum interactive version of the compression lemma. Precisely, the classical one-way protocol with the preshared entanglement $\sum_{r \in R}\ket{r,r}$ corresponds to the above lemma.

\begin{lemma}[{\cite[Theorem 1.2]{HNR18}}]
    Consider an interactive protocol between Alice and Bob, who share an arbitrarily entangled state and communicate through classical channels.
    Alice wants to send a uniformly random element in a finite set $\cM$ to Bob.
    Suppose that the probability that Bob correctly recovers $x$ with probability $\epsilon\in (0,1]$, and Alice sends $m$ bits to Bob total over all rounds. Then it holds that $m \ge  \log |\cM|+ \log {\epsilon}$, regardless of the number of bits sent by Bob to Alice.

    In general, if Alice sends a classical string in $[M_i]$ to Bob as the $i$-th round message for $i\in [k]$ where $k$ is the maximum number of rounds, then it holds that
    \[\log \left(\prod_{i=1}^k M_i\right) \ge  \log |\cM|+ \log {\epsilon}.\] 
\end{lemma}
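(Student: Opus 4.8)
The plan is to reduce the claim to the single inequality $M := \prod_{i=1}^{k} M_i \geq \epsilon\,|\cM|$; the ``Alice sends $m$ bits total'' formulation is then the special case $M = 2^m$. To prove this inequality, the idea is that Bob can \emph{simulate} the entire protocol by himself, using only his half of the (purified, if necessary) shared state and replacing every message from Alice by a uniformly random element of the appropriate set $[M_i]$. Since there are only $M$ possible sequences of Alice-messages, Bob's guesses reproduce any fixed sequence $\vec a = (a_1,\dots,a_k)$ with probability exactly $1/M$. I would establish two facts: (i) this simulation recovers $x$ with probability at least $\frac{1}{M}\cdot\Pr[\text{the real protocol recovers }x\mid x]$ for every $x$, hence at least $\epsilon/M$ after averaging over a uniform $x$; and (ii) a lone Bob holds nothing correlated with $x$, so his guess is independent of the uniform $x$ and the simulation recovers $x$ with probability exactly $1/|\cM|$. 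Combining (i) and (ii) gives $\epsilon/M \le 1/|\cM|$, i.e.\ $\log M \geq \log|\cM| + \log\epsilon$.

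The content is in (i), which I would prove by unfolding the protocol into Kraus operators. Absorb Alice's entire workspace (her input $x$, ancillas, private randomness) into a register $A$ and Bob's into a register $B$, with shared initial state $|\psi\rangle_{AB}$. Conditioning on a fixed transcript, i.e.\ on Alice's message sequence $\vec a$ \emph{and} Bob's message sequence $\vec b$, every operation becomes a fixed linear map acting on either $A$ or $B$ alone, and since operators on $A$ commute with operators on $B$ once the classical transcript is frozen, the unnormalized state just before Bob's guess factors as $(A^{\vec a,\vec b}_x\otimes B^{\vec a,\vec b})|\psi\rangle$. Here $A^{\vec a,\vec b}_x$, a product of Alice's unitaries and measurement operators, satisfies $\|A^{\vec a,\vec b}_x\|\le 1$, so $P^{\vec a,\vec b}_x := (A^{\vec a,\vec b}_x)^\dagger A^{\vec a,\vec b}_x \preceq I_A$; and $B^{\vec a,\vec b}$ does not depend on $x$, because Bob never sees $x$. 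Writing $\{F^{\vec a,\vec b}_g\}_g$ for Bob's final POVM and $G^{\vec a,\vec b}_g := (B^{\vec a,\vec b})^\dagger F^{\vec a,\vec b}_g B^{\vec a,\vec b}\succeq 0$, one gets
\[
  \Pr[\text{real recovers }x\mid x] = \sum_{\vec a,\vec b}\mathrm{Tr}\big[(P^{\vec a,\vec b}_x\otimes G^{\vec a,\vec b}_x)\,|\psi\rangle\langle\psi|\big],
\]
whereas in the simulation Bob's reduced state is always $\mathrm{Tr}_A|\psi\rangle\langle\psi|$, so
\[
  \Pr[\text{simulation recovers }x] = \frac{1}{M}\sum_{\vec a,\vec b}\mathrm{Tr}\big[(I_A\otimes G^{\vec a,\vec b}_x)\,|\psi\rangle\langle\psi|\big].
\]
Since $(I_A - P^{\vec a,\vec b}_x)\otimes G^{\vec a,\vec b}_x\succeq 0$ and $|\psi\rangle\langle\psi|\succeq 0$, the simulation's probability dominates $\frac{1}{M}$ times the real one term by term, which is exactly (i); fact (ii) is just that Bob's output in the simulation is a fixed distribution independent of the uniform $x$.

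The statement is quoted from~\cite{HNR18}, and the only delicate point in the argument above is the unfolding in (i): one must check that, after conditioning on the full classical transcript, Alice's and Bob's operations act on disjoint quantum registers and therefore commute, so that the state genuinely factors as $A^{\vec a,\vec b}_x\otimes B^{\vec a,\vec b}$ with the $B$-part free of $x$ — and that Bob's own messages, being free, never enter the count $M$. Everything else is the one-line positivity estimate and an averaging step. Finally, \Cref{lem: compression} is the case $k=1$ with the shared state taken to be the classically correlated state $\sum_{r\in R}|r\rangle|r\rangle$: then $r$ plays the role of the preshared randomness $R$, Alice's message is $\Encode(x,r)$ and Bob's guess is $\Decode(\cdot,r)$, $M = 2^m$, and the conclusion reads $m \geq \log|\cM| + \log\epsilon$.
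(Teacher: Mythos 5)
The paper does not actually prove this lemma: it is quoted from \cite[Theorem 1.2]{HNR18}, with only a footnote remarking that the generalization from $\cM=\bit^n$ to an arbitrary finite $\cM$ is routine. So there is no in-paper proof to compare against, and your argument is a genuine self-contained alternative. It is correct and gives the tight constant $M \ge \epsilon|\cM|$: the factoring of the conditional (unnormalized) state as $(A^{\vec a,\vec b}_x \otimes B^{\vec a,\vec b})|\psi\rangle$ once the transcript $(\vec a,\vec b)$ is frozen is sound because Alice's and Bob's Kraus operators act on disjoint registers and hence commute, $\|A^{\vec a,\vec b}_x\|\le 1$ holds as a product of Kraus operators, the per-term positivity $\mathrm{Tr}\big[\big((I_A-P^{\vec a,\vec b}_x)\otimes G^{\vec a,\vec b}_x\big)|\psi\rangle\langle\psi|\big]\ge 0$ gives (i), the independence of a lone Bob's output from the uniformly drawn $x$ gives (ii), and averaging closes with $1/|\cM| \ge \epsilon/M$. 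This guessing/simulation route is more elementary than the information-theoretic machinery normally used for such bounds (cf.\ the NS06-style argument the paper also cites and, presumably, what underlies \cite{HNR18}); it buys a clean operational derivation but is tailored to the one-way-compression setting. Two minor wording nits: you write that you absorb ``her input $x$'' into the register $A$, but you then (correctly) treat $x$ as an external classical parameter indexing $A^{\vec a,\vec b}_x$, which is the right reading since $x$ must be sampled after the preshared state is fixed; and if Bob's final POVM admits an abort outcome then $\E_x[\Pr[\hat x = x]]\le 1/|\cM|$ rather than equality, which still closes the inequality chain.
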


The original theorem in \cite[Theorem 1.2]{HNR18} mainly concerns the case of $\cM=\bit^n$ and the bit-strings as messages. This generalization is straightforward.\footnote{Roughly, the choice of $\cM=\bit^n$ is only used at the end of the proof where the probability that input to Alice is $x$ is $1/2^n$, and modifying it to $1/|\cM|$ suffices to prove our theorem. The non-bit-string is slightly involved, but changing the appropriate set suffices.}
The quantum communication version can be derived using quantum teleportation (See also~\cite[Theorem 2]{NS06}).

\begin{corollary}\label{lem: interactive compression}
    In the same setting as the above lemma, if Alice and Bob can communicate through \emph{quantum channels}, the bounds become
    \[m \ge  \frac{\log |\cM|+ \log {\epsilon}}2,~~~\text{ and }~~\log \left(\prod_{i=1}^k M_i\right) \ge \frac{ \log |\cM|+ \log {\epsilon}}2,\] 
    respectively, where Alice sends one qudit of dimension $M_i$ in the $i$-th round.
    When Alice additionally sends $c$ classical bits, the bounds become
    \[2m + c \ge  {\log |\cM|+ \log {\epsilon}},~~~\text{ and }~~2\log \left(\prod_{i=1}^k M_i\right)  + c \ge { \log |\cM|+ \log {\epsilon}}.\] 
\end{corollary}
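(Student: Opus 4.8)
The plan is to reduce the quantum-channel setting of the corollary to the classical-channel setting of the preceding lemma by simulating every quantum message with quantum teleportation (cf.~\cite{NS06}). This turns the whole protocol into one that communicates only over classical channels, at the cost of a somewhat larger—but still finite—pre-shared entangled state, which the preceding lemma permits to be arbitrary.

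Recall the teleportation primitive in dimension $d$: the sender and receiver consume one copy of the $d$-dimensional maximally entangled state $\frac{1}{\sqrt{d}}\sum_{j\in\Z_d}\ket{j}\ket{j}$; the sender performs a generalized Bell measurement on the qudit to be sent together with her half of the entangled pair, obtains an outcome in $\Z_d\times\Z_d$, and transmits this outcome—a classical symbol from an alphabet of size $d^2$—to the receiver, who applies the corresponding generalized Pauli correction and recovers the qudit exactly, even if it was entangled with other registers. The two features we need are that the classical symbol travels from the party holding the qudit to the party receiving it, and that the entanglement consumed is fixed in advance, independently of the qudit's (possibly unknown) state.

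Now take any protocol as in the preceding lemma but over quantum channels, where in round $i$ Alice sends Bob one qudit of dimension $M_i$, while Bob sends Alice arbitrary classical or quantum messages. Augment the shared state with $k$ fresh maximally entangled pairs of dimensions $M_1,\dots,M_k$ (together with similar resources for any quantum messages Bob sends, whose cost we need not track); in round $i$, replace Alice's dimension-$M_i$ quantum message by teleportation, so she instead sends a classical message over an alphabet of size $M_i^2$, while Bob's quantum messages likewise become classical messages from Bob to Alice, which the preceding lemma allows in unbounded quantity. Since teleportation is exact, Bob still recovers $x$ with probability $\epsilon$, so the preceding lemma applied with $M_i' = M_i^2$ gives $\log\!\bigl(\prod_{i=1}^k M_i^2\bigr) \ge \log|\cM| + \log\epsilon$, i.e.\ $2\log\!\bigl(\prod_{i=1}^k M_i\bigr) \ge \log|\cM| + \log\epsilon$. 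The bound $m \ge (\log|\cM|+\log\epsilon)/2$ for a total of $m$ transmitted qubits is the same argument, since $m$ qubits teleport to $2m$ classical bits. For the mixed statements, Alice's $m$ quantum qubits (resp.\ her qudits) cost $2m$ (resp.\ $2\log\prod_i M_i$) classical bits, while her $c$ genuinely classical bits are forwarded unchanged, so the preceding lemma yields $2m + c \ge \log|\cM|+\log\epsilon$ (resp.\ $2\log\prod_i M_i + c \ge \log|\cM|+\log\epsilon$).

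I do not foresee a serious obstacle here; the points requiring care are bookkeeping. One must apply the teleportation corrections coherently so that the simulated protocol is genuinely identical to the original on the \emph{entire} joint state, not merely on the reduced state of the message register; one must check that the direction of teleportation's classical communication (sender $\to$ receiver) aligns with the asymmetric accounting of the preceding lemma (Alice-to-Bob bits counted, Bob-to-Alice bits free), which it does since Alice always holds the qudit she is about to send; and one must note that, because $k$ and the dimensions $M_i$ are fixed in advance, all the needed entangled pairs can indeed be pre-shared and the finiteness hypotheses of the preceding lemma remain satisfied.
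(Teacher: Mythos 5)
Your argument is correct and is precisely the route the paper intends: the paper gives no explicit proof, simply remarking that the quantum version follows by teleportation and citing NS06, which is exactly the reduction you carry out. Your bookkeeping of the $M_i \mapsto M_i^2$ alphabet blow-up, the direction of the teleportation correction bits (sender Alice $\to$ receiver Bob, so they are charged to Alice as required), and the absorption of Bob's messages and all pre-shared entanglement into the free side of the ledger is exactly what makes the one-line citation work.
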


We give some remarks. The above lemmas consider the average-case probability for input $x$, while the previous (both classical and quantum) versions~\cite{GT00,DTT10,NS06} consider the case that the success probability is at least $\epsilon$ for any input $x$. This caused a significant loss in the resulting security in the first AI-QROM bound~\cite{HXY19}, or call for the random-self-reducibility in the preprocessing DL security~\cite{CK18}. Thanks to this average-case feature, we exclude the random-self-reducibility in the proofs.

\section{Lower Bounds in the Classical Generic Group Model}\label{sec:classical}
\subsection{Generic Group Model}\label{subsec:GGM}
We first define the generic group model (GGM) of Maurer~\cite{Maurer05}, also known as the type-safe model~\cite{Zhandry22a}. 
Let $N$ be the known prime order\footnote{We can extend to the composite-order setting easily.} of our interested finite cyclic group $\cG\cong \Z_N$ with a generator $g$. A generic algorithm $\cA$ in this model is given by a circuit with the following features:
\begin{itemize}
    \item There are two types of wires: bit wires and (group) element wires. Bit wires take a bit in $\{0,1\}$, whereas element wires take an element in $\Z_N \cup \{\bot\}$. For an element wire containing $x$, we write $g^x$ to denote this wire to distinguish it from a bit string.
    \item There are bit gates that map bits to bits, which cannot take element wires as input.
    \item There are three special gates called \emph{element gates} that can access the element wires as follows:
    \begin{description}
        \item[Labeling Gate.] It takes $\lceil \log_2 N\rceil$ bit wires and interprets them as an element in $x \in \Z_N$ as input, and outputs an element wire $g^x$. If there is no corresponding element $x \in \Z_N$ to the input wires, it outputs an element wire containing $\bot$.
        \item[Group Operation Gate.] It takes two element wires containing $g^{x},g^y$ and a single bit wire containing $b$ as input. If both $g^x,g^y$ are not $\bot$, it outputs an element wire containing $g^{x+by}$.\footnote{One may define this gate differently, e.g., $(g^x,g^y,a,b) \mapsto g^{ax+by}$, but it does not make any change to our result.} Otherwise, it outputs an element wire containing $\bot$.
        \item[Equality Gate.] It takes two element wires as input. If both wires contain the same element $g^x \neq \bot$, it outputs a bit wire containing $1$. In all other cases including $\bot$ inputs, the output is $0$.
    \end{description}
\end{itemize}
An algorithm $\cA$ in this model is called a GGM algorithm and is usually denoted by $\cA^\cG$. 
The cost metric for the algorithms, denoted by \emph{the group operation complexity}, counts the number of labeling and group operation gates used in the circuit, and all other gates are considered free. 

We assume that the element gates have some orders so they can be applied sequentially (along with required bit gates).\footnote{Given the circuit, such an order can be found using the breadth-first search.} 
We also assume that GGM algorithms never make two equality gates with the same input wires. 
This ensures that for a GGM algorithm taking $m$ element wires as input and with the group operation complexity $T$, the number of group operation gates $T$, the number of equality gates less than or equal to $\binom{m+T}{2}$.
We further assume that the description of the GGM algorithm contains the order of the element gates so that they can be applied in order (ignoring bit gates). 

\begin{remark}[Relations to the other generic group models.]
A different model for generic group algorithms is suggested by Shoup~\cite{Shoup97}. The results for known-order GGM algorithms in this paper can be extended to Shoup's generic group model. 
This is because this paper focuses on the cryptographic assumptions that can be described as a single-stage game, where the generic equivalence between two models is known~\cite{Zhandry22a}. We place the detailed theorem with the proof in~\Cref{sec:equiv} for completeness. We note that our proof can be extended to the Shoup-style GGM directly, as shown in~\Cref{subsec: RRunknown}. However, this makes the proof involved, and the main body focuses on the Maurer-style model for a simpler exposition.
\end{remark}

\subsubsection{Variations: Maintaining Polynomials}
Before proceeding to the classical lower bounds in the generic group model, we give a variation of GGM algorithms, which maintains the polynomials representing the elements and information that it achieved.
We assume that the input to the GGM algorithm is specified by polynomials $P_1,...,P_m\in \Z_N[X_1,...,X_t]$ for some formal variables $X_1,...,X_t$ corresponding to the hidden values. For example, in the discrete logarithm problem, $X_1$ specifies the problem instance $g^x$, and the input is specified by $P_1=1,P_2=X_1$.
We occasionally identify a polynomial $P=a_1X_1+...+a_tX_t+b$ as a vector $(b,a_1,...,a_t) \in \Z_N^{t+1}$ (recall $N$ is prime, which makes $ \Z_N^{t+1}$ a vector space.) especially when we discuss the linear algebra notions.

Given the polynomial representations of inputs, we maintain a list $\cP$ of a pair of the element wire and polynomial called \emph{the polynomial list} and a counter $c$, and it behaves as follows.
\begin{itemize}
    \item As an initialization, set $\cP$ as an empty list. For each input element wire $w$ containing a group element corresponding $P_i$ for $i\in [m]$, store $(w,P_i)$ in the $i$-th row of $\cP$. Set $c \leftarrow m$.
    \item For a labeling gate in the circuit of $\cA$ with input representing $a \in \Z_N$ and output element wire $w$, set $c\leftarrow c+1$, and store $(w,P_c:=a)$ in the $c$-th row of $\cP$.
    \item For a group operation gate with two element wires $w_1,w_2$ and a bit wire containing $b$ as input and output wire $w$ appears, find $i,j\le c$ such that $i,j$-th rows of $\cP$ are $w_1,w_2$. Set $c\leftarrow c+1$, compute $P_c:=P_i +(-1)^b P_j$, and store $(w,P_c)$ in the $c$-th row of $\cP$.
\end{itemize}

The equality gates are dealt with differently, by maintaining \emph{the zero sets} $\cZ$ that is initialized as an empty set. For an equality gate $\eq$ with two input element wires $w_1,w_2$ and output $1$ (i.e., they are equal), we find $i,j$-th rows of $\cP$ containing $w_1,w_2$ and call $g$ by \emph{collision}; since no two equality gates have the same inputs, we also call $(i,j)$ as a collision ambiguously. We process each collision as follows. We do nothing for the equality gates outputting $0$.
\begin{itemize}
    \item If $P_i=P_j$ as a polynomial over $\Z_N$, then the collision is called \emph{trivial}, and do nothing.
    \item If an equality query finds a nontrivial collision $(i,j)$, then write $|\cZ|=z$ and $\cZ=\{Q_i\}_{i\in[z]}$, check if there exists $\veca= (a_1,...,a_z) \in \Z_N^z$ such that 
    \begin{align}\label{eqn:GGM_equalgate_zeropoly}
    P_i-P_j = a_1 Q_1 + ... + a_z Q_z
    \end{align}
    holds as a polynomial. If there is no such $\veca$, updates $\cZ\leftarrow \cZ \cup \{P_i - P_j\}$. 
    We call the collision $(i,j)$ \emph{informative}, and otherwise \emph{predictable.}
\end{itemize}
Note that the notion of informative collisions is similar to the \emph{useful} queries in~\cite{Yun15} in the search-by-hyperplane problem, but our definition is purely linear-algebraic and direct. It just says that the new informative collision must not be included in the span of the previous collisions.

The informative collisions are sufficient for describing the behavior of the GGM algorithm, as shown in the following lemma, proved in~\Cref{missing_GGM}.
\begin{lemma}\label{lem:simulation_without_queries}
    Let $\cA$ be a GGM algorithm. Given a description of the circuit for $\cA$ and the zero set $\cZ$ for the given input, the polynomial list of $\cA$ right before its termination can be computed without using the element gates, i.e., computed by a Boolean circuit.
\end{lemma}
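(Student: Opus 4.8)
The plan is to build a Boolean circuit that runs through the element gates of $\cA$ in their prescribed order and reconstructs the polynomial list $\cP$ without ever evaluating an element gate. The guiding observation is that in this model the value carried by an element wire can influence a bit wire only through an equality gate: labeling and group operation gates output element wires, and the corresponding rows of $\cP$ are determined by the purely syntactic rules already described (store the constant $a$ for a labeling gate, store $P_i+(-1)^bP_j$ for a group operation gate), which depend only on the circuit topology and on the current bit wires. Hence it suffices to show that the output of every equality gate can be computed from the circuit description together with the given zero set $\cZ$.

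The crux is the following claim: for any equality gate of $\cA$ comparing wires whose associated rows of $\cP$ carry polynomials $P$ and $Q$, the gate outputs $1$ if and only if $P-Q$ lies in the $\Z_N$-span of $\cZ$. (I first argue this assuming no element wire carries $\bot$.) For the ``if'' direction, every element of $\cZ$ was adjoined as $P_a-P_b$ for some informative collision, i.e.\ an equality gate that output $1$ on non-$\bot$ inputs, so $(P_a-P_b)(\vecx)=0$; consequently every element of the span of $\cZ$ vanishes at $\vecx$, and $P-Q\in\spann(\cZ)$ gives $(P-Q)(\vecx)=0$, i.e.\ $g^{P(\vecx)}=g^{Q(\vecx)}$, so the gate outputs $1$. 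For the ``only if'' direction, suppose the gate outputs $1$. If $P=Q$ as polynomials then $P-Q=0\in\spann(\cZ)$; otherwise the collision is nontrivial, and by the bookkeeping rule for $\cZ$ it is processed either as predictable, so that $P-Q$ already lies in the span of the then-current $\cZ$, or as informative, so that $P-Q$ is added to $\cZ$ --- in both cases $P-Q\in\spann(\cZ)$ for the final $\cZ$ that we are handed. Since membership in $\spann(\cZ)$ is decided by Gaussian elimination over $\Z_N$, this test is implemented by a (polynomial-size) Boolean circuit.

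Granting the claim, I would finish by induction along the gate order with the joint invariant ``the simulated bit wires equal the real bit wires and the simulated polynomial list equals $\cP$''. Bit gates preserve it because they are fixed functions of bit wires; labeling and group operation gates preserve it because their $\cP$-entries are the syntactic expressions above; and equality gates preserve it by the claim, which is precisely the step where a discrepancy could otherwise enter. At termination the simulated list is $\cP$, so $\cP$ is computed by a Boolean circuit. There is no circularity here: $\cZ$ is part of the input, not something we recompute, and the induction certifies that the simulated run agrees with the real one step by step, so the equality tests against the final $\cZ$ return the real outcomes.

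Finally, $\bot$ causes no trouble: the Boolean circuit also maintains a $\bot$-flag per element wire --- a labeling gate sets it from whether its input bits encode an element of $\Z_N$, a group operation gate ORs the flags of its inputs, and an equality gate outputs $0$ whenever either input is flagged --- again a function of bit wires and topology only, after which the claim applies verbatim to the non-$\bot$ case. I expect the only genuinely delicate point to be the ``only if'' direction of the claim, where one must lean on the invariant that every nontrivial genuine collision ends up spanned by $\cZ$; the rest is routine polynomial arithmetic and linear algebra over $\Z_N$.
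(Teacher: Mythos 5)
Your proposal is correct and follows essentially the same route as the paper's proof: replace element wires by polynomial bookkeeping, handle labeling/group-operation gates syntactically, and decide each equality gate by testing membership of $P_i-P_j$ in $\spann(\cZ)$ over $\Z_N$. The paper states this construction tersely and omits the two-directional correctness argument and the $\bot$-flag bookkeeping that you spell out, but there is no difference in substance.
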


The following auxiliary lemma is a generalization of the Schwartz-Zippel lemma, which could be of independent interest. It gives another alternative proof for the MDL lower bound presented in~\Cref{app:MDL} with its proof.
\begin{lemma}\label{lem: SZinformative}
    Suppose the hidden variables $x_1,...,x_t$ are uniform in $\Z_N$, and the group elements during the execution of the algorithm always correspond to the linear polynomials in $\Z_N[X_1,...,X_t].$
    For any equality gate for $w_i,w_j$, the probability that it induces an informative collision is at most $1/N$.
\end{lemma}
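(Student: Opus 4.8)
The plan is to reduce the statement to the ordinary Schwartz--Zippel lemma for a single linear polynomial. Fix the equality gate $g^*$ in question, with input element wires $w_i,w_j$, and let $P_i,P_j\in\Z_N[X_1,\dots,X_t]$ be the linear polynomials attached to $w_i,w_j$ in the polynomial‑list bookkeeping described above. Under the standing hypothesis every element wire carries a well‑defined input‑independent linear polynomial throughout the execution --- equivalently, the polynomial list is the deterministic function of the circuit and the zero set guaranteed by \Cref{lem:simulation_without_queries} --- so $R:=P_i-P_j$ is \emph{one fixed} linear polynomial, not depending on the hidden values $x_1,\dots,x_t$.

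Next I would observe that the event ``$g^*$ induces an informative collision'' is contained in the event ``$R\not\equiv 0$ and $R(x_1,\dots,x_t)=0$''. Indeed, an informative collision is in particular a nontrivial collision: on the one hand the gate outputs $1$, which by the definition of the equality gate means $P_i$ and $P_j$ evaluate to the same group element, i.e.\ $R(x_1,\dots,x_t)=0$; on the other hand nontriviality is exactly $R\not\equiv 0$. The ``not predictable'' half of informativeness, namely $R\notin\mathrm{span}(\cZ)$, is only used to strengthen the conclusion in the applications; for this particular bound it is simply discarded, which is why the estimate survives even though $\cZ$ itself depends on the hidden values.

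It then remains to bound $\Pr_{x_1,\dots,x_t\getsr\Z_N}[R(x_1,\dots,x_t)=0]$ for a fixed nonzero linear polynomial $R$ over the prime field $\Z_N$. If $R$ is a nonzero constant the probability is $0$; otherwise some variable occurs in $R$ with a nonzero coefficient, and fixing the remaining variables leaves a nonzero linear equation in that one variable, which has exactly one root, so the probability is exactly $1/N$. Either way it is at most $1/N$. If instead $R\equiv 0$, then every collision at $g^*$ is trivial, hence never informative, and the probability is $0\le 1/N$. This gives the claim; summed over the at most $\binom{m+T}{2}$ equality gates it is what feeds the alternative MDL bound of \Cref{app:MDL}.

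The only genuinely load‑bearing step --- and the one I would be most careful to write out in full --- is the first one: that $R$ is a single fixed polynomial rather than a random object. This is precisely what the hypothesis ``the group elements always correspond to the linear polynomials'' provides. Without something of this sort (for instance if the algorithm could route different polynomials onto a wire depending on earlier equality answers, so that the zero set and the relevant difference polynomial co‑vary with the input) the bound need not hold, and one would have to argue instead through the symbolic straight‑line execution. Everything after that first step is just the one‑variable Schwartz--Zippel estimate, and the ``informative'' qualifier is a free tightening, not an obstacle.
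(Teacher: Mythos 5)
Your argument breaks down at its first step, and the step you flag as the one to ``write out in full'' is in fact the one that fails. The hypothesis that ``the group elements always correspond to linear polynomials'' guarantees that each element wire carries \emph{some} linear polynomial; it does \emph{not} guarantee that this polynomial is independent of the hidden values. By \Cref{lem:simulation_without_queries} the polynomial list is a function of the circuit \emph{and} the zero set $\cZ$, and $\cZ$ depends on $\vecx$ through earlier equality-gate outputs, so $R = P_i - P_j$ is a random object co-varying with $\vecx$, not a fixed polynomial. Once $R$ is random, the claim $\Pr[R\not\equiv 0 \ \wedge\ R(\vecx)=0]\le 1/N$ is simply false, and discarding the informativeness condition is exactly what kills the bound. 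Concretely, take $t=1$, $N=5$: let the first equality gate test $X=1$ (output bit $b_1$), then feed the second equality gate the pair of polynomials $(X,\ 1+b_1)$ via a labeling gate whose input depends on $b_1$. Then $R = X-(1+b_1)$, and $\Pr[R\not\equiv 0 \wedge R(x)=0] = \Pr[x=1] + \Pr[b_1=0 \wedge x=2] = 2/5 > 1/N$, even though the probability of an \emph{informative} collision at the second gate is exactly $1/5$ (when $b_1=1$ the collision $X-1$ lies in $\spann(\cZ)$ and is predictable).

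The correct argument --- and the one the paper uses --- does not drop informativeness; it uses it to control the adaptivity. Condition on the transcript up to the gate in question, equivalently on $\cZ$: given $\cZ$, the polynomial $R$ is determined, and $\vecx$ is uniform over the affine subspace $\{Q=0 : Q\in\cZ\}$. Informativeness means $R\notin\spann(\cZ)$, so $R$ is nonzero in the quotient $\Z_N[X_1,\dots,X_t]/\spann(\cZ)\simeq \Z_N[L_1,\dots,L_{t-s}]$, and each residual variable $L_i$ is still uniform over the conditional distribution of $\vecx$. Hence $R(\vecx)$ is uniform over $\Z_N$ conditionally on $\cZ$, giving conditional probability exactly $1/N$, and averaging over $\cZ$ yields the unconditional bound. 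The one-variable Schwartz--Zippel intuition you invoke is the right picture, but it has to be applied in the quotient ring after conditioning, not to a pretended ``fixed'' polynomial over the unconditional distribution; the informativeness clause is load-bearing, not a free tightening.
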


\subsection{The Discrete Logarithm Problem and Friends}
We first prove the following well-known generic lower bound for the DL problem.

\begin{problem}
    A discrete logarithm (DL) problem for a cyclic group $\cG$ of order $p$ with a generator $g$ asks to find $x$ given $(g,g^x)$ for uniformly random $x \in \{0,...,p-1\}$. An $m$-multiple DL ($m$-MDL) problem asks to find $\vecx=(x_1,...,x_m)$ given $(g,g^{x_1},...,g^{x_m})$ for uniformly random $\vecx \in \{0,...,p-1\}^m$. In the (Q)GGM, the group is fixed a priori, and the inputs are stored in the element registers.
\end{problem}
\begin{theorem}\label{thm: GGM_DL}
    Let $\cG$ be a cyclic group of prime order.
    Let $\cA_\dl$ be a DL algorithm in the GGM having at most $T$ group operation gates, then the following holds:
    \[
        \Pr_{\cA_\dl,x}\left[
            \cA_\dl^{\cG}(g,g^x) \rightarrow x
        \right]
        =
        O\left(
            \frac{T^2}{|\cG|}
        \right).
    \]
\end{theorem}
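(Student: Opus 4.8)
The plan is to follow the sketch given in the introduction and encode the discrete logarithm $x$ into a description of a collision. First I would argue that, up to losing at most a constant factor in the success probability, we may assume $\cA_\dl$ produces a useful collision. Concretely: run $\cA_\dl$ on $(g,g^x)$, then take its output $y$, append a labeling gate producing $g^y$, and compare it with $g^x$ via an equality gate. If $\cA_\dl$ was correct, this forces a collision between the input wire for $g^x$ (associated to the polynomial $X_1$) and the wire carrying $y$ (associated to the constant polynomial $y$); these polynomials differ as elements of $\Z_N[X_1]$ since $X_1 \neq y$, so the collision is nontrivial. Moreover, since the input polynomials are just $P_1 = 1$ and $P_2 = X_1$, there can be \emph{at most one} informative collision: once $\cZ$ contains a single linear polynomial of the form $aX_1 + b$ with $a \neq 0$, every subsequent nontrivial collision $P_i - P_j = a'X_1 + b'$ (with $a' \ne 0$) lies in its span over $\Z_N$ (here we use that $N$ is prime so we can divide by $a$). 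Hence the modified algorithm, with $T' = T + O(1)$ group operations, produces exactly one informative collision with probability $\Omega(1)$ times the original success probability, and the single element of $\cZ$ is a polynomial $aX_1 + b$ with $a \neq 0$ whose unique root is $x$.

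Next I would set up the encoding. Fix the randomness $r$ of $\cA_\dl$. The "collision" that matters is identified by a pair of indices $(i,j)$ with $i,j \le c \le m + T'$ (by the accounting in Section~\ref{subsec:GGM}, the total number of wires is at most $2 + T'$). The encoding $\Encode(x, r)$ runs $\cA_\dl^\cG(g, g^x)$ using the true oracle, watches for the first informative collision, and outputs the pair $(i,j)$ of polynomial-list indices that produced it (say $\bot$ or a dummy value if no informative collision occurs). This is a string of length $2\log(m + T') = 2\log T' + O(1)$ bits. For decoding, given $(i,j)$ and $r$: by \Cref{lem:simulation_without_queries}, once we know the zero set we can compute the full polynomial list without any oracle access — but here we do not yet know $\cZ$. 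The key observation is that \emph{up to the moment of the first informative collision}, $\cZ$ is empty, so the entire run of $\cA_\dl$ (all bit gates, all branching, and in particular which collision is flagged first) is determined purely by $r$ and the polynomials, with no oracle calls needed: every equality gate before the first informative collision returns a value dictated by whether the two associated polynomials are equal or lie in $\spann(\cZ) = \{0\}$, i.e. whether they are literally equal as polynomials. Thus $\Decode$ simulates $\cA_\dl$ symbolically on input $(P_1, P_2) = (1, X_1)$ with randomness $r$, reads off the polynomials $P_i, P_j$ attached to wires $i$ and $j$ at the step when the first informative collision is recorded, forms $P_i - P_j = aX_1 + b$, and outputs $x = -b/a \in \Z_N$ (using $a \neq 0$).

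Finally I would invoke \Cref{lem: compression} with $\cM = \Z_N$: whenever the modified $\cA_\dl$ both succeeds and produces an informative collision, $\Decode(\Encode(x,r),r) = x$, and this happens with probability $\epsilon = \Omega(\Pr_{\cA_\dl,x}[\cA_\dl^\cG(g,g^x)\to x])$. The lemma gives $2\log T' + O(1) \ge \log |\cG| + \log \epsilon$, i.e. $\epsilon = O(T'^2 / |\cG|) = O(T^2/|\cG|)$, which is the claimed bound. The main obstacle I anticipate is the first step — cleanly justifying that the simulation in $\Decode$ exactly reproduces the behavior of $\cA_\dl$ \emph{up to and including} the step that records the first informative collision, without needing to know anything the oracle would have told it; this rests on the fact that with $\cZ = \emptyset$ all equality gates have predictable answers (a collision is trivial iff the polynomials are equal), together with the reduction showing that a correct DL solver can be forced to generate such a collision at all. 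The accounting that there is at most one informative collision (so the index $(i,j)$ unambiguously points to it) and the index-range bound are routine given the setup in Section~\ref{subsec:GGM}.
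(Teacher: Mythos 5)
Your proposal is correct and follows essentially the same route as the paper's proof: force the algorithm to produce an informative collision by appending a labeling gate and an equality gate, encode the location of the first informative collision in $O(\log T)$ bits, and invoke the compression lemma (\Cref{lem: compression}) against $\cM = \Z_N$. The "obstacle" you flag at the end — that the decoder can simulate the run up to the first informative collision without touching the oracle, because all earlier equality gates are resolved by $\cZ = \emptyset$, i.e.\ literal polynomial equality — is precisely the content of the paper's \Cref{lem:simulation_without_queries}, which the paper invokes by name rather than re-deriving inline. Your additional observation that there can be at most one informative collision in the single-variable DL case is correct but is not needed and is not used in the paper; the paper simply encodes the lexicographically first informative collision without bounding how many there are.
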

\begin{proof}
    Let $p=|\cG|$ and $\epsilon$ be the success probability of $\cA_\dl$. We make the following modifications: 
    For $z \gets \cA_\dl^\cG(g,g^x),$ we let the algorithm make the labeling gate on input $z$ and apply the equality gate on input $(g^z,g^x)$ to find a collision at the end, so that $\cA$ always finds an informative collision with probability at least $\epsilon$. Including this procedure, we assume that the algorithm makes $C=T+1$ group operations.
    The algorithm $\cA_\dl$ may be randomized by taking a random string $r$ as a seed.

    Now, we construct a pair of encoding and decoding protocols for $\cM=[p]$ and a set $R$ of seed $r$. For $x \in [p]$, the protocols are defined as follows.
    \begin{description}
        \item[$\Encode(x,r)$:] It runs $\cA_\dl^{\cG}(g,g^x)$ with randomness $r$ and outputs the equality gate $c$ with input $(i,j)$ that is the lexicographically first informative collision, i.e., for any other informative collision $(i',j')$, it holds that $i<i'$, or $i=i'$ and $j<j'$. If there is no informative collision, it outputs a special symbol $c=\bot$.
        \item[$\Decode(c,r)$:] If $c=\bot$, it outputs a random value in $[p]$. 
        Otherwise, it constructs a sub-circuit $\cA'_\dl$ of $\cA_\dl$ by cutting out the gates after the equality gate $c$ corresponding to the first informative collision $(i,j)$. 
        We associate the group element $g^{ax+b}$ with a polynomial $aX+b \in \Z_p[X]$.
        By~\Cref{lem:simulation_without_queries}, the corresponding polynomials $P_i=a_iX+b_i$ and $P_j=a_jX+b_j$ can be computed without using the element wires. Then it returns $z=-(b_i-b_j)/(a_i-a_j) \bmod p$ as an output.
    \end{description}

    We prove this protocol is correct with a probability of at least $\epsilon$, or whenever $\cA_\dl$ finds $x$. 
    In this case, the encoder finds a collision $c$ with input $(i,j)$, which is informative only when 
    \[
        (a_ix+b_i=a_jx+b_j \bmod p)\land \left((a_i,b_i)\neq (a_j,b_j)\right) ~~\Longleftrightarrow ~~x = -\frac{b_i-b_j}{a_i-a_j} \bmod p.
    \]
    Thus, given $c \neq \bot$, the decoder always finds the correct answer $x$, i.e., the protocol succeeds with probability at least $\epsilon$. 

    Now we compute the encoding length of the protocol.
    Since $\cA_\dl$ obtains at most $C+2$ group elements including inputs, the encoding space $\cC$ has the cardinality $\binom{C+2}{2} +1 \le (T+3)^2/2$.
    By~\Cref{lem: compression}, we have the following inequality
    \[
    \log \epsilon + \log |\cG| \le \log |\cC| \le  \log \left(\frac{(T+3)^2}{2}\right)
    ~~\Longrightarrow~~
    \epsilon =
    O\left ( 
        \frac{T^2}{|\cG|}
    \right)
    \]
    which concludes the proof.\ifnum\llncs=1 \qed \fi
\end{proof}

It can easily be extended to the multiple-instance DL problem with small adjustments.
For a positive integer $m$, we write $g^{\vecx}$ to denote $(g^{x_1},...,g^{x_m})$.

\begin{theorem}\label{thm:MDL_GGM}
    Let $\cG$ be a cyclic group of prime order.
    Let $\cA_{m\text{-}\mdl}$ be an $m$-MDL algorithm in the GGM having at most $T$ group operation gates. It holds that:
    \[
        \Pr_{\cA_{m\text{-}\mdl},\vecx}\left[
            \cA_{m\text{-}\mdl}^{\cG}(g,g^{\vecx}) \rightarrow \vecx
        \right]
        =
        O\left( \left( \frac{e(T+2m+1)^2}{2m|\cG|}\right)^m \right).
    \]
\end{theorem}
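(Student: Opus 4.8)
The plan is to mimic the proof of \Cref{thm: GGM_DL} almost verbatim, replacing the single discrete logarithm $x\in[p]$ by the vector $\vecx\in[p]^m$ and the single informative collision by a tuple of $m$ informative collisions. As before, append to $\cA_{m\text{-}\mdl}$ a final clean-up stage: given its claimed output $\vecz=(z_1,\dots,z_m)$, make $m$ labeling gates on $z_1,\dots,z_m$ and $m$ equality gates comparing $g^{z_k}$ with the input $g^{x_k}$; this guarantees that whenever the algorithm succeeds it has witnessed at least $m$ informative collisions (one for each coordinate, since the polynomial $P$ for the input wire $g^{x_k}$ is the formal variable $X_k$, the polynomial for $g^{z_k}$ is the constant $z_k$, and $z_k=x_k$ makes $X_k-z_k$ a genuine affine function vanishing at $\vecx$; a standard argument shows these $m$ differences are linearly independent over $\Z_p$, hence all informative, modulo handling the case where some coordinate's collision was already recorded predictably — in which case one simply takes the first $m$ informative ones overall). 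With the appended gates the algorithm makes $C=T+2m$ group operations (labelings included) plus the $m$ equality gates, and obtains at most $C+m+1$ element wires counting the $m+1$ inputs and the $m$ new labels, i.e. at most $T+3m+1$; I will be slightly more careful with the exact count to land the stated constant.

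The encoding $\Encode(\vecx,r)$ runs the (modified) algorithm with randomness $r$ and outputs the unordered set (or lexicographically sorted tuple) of $m$ equality gates corresponding to the first $m$ informative collisions in lexicographic order, or a special symbol $\bot$ if fewer than $m$ informative collisions occur. The decoder $\Decode$, on input this list of $m$ collisions $(i_1,j_1),\dots,(i_m,j_m)$ together with $r$, uses \Cref{lem:simulation_without_queries} to reconstruct, without any element-gate access, the affine polynomials $P_{i_k}=\veca_k\cdot(X_1,\dots,X_m)+b_k$ and $P_{j_k}=\veca'_k\cdot(X_1,\dots,X_m)+b'_k$ attached to each colliding pair; the defining property of informative collisions is that the $m$ difference polynomials $P_{i_k}-P_{j_k}$ are $\Z_p$-linearly independent as affine forms, so the linear system $\{(P_{i_k}-P_{j_k})(\vecx)=0\}_{k\in[m]}$ has a unique solution $\vecx\in\Z_p^m$, which the decoder solves by Gaussian elimination and outputs. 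As in \Cref{thm: GGM_DL}, whenever $\cA_{m\text{-}\mdl}$ succeeds (probability $\epsilon$), the encoder produces a non-$\bot$ list and the decoder recovers $\vecx$ exactly, so the pair succeeds with probability at least $\epsilon$; on $\bot$ the decoder guesses and we only lose a negligible additive $p^{-m}$ which is absorbed.

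Finally I count the encoding length. The number of possible equality gates is at most $\binom{C+m+1}{2}$ where $C+m+1\le T+3m+1$ is the number of element wires; choosing an unordered $m$-subset of these gives $\binom{\binom{T+3m+1}{2}}{m}+1$ possibilities for $\Encode$'s output (the $+1$ for $\bot$). Applying \Cref{lem: compression} with $\cM=[p]^m$ gives
\[
m\log|\cG|+\log\epsilon \;\le\; \log\!\left(\binom{\binom{T+3m+1}{2}}{m}+1\right)\;\le\; m\log\frac{e\binom{T+3m+1}{2}}{m}\;\le\; m\log\frac{e(T+3m+1)^2}{2m},
\]
using the bound $\binom{a}{m}\le(ea/m)^m$, and rearranging yields $\epsilon = O\big((e(T+3m+1)^2/(2m|\cG|))^m\big)$. (To match the paper's stated $(T+2m+1)$ I will tighten the wire count: the $m$ clean-up labelings can be merged with the equality gates' bookkeeping, or one observes the last label need not be counted, etc.; this is a routine accounting matter.) The only real subtlety — and the step I expect to need the most care — is the claim that a success necessarily forces $m$ \emph{informative} (not merely $m$ distinct) collisions: one must rule out the degenerate possibility that, say, a coordinate's defining relation was already implied by earlier recorded relations together with another coordinate's, which would make fewer than $m$ of them informative. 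The fix is the standard one: if $\cA_{m\text{-}\mdl}$ outputs the correct $\vecx$, then the $m$ affine forms $X_k-x_k$ together span an $m$-dimensional subspace of the affine-form space modulo the constants, so at least $m$ among all collisions witnessed must be informative, and we simply take the lexicographically first $m$ such; I will state and prove this as a small claim inside the proof.
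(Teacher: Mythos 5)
Your proposal is correct and follows essentially the same route as the paper: append labeling and equality gates to guarantee $m$ informative collisions, encode the lexicographically first $m$ of them, decode via \Cref{lem:simulation_without_queries} and Gaussian elimination on the resulting linearly independent affine system, and invoke \Cref{lem: compression} with the $\binom{a}{m}\le(ea/m)^m$ bound. The only discrepancy is the wire count: only the $m$ new labeling gates are charged (equality gates are free in the paper's cost metric), so $C=T+m$ and the number of element wires is $T+2m+1$, not $T+3m+1$ — exactly the ``routine accounting'' fix you anticipated.
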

\begin{proof}
    Let $p=|\cG|$ and $\epsilon$ be the success probability of $\cA_{m\text{-}\mdl}$.
    With a similar modification, we assume that $\cA_{m\text{-}\mdl}$ finds at least $m$ informative collisions with probability at least $\epsilon$ using $C=T+m$ group operation complexity.
    We associate a group element $g^{a_1 x_1 + ... + a_m x_m +b}$ with a polynomial $a_1 X_1 + ... + a_m X_m + b \in \Z_p [X_1,...,X_m].$
    We additionally need the following result from linear algebra.
    \begin{fact}\label{fact:MDL}
        Given $m$ informative collisions, there is a polynomial time algorithm to find the unique assignments $(X_1,...,X_m)=(x_1,...,x_m)$ making the given collisions informative.
    \end{fact}
    The proof can be found in~\Cref{missing_GGM}.
    
    $\cA_{m\text{-}\mdl}$ may be randomized using a random seed $r$.
    We construct encoding and decoding protocols for $\cM=[p]^m$ and a set $R$ of the seed $r$. For input $\vecx \in \Z_p^m$, the protocols are defined as follows.
    \begin{description}
        \item[$\Encode(\vecx,r)$:] 
        It runs $\cA_{m\text{-}\mdl}^\cG(g,g^{\vecx})$ with randomness $r$ and collects the lexicographically first $m$ informative collision gates $c_k$ with input $(i_k,j_k)$ for $k \in [m]$.
        If $m$ informative collisions are found during the execution, 
            it outputs $\vecc=(c_1,...,c_m).$
        Otherwise, 
            it outputs a symbol $\bot$.
        \item[$\Decode(\vecc,r)$:]
        If $\vecc=\bot$, it outputs a random value in $[p]^m$. 
        Otherwise, it parses $\vecc=(c_1,...,c_m)$ and constructs a sub-circuit $\cA'_{m\text{-}\mdl}$ of $\cA_{m\text{-}\mdl}$ by cutting out the gates after the $m$-th informative collision gate (corresponding to $c_m$).
        By~\Cref{lem:simulation_without_queries}, it recovers the polynomial list. 
        Using $\vecc$ and~\Cref{fact:MDL}, it finds and outputs the assignment $\vecz=(z_1,...,z_m)$ of $(X_1,...,X_m)$.
    \end{description}

    It is obvious that if $\cA_{m\text{-}\mdl}$ finds $\vecx=(x_1,...,x_m)$ then the decoder correctly recovers $\vecx$, which happens with probability at least $\epsilon$. We focus on the encoding size below. 
    Let $B=\binom{C+m+1}{2}$ be the upper bound of the number of equality queries.
    The bit-length for describing the $m$ informative collision (or $\bot$) is less than $\log \left(\binom{B}{m} + 1 \right) $, which is bounded by
    \begin{align*}
        \log \binom{B+1}{m} 
        \le
        m \log\left( \frac{e(B+1)}{m} \right) 
        \le 
        m \log\left( \frac{e(T+2m+1)^2}{2m} \right),
    \end{align*}
    where we use $B+1 \le \frac{(C+m+1)^2}{2} \le \frac{(T+2m+1)^2}{2}$.
    By~\Cref{lem: compression}, we have
    \[
        \log \epsilon + m\log |\cG| \le \log |\cC| \le  
        m \log\left( \frac{e(T+2m+1)^2}{2m} \right)
    \]
    which can be rewritten as follows
    \[
        \epsilon =
        O\left ( 
        \left(\frac{e(T+2m+1)^2}{2m|\cG|}\right)^m
        \right),
    \]
    as we desired.\ifnum\llncs=1 \qed \fi
\end{proof}

\subsection{Oracle Problems in the GGM}
This section extends the lower bounds relative to the oracle. We first consider the following problems.
\begin{problem}
    In the gap DL (gap-DL) problem, the adversary is given $(g,g^x)$ as input and is asked to find $x$, having access to the decisional Diffie-Hellman (DDH) oracle: $O_{\ddh}:(g^x,g^y,g^z) \mapsto \delta_{xy,z}$. In the gap computational Diffie-Hellman (gap-CDH) problem, the adversary is given $(g,g^x,g^y)$ and is asked to output $g^{xy}$ with the DDH oracle access.
\end{problem}

\begin{theorem}\label{thm: GGM_GapDL}
    Let $\cG$ be a cyclic group.
    Let $\cA_\gapdl$ be a gap-DL algorithm in the GGM having at most $T$ group operation gates and making $T_{\ddh}$ queries to the DDH oracle, then the following holds:
    \[
        \Pr_{\cA_\gapdl,x}\left[
            \cA_\gapdl^{\cG,O_\ddh}(g,g^x) \rightarrow x
        \right]
        =
        O\left(
            \frac{T^2+T_\ddh}{|\cG|}
        \right).
    \]
\end{theorem}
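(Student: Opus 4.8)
The plan is to adapt the compression proof of \Cref{thm: GGM_DL} by viewing the $\ddh$ oracle as an extra, $x$-dependent information channel that the decoder must either simulate deterministically from the polynomial list or, failing that, exploit to recover $x$ at a cheap additional cost in the encoding length. Write $p=|\cG|$. As in \Cref{thm: GGM_DL}, I first normalize $\cA_\gapdl$ so that, after outputting a candidate $z$, it runs a labeling gate on $z$ and an equality gate on $(g^z,g^x)$; this adds one group operation, so we may assume it performs $C=T+1$ group operations, and whenever $\cA_\gapdl$ is correct the run contains at least one informative equality collision.

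The new ingredient is the treatment of the $T_\ddh$ oracle queries. Since gap-DL has a single hidden variable $X$, every element wire carries a linear polynomial $aX+b\in\Z_p[X]$, and a query $O_\ddh(g^u,g^v,g^w)$ with associated polynomials $P_u,P_v,P_w$ returns $1$ exactly when $P_uP_v-P_w$ vanishes at $x$. Crucially $\deg(P_uP_v-P_w)\le 2$, so this polynomial is either identically zero --- a \emph{predictable} query whose answer is always $1$ --- or has at most two roots. Call a $\ddh$ query \emph{informative} if it returns $1$ while $P_uP_v-P_w\not\equiv 0$; such a query pins $x$ down to one of at most two values. Define the \emph{first informative event} of a run to be the earliest occurrence, in execution order, of either an informative equality collision or an informative $\ddh$ query.

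The encoder $\Encode(x,r)$ runs the normalized $\cA_\gapdl^{\cG,O_\ddh}(g,g^x)$ and outputs: the pair $(i,j)$ if the first informative event is an equality collision on rows $i,j$; the pair $(k,\beta)$, where $k$ indexes the $\ddh$ query and $\beta\in\{0,1\}$ selects $x$ among the at most two roots of the difference polynomial, if the first informative event is an informative $\ddh$ query; and $\bot$ if no informative event occurs. The decoder $\Decode(\cdot,r)$ reconstructs the polynomial list with \Cref{lem:simulation_without_queries} using the empty zero set: this is legitimate because, up to the first informative event, every equality collision is trivial (so the zero set never changes) and every $\ddh$ answer is $1$ if and only if its difference polynomial is identically zero, so the whole prefix of the execution --- including the bit wires fed by the $\ddh$ answers --- is reproducible with no oracle access. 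From $(i,j)$ the decoder recovers $x=-(b_i-b_j)/(a_i-a_j)\bmod p$ as before; from $(k,\beta)$ it simulates up to the $k$-th $\ddh$ query, computes the nonzero degree-$\le 2$ polynomial $P_uP_v-P_w$, finds its roots, and returns the $\beta$-th one. Either way the decoder outputs $x$ whenever $\cA_\gapdl$ succeeds, so the protocol succeeds with probability at least $\epsilon$.

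It remains to count and invoke \Cref{lem: compression}. There are at most $C+2=T+3$ element wires, hence at most $\binom{T+3}{2}$ possible equality-collision indices, and at most $2T_\ddh$ pairs $(k,\beta)$; so the encoding space $\cC$ satisfies $|\cC|\le \binom{T+3}{2}+2T_\ddh+1=O(T^2+T_\ddh)$, and \Cref{lem: compression} gives $\log\epsilon+\log|\cG|\le\log|\cC|$, i.e. $\epsilon=O((T^2+T_\ddh)/|\cG|)$. The step I would be most careful about is the faithfulness of the oracle-free simulation up to the first informative event --- in particular, checking there is no circularity (the first informative event being a $\ddh$ query guarantees no informative equality collision precedes it, so the zero set really is empty there, and all earlier $\ddh$ answers are predictable by definition) and that the degree bound $2$, hence the single-bit cost $\beta$, is precisely what keeps the $\ddh$ contribution \emph{additive} rather than multiplicative in the final bound. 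The same template yields \Cref{thm: GGM_GapCDH} for gap-CDH, appending instead an equality check against the claimed value $g^{xy}$.
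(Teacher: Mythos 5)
Your proposal is correct and takes essentially the same route as the paper's proof sketch: extend the notion of informative collisions to cover DDH queries returning $1$ with a non-identically-zero degree-$\le 2$ difference polynomial, pay one extra bit to disambiguate between the at most two roots, and bound the encoding space by $O(T^2 + T_\ddh)$ before invoking \Cref{lem: compression}. You additionally spell out the decoder's oracle-free simulation argument (that predictable DDH answers are reproducible from the polynomial list alone before the first informative event), a detail the paper leaves implicit, but the underlying idea is the same.
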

\begin{proof}[{\ifnum\llncs=0 Proof sketch\else Sketch\fi}]
    We extend the notion of collisions to include the DDH oracle answers that output $1$. 
    By a similar modification as the previous section, we can assume that the algorithm finds at least one informative collision. If it is an answer from the DDH oracle, then it specifies the equation $(aX+b)(cX+d)=eX+f$ for some $a,b,c,d,e,f$. It has at most two solutions; thus, the encoding includes one more bit to specify the correct solution. The number of collisions is bounded by $\binom{T+3}{2}+T_{\ddh}$.    
    The other parts of the proof are identical.
    \ifnum\llncs=1 \qed \fi
\end{proof}

\begin{theorem}\label{thm: GGM_GapCDH}
    Let $\cG$ be a cyclic group.
    Let $\cA_\gapcdh$ be a gap-CDH algorithm in the GGM having at most $T$ group operation gates and making $T_{\ddh}$ queries to the DDH oracle, then the following holds:
    \[
        \Pr_{\cA_\gapcdh,x}\left[
            \cA_\gapcdh^{\cG,O_\ddh}(g,g^x,g^y) \rightarrow g^{xy}
        \right]
        =
        O\left(
            \frac{T^2+T_\ddh}{|\cG|}
        \right).
    \]
\end{theorem}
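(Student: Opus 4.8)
The plan is to run the proof of \Cref{thm: GGM_GapDL} with $\Z_p[X,Y]$ in place of $\Z_p[X]$, writing $p=|\cG|$ (and taking $\cG$ of prime order $p$, the composite case being analogous via CRT). The inputs $g,g^x,g^y$ correspond to the polynomials $1,X,Y$, every element wire carries a \emph{linear} polynomial, and, as in the gap-DL sketch, a $\ddh$ query $O_{\ddh}(g^U,g^V,g^W)$ returning $1$ is treated as a collision contributing the degree-$\le 2$ equation $P_UP_V-P_W=0$; call it \emph{informative} when $P_UP_V-P_W$ lies outside the span of the current zero set $\cZ$, now inside the at-most-$6$-dimensional space of degree-$\le 2$ polynomials in $X,Y$. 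One may append to $\cA_{\gapcdh}$ the single query $O_{\ddh}(g^x,g^y,g^z)$ on its output $g^z$; when $\cA_{\gapcdh}$ succeeds, $z=xy$, the query returns $1$, and it contributes the polynomial $XY-P_z$, which is never identically zero since its $XY$-coefficient is $1$. The only essential new difficulty over the gap-DL proof is that the secret $(x,y)$ is two-dimensional, so a single informative collision need not determine it; I handle this by splitting on whether the execution reveals anything about $(x,y)$.

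\emph{Case 1: the execution (before the appended query) produces at least one informative collision.} By \Cref{lem: SZinformative} each equality gate is informative with probability at most $1/p$, and, re-running its proof with a degree-$2$ polynomial, each $\ddh$ query is informative with probability at most $2/p$. There are at most $\binom{T+3}{2}$ equality gates and $T_{\ddh}$ $\ddh$ queries, so a union bound gives $\Pr[\text{Case }1]\le \binom{T+3}{2}/p + 2T_{\ddh}/p = O((T^2+T_{\ddh})/|\cG|)$.

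\emph{Case 2: the execution produces no informative collision.} Then $\cZ=\emptyset$ throughout the execution, so every nontrivial equality gate and every nontrivial $\ddh$ query must return $0$ (otherwise it would be informative) and every trivial one returns $1$ independently of $(x,y)$; hence by \Cref{lem:simulation_without_queries} the whole run — in particular the output polynomial $P_z=aX+bY+c$ — is a fixed function of the random coins $r$ alone. Fixing $r$, success forces $(XY-P_z)(x,y)=0$ for this fixed nonzero polynomial of degree $2$, which by Schwartz--Zippel has at most $2p$ zeros, so $\Pr[\text{success}\wedge\text{Case }2]\le 2/|\cG|$. Adding the two cases yields the claimed bound. (Case 2 also admits a compression phrasing in the style of \Cref{thm: GGM_DL}: encode $x$ in $\lceil\log p\rceil$ bits, and let the decoder simulate the coin-determined run to recover $P_z$ and then solve $(x-b)Y=ax+c$ for $y$; \Cref{lem: compression} with $\cM=[p]^2$ again gives $\Pr[\text{success}\wedge\text{Case }2]=O(1/|\cG|)$.)

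I expect the main point to verify is the quadratic version of \Cref{lem: SZinformative}, namely that the adaptive dependence of $P_U,P_V,P_W$ on earlier oracle replies does not spoil the $2/p$ bound; but this is precisely the setting \Cref{lem: SZinformative} is built for, so only a routine re-run of its argument with a degree-$2$ polynomial is needed. The remainder — that the appended query adds no element wire, and that in Case 2 the trivial gates behave predictably so the decoder's simulation is faithful — is straightforward bookkeeping.
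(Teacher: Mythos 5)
Your proof is correct, but it follows a genuinely different route from the paper's. The paper's proof of this theorem, like those of \Cref{thm: GGM_DL} and \Cref{thm: GGM_GapDL}, is a compression argument: after observing that without an informative collision the output polynomial is linear and success requires the degree-$2$ event $XY=P_z$ of probability $O(1/|\cG|)$, it constructs an explicit $\Encode/\Decode$ pair whose message is the index of the first informative collision, one of $x$ or $y$, and a disambiguating bit for the at-most-two roots of the resulting univariate quadratic, and then invokes \Cref{lem: compression}. You instead use the classical Schwartz--Zippel style of GGM proof that the paper is explicitly designed to replace: split on whether an informative collision ever arises, bound $\Pr[\text{Case 1}]$ by a union bound over queries, and handle Case 2 by SZ applied to the output polynomial. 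Your route is more elementary and bypasses the compression lemma entirely. The one place where care is needed — which you correctly flag yourself — is the legitimacy of the per-gate bound under adaptivity, since the polynomial at the $t$-th query depends on earlier oracle answers. The clean resolution, which you invoke implicitly in Case 2 but should also state explicitly for Case 1, is that conditioned on no informative collision having yet occurred the whole run is a fixed function of the coins $r$ by \Cref{lem:simulation_without_queries}; hence the polynomials $\tilde P_1,\dots,\tilde P_k$ at the nontrivial equality gates and $\ddh$ queries are determined by $r$ alone, $\Pr[\text{Case 1}]$ equals $\Pr_{(x,y)}\bigl[(x,y)\in\bigcup_i Z(\tilde P_i)\bigr]$, and the union bound applies to this $r$-fixed family with the SZ estimates $|Z(\tilde P_i)|\le p$ or $2p$. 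Your degree-$2$ extension of \Cref{lem: SZinformative} is indeed routine for the \emph{first} informative collision, where $\cZ=\emptyset$ and the quotient-ring step in its proof is vacuous; note that the stated \Cref{lem: SZinformative} does not obviously handle a nonempty $\cZ$ containing degree-$2$ polynomials, but this never enters your Case 1. One last minor observation: the appended $O_\ddh$ query is unnecessary in your decomposition, since Case 2 handles the output directly by SZ without needing a collision; in the paper's compression version that appended query is load-bearing, since the encoder must always have an informative collision to point at.
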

The proof is almost identical and placed in~\Cref{missing_GGM}.

\begin{problem}
    In the one-more-DL ($\omdl$) problem, the adversary is given access to the challenge oracle $O_\chal$ that outputs $g^{x_i}$ for an unknown $x_i$ and to the DL oracle $O_\dl:g^x \mapsto x$. 
    The number of DL oracle queries $q$ must be less than the number of challenge queries $t$. The adversary aims to find all answers to the challenges. More generally, in the $n$-out-of-$m$-more-DL ($(n,m)\mmdl$) problem, it must hold that $t=q+m$, and the adversary needs to find $q+n$ solutions to the challenges among $q+m$ challenges.
\end{problem}

\begin{theorem}\label{thm: GGM_OMDL}
    Let $\cG$ be a cyclic group.
    Let $\cA_{(m,n)\mmdl}$ be an $n$-out-of-$m$-more-DL algorithm in the GGM having at most $T$ group operation gates and making $q$ queries to the DL oracle, then the following holds:
    \[
        \Pr_{\cA_{(n,m)\mmdl},x}\left[
            \cA_{(n,m)\mmdl}^{\cG,O_\chal,O_\dl}(g) \text{ solves }(n,m)\mmdl
        \right]
        =
        O\left(
            \left(
                \frac{e(T+m+n+1)^2}{|\cG|}
            \right)^n
        \right).
    \]
    In particular, the advantage against $\omdl$ is $O\left(\frac{T^2}{|\cG|}\right).$
\end{theorem}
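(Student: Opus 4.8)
The plan is to reuse the recipe of \Cref{thm:MDL_GGM} after extending the polynomial-list bookkeeping of \Cref{subsec:GGM} to the DL oracle. Attach a formal variable $X_i$ to the $i$-th challenge element $g^{x_i}$ ($1\le i\le t$, $t=q+m$), so every wire carries a linear polynomial in $\Z_p[X_1,\dots,X_t]$. Declare a call $O_\dl(g^z)$ whose wire carries the polynomial $P$ \emph{informative} when $P$ is not an affine combination of $1$ and the current zero set $\cZ$, in which case append $P-z$ to $\cZ$ exactly as for a nontrivial informative collision; otherwise the query is \emph{predictable} and its answer is already forced by $\cZ$. With this convention \Cref{lem:simulation_without_queries} still holds: given the circuit of $\cA$, the gate indices of the informative collisions, and the answers to the informative $O_\dl$-queries, the polynomial list is computable by a Boolean circuit with no oracle. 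I would also normalize $\cA$ so that it makes exactly $q$ DL queries, all informative, and so that at the end it appends, for each claimed solution $(i,v_i)$, a labeling gate for $v_i$ and an equality gate $g^{v_i}\overset{?}{=}g^{x_i}$. On a winning run all $q+n$ appended gates are genuine collisions and their difference polynomials $X_i-v_i$ are linearly independent, while the $q$ DL queries contribute only a $q$-dimensional subspace to $\cZ$; hence $\cZ$ must gain at least $n$ dimensions from informative collisions, i.e.\ at least $n$ equality gates produce informative collisions.

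When $n=m$ --- in particular for \omdl, where $n=m=1$ and $\cA$ must answer all $t=q+1$ challenges --- the compression step closes cleanly. Take $\cM=\Z_p^{t}$; encode $x$ by the gate indices of the (at most, hence here exactly) $n$ informative collisions together with the $q$ informative-$O_\dl$ answers in order (output $\bot$ if fewer than $n$ informative collisions occur, which cannot happen on a winning run). The decoder reconstructs the polynomial list via the generalized \Cref{lem:simulation_without_queries}, faithfully replays $\cA$ to the end --- every equality gate is trivial, predictable, or flagged, and every DL query is predictable or one of the $q$ read from the encoding --- and outputs $\cA$'s answer, which equals $x$ on a winning run. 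The encoding has length at most $\log\binom{B}{n}+q\log p+O(\log n)$ with $B\le\binom{T+m+n+1}{2}$ the number of equality gates, and $\log|\cM|=(q+n)\log p$, so \Cref{lem: compression} gives $\log\epsilon+(q+n)\log p\le\log\binom{B}{n}+q\log p+O(\log n)$, i.e.\ $\epsilon=O\!\left(\bigl(e(T+m+n+1)^2/|\cG|\bigr)^{n}\right)$; taking $n=m=1$ yields the $O(T^2/|\cG|)$ bound for \omdl. The key accounting point is that the $q$ DL answers, which must sit inside the encoding so that the replay is faithful, are exactly cancelled by the $q$ ``extra'' coordinates of $\cM=\Z_p^{q+n}$.

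The case $m>n$ is the sticking point: now $\cA$ answers only $q+n<t$ challenges, so a decoder that replays $\cA$ to the end must also be told the $m-n$ unanswered coordinates and a flag for each of the up-to-$m$ informative collisions, which only yields $\epsilon=O\!\left(\binom{B}{m}/|\cG|^{n}\right)$. To get the stated bound I would instead argue directly via \Cref{lem: SZinformative}: after the same appends, winning forces at least $n$ informative collisions, each equality gate is informative with probability at most $1/p$ (this survives conditioning on earlier gates being informative, the relevant difference polynomial being a fixed nonzero linear polynomial over the remaining randomness), so a union bound over the $\binom{B}{n}$ choices of $n$ equality gates --- together with a short accounting of the runs that win with fewer than $n$ informative collisions, each of which has an output coordinate unforced by $\cZ$ and hence correct with probability $\le 1/p$ --- gives $\epsilon=O\!\left(\binom{B}{n}/|\cG|^{n}\right)$. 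I expect this reconciliation --- identifying which $q+n$ challenges were solved, and preventing the DL oracle from inflating the bound --- to be the main obstacle, and it is what forces the split between the compression route for $n=m$ and the Schwartz--Zippel route for $m>n$.
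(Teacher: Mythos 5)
Your compression argument for $n=m$ (hence for \omdl) matches the paper's proof, including the crucial accounting point that the $q\log|\cG|$ bits spent encoding the DL-oracle answers are cancelled exactly by the $q$ extra coordinates of $\cM=\Z_p^{q+m}$. For $m>n$, however, you abandon compression prematurely. The ingredient you are missing is the paper's \Cref{fact: indep}: given $n+q$ linearly independent linear constraints over $m+q$ variables, one can choose $m-n$ of the variables so that fixing them to arbitrary values leaves the constraints independent. The encoder therefore outputs only the first $n$ informative equality collisions, the $q$ DL answers, and the $m-n$ challenge discrete logs $\vecw$ picked via \Cref{fact: indep}; the decoder substitutes the revealed values and solves the resulting full-rank $(n+q)\times(n+q)$ linear system. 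The $q\log|\cG|$ for the DL answers plus $(m-n)\log|\cG|$ for $\vecw$ cancel against $\log|\cM|=(m+q)\log|\cG|$ up to a deficit of $n\log|\cG|$, matched against $\log\binom{B}{n}$ from the collisions --- no $\binom{B}{m}$ factor appears.

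Your estimate $\epsilon=O\bigl(\binom{B}{m}/|\cG|^n\bigr)$ double-counts. If instead you encode all $k\ge n$ informative equality collisions that actually occur, the constraints already have rank $k+q$, so \Cref{fact: indep} lets you reveal only $m-k$ (not $m-n$) challenge values; the deficit becomes $\log\binom{B}{k}-k\log|\cG|$ rather than $\log\binom{B}{k}-n\log|\cG|$. In the regime where the bound is nontrivial, $\binom{B}{k}/|\cG|^k$ is nonincreasing in $k$ for $k\ge n$, so extra informative collisions can only help the encoder. Consequently the compression route gives the stated bound uniformly in $n\le m$, and the Schwartz--Zippel/Chernoff route you sketch --- which the paper does use for plain MDL in \Cref{app:MDL} but not here --- is not needed and, as you yourself flag, would require delicate additional work to interleave the conditioning with adaptive DL-oracle queries.
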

\begin{proof}
    Suppose that the $t=m+q$ challenges are $g^{x_1},...,g^{x_t}$. We construct an encoding for $\vecx=(x_1,...,x_t)$. We assume that the algorithm is deterministic.
    Regarding informative collisions, we include the DL oracle answers as the collision. If the DL oracle outputs $z$ for input $g^{P}$, we regard $P-z$ as a collision. Since the algorithm finds $n+q$ solutions, it must find $n+q$ informative collisions (including the DL oracle outputs).
    We assume that the algorithm never queries to the DL oracle that the answer induces a trivial collision. This means there are $n$ informative collisions that are not from the DL oracle queries.

    We need the following simple fact from linear algebra.
    \begin{fact}\label{fact: indep}
        Given $a$ linear independent linear equations over $b$ variables for $a<b$. There are $b-a$ variables such that the linear equations are still independent after fixing them to some values.
    \end{fact}
    
    The procedures are as follows.
    \begin{description}
        \item[$\Encode(\vecx)$:] 
        It runs $\cA_{m\mmdl}^{\cG,O_\chal,O_\dl}(g)$ and collects the lexicographically first $t$ informative collisions, which could be the equality gate or the DL oracle answer.
        If $n+q$ informative collisions are found during the execution, 
        it outputs $\vecc=(c_1,...,c_n)$ that denote the informative equality gates and the DL oracle answers $\vecz=(z_1,...,z_q).$
        By~\Cref{fact: indep}, there are $m-n$ $x_i$'s such that revealing them does not hurt the linear independence of the informative collisions. 
        Finally, those $x_i$'s, denoted by $\vecw$ become a part of the encoding.
        Otherwise, 
            it outputs a symbol $\bot$.
        \item[$\Decode(\vecc,\vecz,\vecw)$:]
        It runs $\cA_{m\mmdl}^{\cG,O_\chal,O_\dl}(g)$ to recover $n+q$ informative collisions.
        Given these equations, the decoder can recognize the indices for $\vecw$. It recovers $\vecw,$ and plugs them in the informative collisions. The informative collisions become $n+q$ linear equations over $n+q$ variables, so that it can recover $\vecx$.
    \end{description}
    The length of the encoding is bounded by
    \[
    \log \binom{\binom{T+m+n}{2}}{n} + q\log |\cG|+ (m-n)\log |\cG|+O(1)
    \]
    which must be larger than $(m+q)\log |\cG| +\log \epsilon$ for the success probability $\epsilon$ by~\Cref{lem: compression}. This gives
    \[
        n\log \left(
            \frac{e(T+2m+1)^2}{2n}
        \right) \ge n\log |\cG| + \log \epsilon
    \]
    which implies
    \[
    \epsilon = O\left(
            \left(
                \frac{e(T+m+n+1)^2}{|\cG|}
            \right)^n
        \right),
    \]
    concluding the proof.
    \ifnum\llncs=1 \qed \fi
\end{proof}

\begin{remark}
    Extending the results to high-degree variants like $m$-CDH problems is not trivial. We believe with some algebraic geometry reasoning like Bézout theorem, as in~\cite{AGK20}, the high-degree variants can be proven in essentially the same way.
\end{remark}

\section{Lower Bounds in the Unknown-order GGM}\label{sec: unknownGGM}
We extend the generic group to the unknown-order setting. As the order is unknown, we should consider the distribution of the order. 

Let $\cG$ be a cyclic group of order $N$, where the distribution of $N$ will be specified later.
We assume that $N$ is unknown to the algorithm except for its bit length. The other interface of the generic algorithms is identical to the (known-order) GGM. In particular, the assumption that the group operation only allows to compute $(g^x,g^y) \mapsto g^{x+y}$ is important.\footnote{This was also used in the related works~\cite{DK02,Sut07} in the unknown-order GGM.}
Note that the algorithm cannot extract any information from the element wire containing $\bot$; for example, the equality gate involving $\bot$ always outputs $0$.

\begin{remark}\label{rem:unknown_equiv}
    We found that the known equivalence proof between the generic group models does not extend to the unknown-order group setting. Therefore, we include the random-representation GGM proof at~\Cref{subsec: RRunknown}.
\end{remark}

\paragraph{Polynomial Representations}
As in the known-order GGM, we give the polynomial representations for each group element. However, as the group order is unknown, we choose the polynomials from $\Z[X_1,...,X_t]$ \emph{without modulus} for the formal variables $X_1,...,X_t$ corresponding to the hidden values. In particular, if the algorithm takes no input, the representations could be in just $\Z$ without formal variables.

We extend the notion of informative collisions appropriately. 
We maintain the zero set $\cZ$ and process each collision $(i,j)$ with inputs corresponding to polynomials $P_i,P_j$ (i.e., the equality gate outputting 1) as follows:
\begin{itemize}
    \item If $P_i=P_j$ as a polynomial, then the collision is called \emph{trivial}, and do nothing.
    \item If an equality query finds a nontrivial collision $(i,j)$, then 
    check if $P_i-P_j$ is included in $\Z$-span of $\cZ$; recall the we only checked $\Z_N$-span in the known-order case.
    If it is not true, update $\cZ\leftarrow \cZ \cup \{P_i - P_j\}$. 
    We call the collision $(i,j)$ \emph{informative}, and otherwise \emph{predictable.}
\end{itemize}

\subsection{Order-finding in the Unknown-order GGM}\label{subsec:uGGM}

We consider the following problem.
\begin{problem}
    Let $\cD_{\pprime}^{(n)}$ be a uniform distribution over the set of $n$-bit primes. An order-finding problem over $\cD_\pprime^{(n)}$ in the GGM is defined as follows. First, a random $N$ is sampled from $\cD_\pprime^{(n)}$. The adversary in the unknown-order GGM for the group $\cG_N$ of order $N$ is asked to output $N$. A product-order-finding problem over $\cD_\pprime^{(n)}$ is similarly defined, but the two distinct primes $p,q$ are sampled and $N:=pq.$
\end{problem}

This problem is studied in~\cite{Sut07} in detail. In particular, the generic order-finding algorithm with the $O(\sqrt{N/\log\log N})$ group operation complexity suggested in~\cite[Section 4]{Sut07}, and the lower bound of $\Omega(N^{1/3})$ (for the prime-order case) is proven in the same thesis. We reprove this bound using our method.

\begin{theorem}\label{thm: uGGM-order}
    Let $\cA_\ord$ be an order-finding algorithm over $\cD_\pprime^{(n)}$ in the GGM with the group operation complexity $T$. It holds that
    \[
        \Pr_{\cA_\ord,N} \left[
            \cA_\ord^{\cG_N}(g) \rightarrow N
        \right]
        =O\left(
            \frac{T^3}{2^{n}}
        \right).
    \]
    In particular, any generic order-finding algorithm with a constant success probability must make $\Omega\left(N^{1/3}\right)$ group operations.
\end{theorem}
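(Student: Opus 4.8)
The plan is to compress the secret order $N$ out of a successful order‑finding algorithm and then apply the compression lemma (\Cref{lem: compression}) with $\cM$ the set of $n$‑bit primes. The structural fact driving everything is that an order‑finding algorithm has no group‑element input, so every element wire carries a plain integer (a constant polynomial over $\Z$, with no modulus), and these integers stay small: a labeling gate yields an integer in $[0,2^{n})$, and a group‑operation gate at most doubles the largest magnitude seen so far, so after $T$ gates every wire carries an integer of absolute value at most $2^{\,n+T+O(1)}$. Consequently, whenever an equality gate compares two wires holding distinct integers $P_i\neq P_j$ and outputs $1$, we have $N\mid(P_i-P_j)$ with $P_i-P_j\neq 0$, so $N$ is one of the distinct prime divisors of $|P_i-P_j|$, of which there are at most $O\!\big((n+T)/\log(n+T)\big)$ by the primorial bound $\omega(M)=O(\log M/\log\log M)$. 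Thus the first informative collision, together with a short index selecting the correct prime divisor, pins down $N$.

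Concretely, I first modify $\cA_\ord$ so that success forces an informative collision: after $\cA_\ord$ outputs its guess $z$, compute $g^{z}$ from $g$ by $O(\log z)$ doublings and apply one equality gate to $(g^{z},g^{0})$, where $g^{0}$ comes from a labeling gate on input $0$; if $z=N$ this gate fires, and its collision $z-0=N\neq 0$ is informative unless an informative collision has already occurred. Then $\Encode(N,r)$ runs the modified algorithm with shared randomness $r$, locates the lexicographically first informative collision, and outputs the unordered pair $(i,j)$ of element‑wire indices compared there together with the position of $N$ in the sorted list of distinct prime divisors of $|P_i-P_j|$ (or a symbol $\bot$ if there is no informative collision). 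The decoder $\Decode$, given $(i,j)$ and $r$, re‑simulates $\cA_\ord$'s circuit gate by gate until it reaches the unique equality gate comparing wires $i$ and $j$ (uniqueness by the no‑repeated‑equality‑gate assumption); this needs no group oracle, because before the first informative collision the zero set $\cZ$ is empty, so by definition no nontrivial collision can occur and every equality gate simply outputs $[P_{i'}=P_{j'}]$, which the decoder evaluates from the integers it tracks — this is the unknown‑order instance of \Cref{lem:simulation_without_queries} applied with $\cZ=\emptyset$. The decoder then reads off the nonzero multiple $P_i-P_j$ of $N$, factors it, and returns the prime divisor at the encoded position; this equals $N$ whenever $\cA_\ord$ succeeds, so the protocol recovers $N$ with probability at least the success probability $\epsilon$ of $\cA_\ord$.

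It remains to bound the encoding length. The number of equality gates, hence of candidate pairs $(i,j)$, is at most $\binom{T+O(1)}{2}\leq (T+O(1))^{2}/2$, so the first component costs $2\log T+O(1)$ bits; the second costs $\log\!\big((1+o(1))(n+T)/\log(n+T)\big)$ bits, which in the relevant regime $T=\Omega(N^{1/3})=\Omega(2^{n/3})$ equals $\log T-\log\log T+O(1)$. Since the number of $n$‑bit primes is $(1+o(1))\,2^{n}/(n\ln 2)$, we have $\log|\cM|=n-\log n-O(1)$, and in that regime $\log\log T=\log n-O(1)$; hence \Cref{lem: compression} gives
\[
  n-\log n-O(1)+\log\epsilon \;\leq\; 2\log T+\log T-\log\log T+O(1) \;=\; 3\log T-\log n+O(1),
\]
so $\epsilon=O(T^{3}/2^{n})$, and (after the obvious bootstrap for small $T$) a constant success probability forces $T=\Omega(N^{1/3})$.

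The step I expect to be the main obstacle is proving the decoder's re‑simulation faithful: it must reproduce, from the circuit, the seed, and the single pair $(i,j)$ alone, exactly the element‑wire contents — and the bit wires that steer the circuit — of the real run up to the targeted equality gate. What saves us is precisely that an informative collision is, by definition, the first equality gate that deviates from plain integer‑equality behaviour, so everything strictly before it is oracle‑independent. Two secondary points need care: (i) $\bot$‑valued wires, a minor technicality that we address by a standard well‑formedness assumption on $\cA_\ord$ (e.g.\ restricting labeling gates to inputs $0,1$ and realising every other $g^{a}$ by an $O(\log a)$‑length doubling chain, which never produces $\bot$ and costs only lower‑order factors); and (ii) the divisor count, where the crude bound $\omega(M)\le\log_{2}M$ already gives a factor $O(T)$ in the relevant regime but loses a $\poly(\log N)$ in lower‑order terms, which the primorial bound removes, matching the $\log n$ lost in counting $n$‑bit primes and leaving the clean $O(T^{3}/2^{n})$.
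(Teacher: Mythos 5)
Your proposal is correct and takes essentially the same route as the paper: modify $\cA_\ord$ so that success forces an informative collision, observe that (with no group-element input) every wire carries a plain integer of magnitude at most $2^{T+O(1)}N$, encode the location of the first informative collision together with an index into the prime divisors of $|P_i-P_j|$, and close with \Cref{lem: compression}. Two small deviations from the paper are worth noting. First, the paper bounds the number of relevant primes by counting only the $n$-bit prime divisors of $|P_i-P_j|$, which is at most $\log_{2^{n-1}}(2^{T+O(1)}N)=O(T/\log N)$; you invoke the primorial bound $\omega(M)=O(\log M/\log\log M)$ on all distinct prime divisors, which is weaker for small $T$. Both give $K=O(T/\log N)$ in the operative regime $T=\Omega(2^{n/3})$, but the paper's count is tighter and needs no bootstrap, so you could simplify your length accounting by adopting it. Second, you replace the paper's final labeling gate on the output $z$ with an $O(\log z)$-length doubling chain from $g$; this costs $O(n)$ extra gates (harmless when $T\gtrsim n$) but sidesteps the question of whether a labeling gate with input $z=N\ge N$ returns $\bot$ — a subtlety that the paper's wording leaves implicit. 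Your explicit observation that the decoder is faithful because $\cZ=\emptyset$ before the first informative collision (so every prior equality gate is a plain integer comparison) is the right justification of \Cref{lem:simulation_without_queries} in this setting, though your suggested well-formedness normalization of all labeling gates would cost a $\poly(n)$ blowup if applied globally and is best restricted to the appended final gate, as you actually use it.
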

\begin{proof}
    Let $\epsilon$ be the success probability of $\cA_\ord$. 
    For simplicity, we assume that $\cA_\ord$ is deterministic. 
    
    We consider the integer representations corresponding to the elements of $\cA_\ord$ because of the unknown order and no indeterminate value.
    In this case, an informative collision $(i,j)$ must specify two integers $x_i,x_j$ such that $x_i-x_j$ is a multiple of the order $N$.
    The integer $x$ appearing in this list must satisfy
    \[
        |x| \le 2^T N
    \]
    because a group operation only increases the number twice.
    This gives that the number of $n$-bit primes divisors of $x-y$ for some $x,y$ appearing in the list is bounded by
    \[
        \log_{2^n} (2^TN) = O\left(\frac{T}{\log N}\right).
    \]
    
    We consider the following modification: If $\cA_\ord$ outputs $z$, we let the algorithm make the labeling gate on inputs $z,0$ and apply the equality gate on input $(g^0,g^z)$ to find a collision at the end with probability at least $\epsilon.$

    Now, we construct the following encoding-decoding pair.
    \begin{description}
        \item[$\Encode(N)$:] It runs $\cA_\ord^{\cG_N}(g)$ and computes the equality gate $c$ with input $(i,j)$ that is the lexicographically first informative collision. Let $x_i,x_j$ be the corresponding integer representations. It factorizes $x_i-x_j$ and lets $p_1,...,p_K$ be the $n$-bit prime divisors. It outputs $(c,\ell)$ where $p_\ell = N$ if exists. Otherwise, it outputs a special symbol $c=\bot$.
        \item[$\Decode(c,\ell)$:] If $c=\bot$, it outputs a random sample from $\cD_\pprime^{(n)}$. Otherwise, it recovers $x_i,x_j$, computes and outputs the $\ell$-th prime factor $N'$.
    \end{description}
    The correctness is analogous. The size of the encoding is $\log \binom{T}{2}+\log (K) + O(1)$, and we have $K=O(T/\log N)$. This gives
    \[
        \log \binom{T}{2}+\log (K) + O(1) \ge \log \left(\frac{2^n}{n}\right)+O(1) + \log \epsilon ~\Longrightarrow~ \epsilon = O \left(\frac{T^3}{N}\right)
    \]
    applying~\Cref{lem: compression} and $2^{n-1}\le N\le 2^n$.
    \ifnum\llncs=1 \qed \fi
\end{proof}

We can prove the analogous result for the product of two primes. The proof is essentially identical, except that we need to encode two prime factors using two informative collisions.
\begin{theorem}\label{thm: uGGM-RSA}
    Let $\cA_\ord$ be a product-order-finding algorithm over $\cD_\pprime^{(n)}$ in the GGM with the group operation complexity $T$. It holds that
    \[
        \Pr_{\cA_\ord,p,q} \left[
            \cA_\ord^{\cG_{pq}}(g) \rightarrow pq
        \right]
        =O\left(
            \frac{T^4}{2^{2n}}
        \right).
    \]
    In particular, any generic order-finding algorithm with a constant success probability must make $\Omega\left(N^{1/4}\right)$ group operations for $N=pq.$
\end{theorem}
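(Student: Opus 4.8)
The plan is to mirror the proof of Theorem~\ref{thm: uGGM-order} almost verbatim, upgrading from encoding one prime to encoding two. As before, I would assume $\cA_\ord$ is deterministic, work with integer representations in $\Z$ (no indeterminates, unknown order), and note that after $T$ group operations every integer $x$ appearing in the polynomial list satisfies $|x|\le 2^T N$, where now $N=pq$ with $2^{n-1}\le p,q\le 2^n$, so $N\le 2^{2n}$. An informative collision $(i,j)$ yields integers $x_i,x_j$ with $N \mid x_i-x_j$ and $x_i\neq x_j$, hence $0<|x_i-x_j|\le 2^{T+1}N$. The number of $n$-bit prime divisors of such a difference is at most $\log_{2^n}(2^{T+1}N)=O\!\big((T+2n)/n\big)=O(T/n)$ (absorbing constants; one may also write it as $O(T/\log N)$ up to constants since $\log N\asymp n$).

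The key new point: to pin down $N=pq$ it suffices to recover \emph{both} $p$ and $q$, and each can be read off from a prime-factor index of an informative-collision difference — but I must make sure there are \emph{two} informative collisions to work with. I would append the same forced-output gadget as in Theorem~\ref{thm: uGGM-order}: if $\cA_\ord$ outputs $z$ (which equals $pq$ whenever it succeeds), let it make labeling gates on $z$ and on $2z$ (or on $z$ and $0$, then $z$ and $z$ via a second labeling — any pair whose representations differ by a nonzero multiple of $N$ and are not forced to coincide as integers), and apply equality gates so that, with probability at least $\epsilon$, the algorithm registers at least \emph{two} informative collisions whose differences are nonzero multiples of $N$. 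In fact a single difference $x_i-x_j = z = pq$ already has both $p$ and $q$ among its $n$-bit prime divisors, so it is cleanest to encode \emph{one} informative collision $c$ together with \emph{two} indices $\ell_1,\ell_2$ into its $n$-bit prime divisors pointing to $p$ and $q$; the decoder recovers $x_i,x_j$ by Lemma~\ref{lem:simulation_without_queries} without any element gates, refactors $x_i-x_j$, and outputs $p_{\ell_1}\cdot p_{\ell_2}$.

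Then the encoding length is
\[
\log\binom{T}{2} + 2\log K + O(1), \qquad K = O(T/\log N),
\]
so $\log\binom{T}{2} + 2\log K + O(1) = 2\log T + 2\log T + O(1) = 4\log T + O(1)$ up to lower-order terms, and since the message set is $\cM=\{$product of two distinct $n$-bit primes$\}$ with $\log|\cM| = 2n - O(\log n)$, Lemma~\ref{lem: compression} gives
\[
4\log T + O(\log\log N) \ \ge\ 2n + \log\epsilon - O(\log n),
\]
hence $\epsilon = O(T^4/2^{2n}) = O(T^4/N)$, and a constant success probability forces $T=\Omega(N^{1/4})$.

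The main obstacle I anticipate is the bookkeeping around ensuring the collision difference is genuinely nonzero as an \emph{integer} (not merely zero mod $N$, which would make it trivial and carry no prime-index information) while still being guaranteed to appear whenever $\cA_\ord$ succeeds; this is exactly what the forced labeling-of-$z$-and-$0$-then-equality-of-$g^z$-with-$g^0$ gadget handles, since $z=pq\neq 0$ in $\Z$ and the equality gate fires because $pq\equiv 0\pmod N$. A secondary nuisance is confirming that $|\cM|$, the count of products of two distinct $n$-bit primes, is $2^{2n}/\poly(n)$ so that $\log|\cM| = 2n - O(\log n)$; this follows from the prime number theorem and is subsumed in the $O(\cdot)$.
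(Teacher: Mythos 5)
Your proposal is correct and fills in the paper's one-sentence sketch with the detail that actually matters for the stated exponent. The paper says to ``encode two prime factors using two informative collisions,'' which is ambiguous; you observe (correctly) that a \emph{single} informative collision already suffices, since its nonzero difference $x_i-x_j$ is a multiple of $N=pq$ and hence divisible by both $p$ and $q$. Encoding one collision gate together with two indices $\ell_1,\ell_2\in[K]$ into its $n$-bit prime divisors costs about $\log\binom{T}{2}+2\log K\approx 4\log T-2\log n$ bits, which against $\log|\cM|=2n-2\log n+O(1)$ gives exactly $\epsilon=O(T^4/2^{2n})$ once the $\log n$ terms cancel. Had one instead encoded two separate collision gates with one prime index each --- the literal reading of ``two informative collisions'' --- the cost would be $\approx 6\log T-2\log n$ bits, yielding only $\epsilon=O(T^6/2^{2n})$, i.e.\ $T=\Omega(N^{1/6})$, weaker than stated; so your single-collision, two-index encoding is the right formulation and arguably sharpens the paper's sketch. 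Two minor bookkeeping nits: keep $2\log K=2\log T-2\log n+O(1)$ with its negative sign (your displayed ``$+O(\log\log N)$'' on the left-hand side has the wrong sign, though it happens not to matter since it cancels against the $-2\log n$ in $\log|\cM|$), and $K=O\bigl((T+n)/n\bigr)$ rather than $O(T/n)$, to cover $T<n$ where the $\log N$ contribution to the bit length of $x_i-x_j$ dominates. Neither affects the conclusion.
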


\subsection{Root Extraction and Repeated Squaring Problems}
We prove similar lower bounds for the (strong) root extraction and the repeated squaring in the unknown-order GGM.

\begin{theorem}\label{thm: uGGM_rootext}
    Let $\cA$ be an algorithm in the GGM with the group operation complexity $T$. 
    Suppose that $N$ is sampled from $\cD_\pprime^{(n)}$. 
    It holds that
    \[
        \Pr_{\cA,N,x} \left[
            g^{ey}=g^x :\cA^{\cG_{N}}(g,g^x) \rightarrow (e,g^y)
        \right]
        =O\left(
            \frac{T^3}{2^{n}}
        \right).
    \]
\end{theorem}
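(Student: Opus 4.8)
The plan is to mimic the proof of \Cref{thm: uGGM-order} almost verbatim, now carrying one formal variable $X$ for the input $g^x$: every group element held by the algorithm is represented by a polynomial $aX+b\in\Z[X]$ whose coefficients at most double per group operation, so $|a|,|b|\le 2^{T}$ after $T$ operations. First I would modify $\cA$ so that a winning run is guaranteed to contain an informative collision: after $\cA^{\cG_N}(g,g^x)$ outputs $(e,g^y)$ (with $e\ge 2$, as root extraction requires), append a repeated-squaring computation of $g^{ey}$ from $g^y$ followed by an equality gate on $(g^{ey},g^x)$; when $\cA$ wins this gate fires, and if no informative collision occurred earlier the zero set $\cZ$ is still empty, so this nontrivial collision is itself informative. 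This costs $O(\log e)$ extra group operations; writing $T'=T+O(\log e)$ and treating $\log e=O(T)$ (i.e.\ counting the output length in the complexity), $T'=O(T)$.

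Next I would apply \Cref{lem: compression} with message set $\cM=\Z_N$ (the possible values of $x$, so $|\cM|=N\ge 2^{n-1}$) and shared randomness $R$ equal to the random prime $N$ (take $\cA$ deterministic, or also place its coins in $R$). $\Encode$ runs the modified $\cA$ and outputs the index of the equality gate carrying the first informative collision — one of at most $\binom{T'+2}{2}+1$ values, so the encoding length is $m\le 2\log(T'+2)+O(1)$. Knowing $N$, $\Decode$ replays the modified circuit with $\cZ=\emptyset$ up to that gate; this is legitimate because every earlier collision is trivial (a nontrivial one would already be informative, contradicting minimality), so by the argument of \Cref{lem:simulation_without_queries} it reconstructs, with no oracle call, the two polynomials $P_i,P_j$ at that gate and sets $cX+d:=P_i-P_j\neq 0$. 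Informativeness gives $cx+d\equiv 0\pmod N$, so if $\gcd(c,N)=1$ the decoder outputs $x\equiv -d\,c^{-1}\bmod N$, and otherwise it fails.

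The decoder is correct whenever $\cA$ wins and we avoid the \emph{bad case} $N\mid\gcd(c,d)$, so the heart of the matter — and the step I expect to be the main obstacle — is the bound $\Pr[\text{bad case}]=O(T^3/2^n)$. The key point is that the prefix of the modified circuit up to the first informative collision is deterministic and independent of $N$ and $x$: trivial collisions are recognizable from the polynomials alone, and a nontrivial collision in this prefix would be informative. Hence $cX+d$ belongs to a fixed family $\{c_kX+d_k\}_k$ of at most $\binom{T'+2}{2}$ nonzero polynomials with $|c_k|,|d_k|\le 2^{T'+1}$, so $0<|\gcd(c_k,d_k)|\le 2^{T'+1}$; for each $k$ the random $n$-bit prime $N$ divides $\gcd(c_k,d_k)$ with probability $O(T'/2^n)$ (at most $(T'+1)/(n-1)$ such prime divisors among $\Theta(2^n/n)$ $n$-bit primes), and a union bound yields $\Pr[\text{bad case}]=O(T'^3/2^n)=O(T^3/2^n)$. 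The decoder therefore succeeds with probability at least $\epsilon-O(T^3/2^n)$ over $(N,x)$, and \Cref{lem: compression} gives
\[
    2\log(T'+2)+O(1)\;\ge\;(n-1)+\log\big(\epsilon-O(T^3/2^n)\big),
\]
whence $\epsilon-O(T^3/2^n)=O(T'^2/2^n)$ and $\epsilon=O(T^3/2^n)$. Note that, unlike in \Cref{thm: uGGM-order}, the encoding needs no auxiliary ``index of a prime factor'': the informative collision pins down $x$ exactly via a linear congruence mod $N$ rather than merely producing a multiple of $N$; the cubic dependence on $T$ instead enters through the bad-case union bound.
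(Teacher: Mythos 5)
Your proof is correct and takes a genuinely different route from the paper's. The paper's sketch splits into two cases: when a winning run contains \emph{no} informative collision, a direct Schwartz--Zippel-type calculation on the fixed polynomial $aX+b$ representing $g^y$ and the congruence $(ea-1)x+eb\equiv 0\pmod N$ bounds the probability by $O(Tn/2^n)+1/N$; and when an informative collision does occur, the paper reuses the order-finding encoding of \Cref{thm: uGGM-order} to compress $N$, with $x$ placed in the shared randomness $R$ --- the encoding is a gate index together with the index of $N$ among the $n$-bit prime factors of the integer $cx+d$, which is $3\log T+O(1)$ bits and is the source of the $T^3$. You instead \emph{force} an informative collision by grafting the $g^{ey}$ re-computation and a final equality gate onto $\cA$, and you swap the roles of $x$ and $N$ in the compression: $N$ goes into $R$, the message is $x\in\Z_N$, and the encoding is only the gate index ($2\log T+O(1)$ bits). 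Your $T^3$ enters from a different source, namely the union bound, over the fixed family of candidate collision polynomials in the deterministic prefix, of the bad event $N\mid\gcd(c_k,d_k)$. Both routes are valid and give the same asymptotic bound; yours sidesteps both the prime-factor index and the Schwartz--Zippel case analysis, at the cost of the explicit assumption $\log e=O(T)$ (the paper's treatment of the branch $N\mid ea-1$ harbors an analogous implicit dependence on the size of $e$ when $b=0$, so you lose nothing relative to the paper here). One technical point worth stating explicitly in a full write-up: since $\cM=\Z_N$ depends on the shared randomness $N$, one should apply \Cref{lem: compression} conditionally on each fixed $N$ and then average over $N\gets\cD_\pprime^{(n)}$, which yields exactly the displayed inequality with $|\cM|\ge 2^{n-1}$.
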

\begin{proof}[{\ifnum\llncs=0 Proof sketch\else Sketch\fi}]
    If there is an informative collision when running $\cA$, we can apply the same encoding as in the previous section. We show that if the algorithm finds the root $g^y,$ then it finds an informative collision. 
    
    Suppose that there is no informative collision during the execution for given input $(g,g^x)$. Then, the polynomial corresponding to $g^y$ must be $Y=aX+b \in \Z[X].$ 
    The correctness implies that $eax+eb=x \bmod N.$ 
    To do so, either $N|ea-1, N|eb$ or $x=eb/(ea-1) \bmod N$ must hold.
    The first case implies that $N|b$ (otherwise $ea-1$ is not divided by $N$), and
    since $|b|| \le 2^{T} N,$ this event only happens with probability at most $O(Tn/2^n).$
    The second case holds with probability $1/N$. 
    In other words, except this probability, the algorithm finds an informative collision.
    \ifnum\llncs=1 \qed \fi
\end{proof}

\begin{theorem}\label{thm: uGGM_repeatsq}
    Let $\cA$ be an algorithm in the GGM with the group operation complexity $T$. 
    Suppose that $N$ is sampled from $\cD_\pprime^{(n)}$. 
    Let $t >T$ be a positive integer.
    It holds that
    \[
        \Pr_{\cA,N,x} \left[
            \cA^{\cG_{N}}(g)\rightarrow g^{2^t}
        \right]
        =O\left(
            \frac{(T+t)^3}{2^{n}}
        \right).
    \]
\end{theorem}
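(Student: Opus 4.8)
The plan is to re-run the argument behind \Cref{thm: uGGM-order} and add one extra observation that uses the hypothesis $t>T$. Assume without loss of generality that $\cA$ is deterministic. Since $\cA$ receives only $g$ as a group input and there are no hidden variables, during an execution of $\cA^{\cG_N}(g)$ every element wire carries a concrete integer, and any wire produced after $k\le T$ group operations carries an integer of absolute value at most $2^{k}\le 2^{T}$, because a group operation at most doubles the integer and we start from $g=g^{1}$. I split the success probability according to whether the execution produces an informative collision.

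\emph{Case 1 (an informative collision occurs).} This part repeats the argument of \Cref{thm: uGGM-order}. Let $(i,j)$ be the lexicographically first informative collision, with integer labels $x_i,x_j$; by definition $x_i\neq x_j$ but $N\mid x_i-x_j$, so $x_i-x_j$ is a nonzero integer with $|x_i-x_j|\le 2^{T+1}$ divisible by the $n$-bit prime $N$, hence $N$ is one of at most $(T+1)/(n-1)=O(T/n)$ distinct $n$-bit prime divisors of $x_i-x_j$. I would encode $N$ by the index of this equality gate (one of at most $\binom{1+T}{2}$, plus a symbol for ``no informative collision'') together with the index of the prime factor of $x_i-x_j$ equal to $N$. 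The decoder knows the circuit but not $N$; it re-simulates $\cA$ up to that gate, which it can do without the element gates because before the first informative collision every equality gate outputs $1$ iff its two integer labels coincide (a nontrivial predictable collision would have to lie in the span of the empty set), exactly as in \Cref{lem:simulation_without_queries}; it then reads off $x_i,x_j$, factors $x_i-x_j$, and outputs the indicated prime factor. By \Cref{lem: compression} with $\cM$ the set of $n$-bit primes ($|\cM|=\Theta(2^n/n)$), this case contributes at most $O(T^2\cdot T/n)/\Theta(2^n/n)=O(T^3/2^n)$.

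\emph{Case 2 (no informative collision, yet $\cA$ is correct).} When no informative collision ever occurs, each equality gate outputs $1$ precisely when its two input integers are equal, which is independent of $N$; hence the whole execution, and in particular the integer $\hat x$ on the output wire, is a fixed quantity determined by the circuit alone, with $|\hat x|\le 2^{T}$. Because $t>T$ we have $|\hat x|\le 2^{T}<2^{t}$, so $\hat x-2^{t}$ is a nonzero integer of absolute value at most $2^{t+1}$. Correctness of $\cA$ forces $g^{\hat x}=g^{2^{t}}$, i.e.\ $N\mid \hat x-2^{t}$; since $\hat x-2^{t}$ has at most $(t+1)/(n-1)=O(t/n)$ distinct $n$-bit prime divisors and $N$ is uniform over the $\Theta(2^n/n)$ $n$-bit primes, this happens with probability $O(t/2^n)$. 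Adding the two cases gives $O(T^3/2^n)+O(t/2^n)=O((T+t)^3/2^n)$.

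I expect Case 2 to be the only genuinely new point: one must notice that the absence of informative collisions makes the run, hence the output label, completely $N$-oblivious, so that the target $g^{2^{t}}$ can only be hit by the coincidence $N\mid(\hat x-2^{t})$ of a fixed, nonzero, not-too-large integer, and this is exactly where $2^{T}<2^{t}$ enters. Everything else — the count of $n$-bit prime divisors of the collision difference, and the fact that the decoder can re-run $\cA$ without the oracle up to the first informative collision — is identical to \Cref{thm: uGGM-order}, which I would invoke directly; as in the earlier unknown-order theorems, nontrivial labeling gates and $\bot$-wires do not help the algorithm here.
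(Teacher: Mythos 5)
Your argument is correct (modulo one shared caveat noted below) and takes a genuinely different route from the paper, and it actually yields a sharper bound. The paper's proof sketch modifies $\cA$ by appending $t$ group operations to compute $g^{2^t}$ by repeated squaring, plus one final equality gate comparing it to $\cA$'s output; since the output's integer label has magnitude $<2^T<2^t$ this appended gate must be an informative collision, and the order-finding encoding argument is then re-run on the modified circuit with $T+t$ gates, which is where the $(T+t)^3$ factor comes from. You instead leave $\cA$ untouched and split on whether an informative collision already occurs inside $\cA$'s own run: Case~1 applies the encoding argument to the original $T$-gate circuit, giving $O(T^3/2^n)$; Case~2 observes that with no informative collision the equality gates' outputs are determined by integer equality alone, so the output label $\hat{x}$ is $N$-oblivious and, by $|\hat{x}|\le 2^T<2^t$, nonzero after subtracting $2^t$, so $N\mid \hat{x}-2^t$ is a coincidence of probability $O(t/2^n)$. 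Your resulting bound $O((T^3+t)/2^n)$ is strictly tighter than the paper's $O((T+t)^3/2^n)$ whenever $t\gg T$, and the two summands cleanly separate the ``information-leak'' term from the ``blind-guess'' term, which is arguably more transparent than absorbing everything into the enlarged circuit.

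One caveat applies to both your write-up and the paper's sketch. Your bound $|\hat{x}|\le 2^T$ (and the paper's parallel claim that the output label is $<2^T$) tacitly assumes that every element wire traces back only to $g=g^1$ through group-operation gates, i.e.\ that $\cA$ never uses a labeling gate with nonzero input. With labeling gates admitted (as they are in the model, and as the paper's own Theorem on order finding explicitly accounts for by writing $|x|\le 2^T N$), the labels can reach magnitude $2^T N$, and then $\hat{x}=2^t$ is attainable for $t\le T+n-1$ by labeling $g^{2^{n-1}}$ and squaring; your dismissal that ``nontrivial labeling gates\ldots do not help the algorithm here'' is therefore exactly the point that would need an explicit justification or a restriction on the class of algorithms. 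Since the paper's proof makes the same unstated restriction, your proposal stands on equal footing with it, but you should state the assumption rather than wave it off. A second, smaller, shared subtlety is that equality gates involving $\bot$ wires always output $0$, so the ``$N$-oblivious simulation'' in both Case~1's decoder and Case~2 can diverge when a labeled wire holds $\bot$; this again disappears once labeling on nonzero inputs is excluded.
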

\begin{proof}[{\ifnum\llncs=0 Proof sketch\else Sketch\fi}]
    If the algorithm $\cA$ outputs $h$, we can compute $g^{2^t}$ using $t$ group operations and check if $h=g^{2^t}$. Also, the integer representation corresponding to $h$ must be smaller than $2^T$, thus it should be the informative collision. Using this, we can construct an encoding algorithm for $N$ as in the previous section, proving the desired result.
    \ifnum\llncs=1 \qed \fi
\end{proof}

\section{Lower Bounds in the Quantum GGM}
\subsection{Quantum Generic Group Models}

\subsubsection{Basic Quantum Generic Group Model}
We define the quantum generic group model (QGGM) extending the model in~\Cref{subsec:GGM}, following the formalization in~\cite{HYY23}. Let $\cG$ be a cyclic group of order $N$ with a generator $g$. A quantum generic group algorithm $\cA$ works similarly to a generic group algorithm but is defined on the registers holding qubits or superpositions of elements.

We first consider a rudimentary model, denoted by \emph{the basic QGGM}, where group operations only work on two a priori fixed registers. As we look for the logarithmic lower bounds, we do not allow the quantum labeling gate and give a quantum inversion gate as a unit.
An algorithm in the basic QGGM is defined as follows.
\begin{itemize}
    \item There are two registers: qubit and element registers holding superpositions of some information. Qubit registers take a set of bits $\{0,1\}$ as the computational basis. In contrast, element registers take a set of elements $x\in \cG \cup \{\bot\}$ as the computational basis, which is denoted by $g^x$; sometimes $\bot$ is also written in this form though there is no corresponding $x$. The algorithm arbitrarily appends a new element register initialized by $\ket{g}$.
    \item There are (arbitrary) quantum gates that map qubits to qubits, which cannot take element registers as input.
    \item There are two special gates called \emph{element gates} that can access the element wires as follows:
    \begin{description}
        \item[Group Operation Gate.] It takes two element registers $\vecX,\vecY$ and a single qubit register $\vecB$ and applies the unitary $U_{\Gop}$ that works on the computational basis as follows:
        \begin{align}\label{eq: QGGM gop}
            U_{\Gop}:\begin{cases}
                \ket{b}_{\vecB} \ket{g^x,g^y}_{\vecX,\vecY} 
                \mapsto\ket{b}_{\vecB} \ket{g^{x+by},g^{ y}}_{\vecX,\vecY} 
                &\text{ if }g^x,g^y \neq \bot,\\
                \ket{b}_{\vecB} \ket{g^x,g^y}_{\vecX,\vecY} 
                \mapsto\ket{b}_{\vecB} \ket{g^x,g^y}_{\vecX,\vecY} 
                &\text{ otherwise.}
            \end{cases}
        \end{align}
        \item[Inverse-Operation Gate.] It takes two element registers $\vecX,\vecY$ and a single qubit register $\vecB$ and applies the unitary $U_{\Ginv}$ that works on the computational basis as follows:
        \begin{align*}\label{eq: QGGM ginv}
            U_{\Ginv}:\begin{cases}
                \ket{b}_{\vecB} \ket{g^x,g^y}_{\vecX,\vecY} 
                \mapsto\ket{b}_{\vecB} \ket{g^{x-by},g^{ y}}_{\vecX,\vecY} 
                &\text{ if }g^x,g^y \neq \bot,\\
                \ket{b}_{\vecB} \ket{g^x,g^y}_{\vecX,\vecY} 
                \mapsto\ket{b}_{\vecB} \ket{g^x,g^y}_{\vecX,\vecY} 
                &\text{ otherwise.}
            \end{cases}
        \end{align*}
        \item[Equality Gate.] It takes two element registers $\vecX,\vecY$ and a single qubit register $\vecB$. It then applies the unitary operation $U_{\Geq}$ that works on the computational basis as follows:
        \begin{align*}
            U_{\Geq}:\begin{cases}
                \ket{b}_{\vecB} \ket{g^x,g^y}_{\vecX,\vecY} 
                \mapsto\ket{b\oplus \delta_{x,y}}_{\vecB} \ket{g^x,g^y}_{\vecX,\vecY} 
                &\text{ if }g^x,g^y \neq \bot,\\
                \ket{b}_{\vecB} \ket{g^x,g^y}_{\vecX,\vecY} 
                \mapsto\ket{b}_{\vecB} \ket{g^x,g^y}_{\vecX,\vecY} 
                &\text{ otherwise,}
            \end{cases}
        \end{align*}
        where $\delta_{x,y}=1$ if $x=y$ and $0$ otherwise.
    \end{description}
    \item We allow the intermediate measurements for registers. When we apply the measurements on all of $\vecB,\vecX,\vecY$ right before applying the element gates, we call them \emph{classical}. An element gate that is not classical is called quantum. For simplicity, we allow the classical labeling gate that does not much affect the result.
    \begin{description}
        \item[Classical Labeling Gate.] It takes $\lceil \log_2 N\rceil$ qubit registers, measures it, and interprets them as an element in $x \in \Z_N$. It appends a new element register holding $\ket{g^x}$. If there is no corresponding element $x \in \Z_N$ to the input wires, it outputs an element wire containing $\ket{\bot}$.
    \end{description}
\end{itemize}
A QGGM algorithm denotes an algorithm $\cA$ in this model and is occasionally written by $\cA^{\ket{\cG}}$. 
As in the classical GGM, we assume the element gates have some order to be applied sequentially with the relevant qubit gates.

The formal complexity measure of the generic algorithms is described in the next subsection.
Roughly, we count the number of \emph{quantum} element gates \emph{including the equality gates} as the cost metric. The main reason is that while the equality check between two classical data is essentially free, e.g., using hash tables, the equality check between two element registers that store superpositions is not freely done. We also note that Shor's algorithm does not use any equality query.

\subsubsection{QGGM with Coherent Indices}
Now, we consider more general operations that can coherently access the indices of registers. Let $t,w$ be positive integers. We define the $(t,w)$-QGGM similarly to the basic QGGM, but it also has \emph{qudit} registers of dimension $t$ and $w$, and the element gates are defined as follows.
\begin{itemize}
    \item There are two special gates called \emph{element gates} that can access the element wires as follows. The unspecified registers are unchanged by the operations.
    \begin{description}
        \item[Group Operation Gate.] It takes three registers $\vecB,\vecT,\vecW$ and $t+w$ element registers $\vecX_1,...,\vecX_t,\vecY_1,...,\vecY_w$ and applies the unitary $U_{\Gop}^{(t,w)}$ that works on the computational basis as follows:
        \begin{align*}
            U_{\Gop}^{(t,w)}:
                \ket{b,i,j}_{\vecB\vecT\vecW} \ket{g^{x_i},g^{y_j}}_{\vecX_i,\vecY_j} 
                \mapsto\ket{b,i,j}_{\vecB\vecT\vecW} \ket{g^{x_i+by_j},g^{y_j}}_{\vecX_i,\vecY_j}
        \end{align*}
        for $i\in[t],j\in[w],g^{x_i},g^{y_j} \neq \bot$ and $\vecX_i\neq\vecY_j$, otherwise do nothing.
        \item[Inverse-Operation Gate.] It takes three registers $\vecB,\vecT,\vecW$ and $t+w$ registers $\vecX_1,...,\vecX_t,\vecY_1,...,\vecY_w$ and applies the unitary $U_{\Ginv}^{(t,w)}$ that works on the computational basis as follows:
        \begin{align*}
            U_{\Ginv}^{(t,w)}:
                \ket{b,i,j}_{\vecB\vecT\vecW} \ket{g^{x_i},g^{y_j}}_{\vecX_i,\vecY_j} 
                \mapsto\ket{b,i,j}_{\vecB\vecT\vecW} \ket{g^{x_i-by_j},g^{y_j}}_{\vecX_i,\vecY_j}
        \end{align*}
        for $i\in[t],j\in[w],g^{x_i},g^{y_j} \neq \bot$ and $\vecX_i\neq\vecY_j$, otherwise do nothing.
        \item[Equality Gate.] It takes three registers $\vecB,\vecT,\vecW$ and $t+w$ element registers $\vecX_1,...,\vecX_t,\vecY_1,...,\vecY_w$ and applies the unitary $U_{\Geq}^{(t,w)}$ that works on the computational basis as follows:
        \begin{align*}
            U_{\Geq}^{(t,w)}:
                \ket{b,i,j}_{\vecB\vecT\vecW} \ket{g^{x_i},g^{y_j}}_{\vecX_i,\vecY_j} 
                \mapsto\ket{b\oplus\delta_{x_i y_j},i,j}_{\vecB\vecT\vecW} \ket{g^{x_i},g^{y_j}}_{\vecX_i,\vecY_j}
        \end{align*}
        for $i\in[t],j\in[w],g^{x_i},g^{y_j} \neq \bot$ and $\vecX_i\neq\vecY_j$, and do nothing other cases,
        where $\delta_{x,y}=1$ if $x=y$ and $0$ otherwise.
    \end{description}
\end{itemize}
Note that the $(1,1)$-QGGM is identical to the basic QGGM. We remark that allowing coherent access to indices is relevant to practice. Coherent access to indices means the corresponding unitary operation should be large and implemented differently from the above gates. 
Furthermore, setting $t>1$ implies that the quantum storage should store $t$ group elements, requiring a large quantum memory.
Allowing $w>1$ was studied in~\cite{Gid19}, and the estimation in~\cite{GE21} mainly used $t=1$ and $w=5$.

When we say \emph{the QGGM}, the choice of $(t,w)$ is unimportant in that context.
\begin{remark}
    We did not explicitly state that the element registers are different. This potentially allows the group operations between the registers $\vecX_i,\vecX_j$.
\end{remark}

\subsubsection{Quantum Generic Group Algorithm with Classical Preprocessing}
This paper considers the generic algorithms for the discrete logarithm that may perform classical generic computation before running the quantum parts. Formally, a generic $(C,Q)$-algorithm $\cA$ in the QGGM decomposes into two generic algorithms $\cA_c,\cA_q$ as follows.
\begin{enumerate}
    \item Given the problem instance, $\cA_c$ consists of at most $C$ classical group operation gates and arbitrarily many classical equality gates. It may have arbitrarily many qubit gates. At the end, it gives all registers to $\cA_q$.
    \item Given the registers from $\cA_c$ as input, it applies at most $Q$ quantum element gates along with arbitrarily many qubit gates. It measures the output registers and returns the measurement result as the outcome.
\end{enumerate}

The following lemma shows the classical equality gates can be safely removed. The proof can be found in~\Cref{missing_QGGM}.
\begin{lemma}\label{lem: classical_eq_remove}
    Let $\cG$ be a cyclic group of order $N$, and $p$ the \emph{smallest} prime divisor of $N$.
    For any $(C,0)$-algorithm $\cA_c$ in the (arbitrary) QGGM for $\cG$, there is another $(C,0)$-algorithm $\cA_c'$ without equality gates such that
    \[
        \Pr_{\vecx \gets [N]^m}\left[ 
            \cA_c \ket{0^n,g,g^{x_1},...,g^{x_m}} 
            =\cA_c' \ket{0^n,g,g^{x_1},...,g^{x_m}}
        \right] \ge 1 - \frac{(C+m+1)^2}{2p}.
    \]
    In particular, for generic $(C,Q)$-algorithms $\cA=(\cA_c,\cA_q)$ and $\cA'=(\cA_c',\cA_q)$ for $\cA_c'$ defined above, the outputs of two algorithms are identical with probability at least $1-\frac{(C+m+1)^2}{2p}.$
\end{lemma}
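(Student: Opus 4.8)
The plan is to let $\cA_c'$ run $\cA_c$ verbatim except that every equality gate is replaced by a qubit-side operation writing the \emph{predicted} equality bit into $\vecB$. Alongside the element registers, $\cA_c'$ maintains the polynomial list of \Cref{subsec:GGM}: the register for $g$ and every appended register carry the constant $1$, the register for the $k$-th input $g^{x_k}$ carries $X_k$, and a group-operation/inverse gate updates the involved register to $P_i\pm P_j$. Since all element gates of $\cA_c$ are classical, the indices and the control bit of each gate are measured beforehand, so (as in \Cref{lem:simulation_without_queries}) this list is maintained by a Boolean circuit using no element gates. When $\cA_c$ would apply an equality gate comparing the registers carrying $P$ and $Q$, $\cA_c'$ instead XORs into $\vecB$ the bit that equals $1$ iff $P=Q$ as polynomials over $\Z_N$. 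Thus $\cA_c'$ is a $(C,0)$-algorithm with the same $\le C$ group-operation/inverse gates and no equality gates.

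Next I would couple the executions of $\cA_c$ and $\cA_c'$ on the same input $\ket{0^n,g,g^{x_1},\dots,g^{x_m}}$, coupling all internal measurement outcomes. Equality gates never modify element registers, so the element contents and the polynomial list evolve identically in the two runs as long as every equality gate of $\cA_c$ seen so far returned its predicted value; on that event the two runs stay in the same global state throughout, so in particular $\cA_c$ and $\cA_c'$ output identical states. Hence it suffices to bound the probability that some equality gate of $\cA_c$ is \emph{mispredicted}, which happens only when $P\neq Q$ as polynomials but $g^{P(\vecx)}=g^{Q(\vecx)}$, i.e.\ $(P-Q)(\vecx)\equiv 0\pmod N$ — a nontrivial collision in the terminology of \Cref{subsec:GGM}.

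For the bound I would use two ingredients. (i) The run of $\cA_c'$ carries no information about $\vecx$ into its qubit/index registers or its polynomial list: $\vecx$ sits only in the element registers, which $\cA_c'$ touches only via group-operation/inverse gates — these do not write into $\vecB$ — and whose intermediate measurement outcomes are discarded; the only bits $\cA_c'$ writes into qubit registers are the predicted equality bits, which depend on the polynomial list alone. Hence $\cA_c'$'s internal randomness $\rho$ is independent of $\vecx$, and for each fixed $\rho$ the set $\cS(\rho)$ of polynomial pairs that get compared by an equality gate is determined and has size at most $\binom{C+m+1}{2}$, since there are at most $m+1$ initial polynomials and each of the $\le C$ group-operation/inverse gates creates at most one new one (repeated comparisons of a pair are harmless, the outcome being already pinned). (ii) The Schwartz–Zippel bound over $\Z_N$: if $R=c_0+\sum_k c_kX_k$ has some $c_k\not\equiv 0\pmod N$, then $\Pr_{\vecx\gets[N]^m}[R(\vecx)\equiv 0\pmod N]\le 1/p$ for $p$ the smallest prime divisor of $N$ — when some $c_k$ with $k\ge1$ is nonzero mod $N$, fix the other coordinates and note the linear congruence in $x_k$ has at most $\gcd(c_k,N)\le N/p$ solutions; when only $c_0$ is nonzero the probability is $0$. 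A union bound over $\cS(\rho)$ followed by averaging over $\rho$ then gives mispredict probability at most $\E_\rho\bigl[|\cS(\rho)|\bigr]/p\le\binom{C+m+1}{2}/p\le (C+m+1)^2/(2p)$, as claimed. The ``in particular'' assertion is immediate: $\cA=(\cA_c,\cA_q)$ and $\cA'=(\cA_c',\cA_q)$ feed identical states into the same $\cA_q$ whenever $\cA_c$ and $\cA_c'$ agree.

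The step I expect to require the most care is ingredient (i): formally justifying that $\cA_c'$ is $\vecx$-oblivious, so that the independence of $\rho$ and $\vecx$ used in the union bound is legitimate. One must check that measuring element registers mid-computation — possibly entangled with qubit registers through earlier gates — neither leaks $\vecx$ into anything $\cA_c'$ subsequently uses nor is itself used, that in $\cA_c'$ no such entanglement ever forms because the substitute equality operation only XORs a classical, trajectory-determined bit, and that the coherent-index QGGM is covered once $\vecT,\vecW$ are treated as measured. The remaining pieces — the polynomial tracking, the generalized Schwartz–Zippel estimate, and the count of distinct polynomials — are routine.
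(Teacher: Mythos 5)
Your proposal is correct and takes essentially the same route as the paper: replace each classical equality gate by the prediction computed from the polynomial list, then bound the probability of a misprediction by a Schwartz--Zippel-type argument over $\Z_N$ combined with a union bound over the at most $\binom{C+m+1}{2}$ polynomial pairs. Your extra step (i), arguing that $\cA_c'$'s trajectory is $\vecx$-oblivious so the set of compared pairs is well-defined before averaging, is a point the paper's proof glosses over; it is a valid and welcome piece of rigor but not a different approach.
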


\subsection{Discrete Logarithms in the QGGM}
\paragraph{The basic QGGM.}
We begin with the DL lower bound in the basic QGGM.
\begin{theorem}\label{thm: QGGM_DL}
    Let $\cG$ be a cyclic group of order $N$ with a generator $g$. Suppose that the smallest prime divisor of $N$ is $p$. Let $\cA_\dl$ be a $(C,Q)$-algorithm in the basic QGGM, then the following holds:
    \[
        \Pr_{\cA_\dl,x\gets [N]}\left[
            \cA_\dl^{\ket{\cG}}(g,g^x) \rightarrow x
        \right]
        \le\frac{(C+2)^2}{2p} + \frac{2^{2Q}}{N}.
    \]
\end{theorem}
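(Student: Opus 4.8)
The plan is to reduce \Cref{thm: QGGM_DL} to the interactive quantum compression lemma (\Cref{lem: interactive compression}) via a delegation protocol: Bob runs the DL algorithm, while Alice, who holds every element register and hence knows $x$, executes each element gate on Bob's behalf. First I would dispose of the classical preprocessing. By \Cref{lem: classical_eq_remove} with $m=1$, up to an additive loss of $\frac{(C+2)^2}{2p}$ in the success probability we may assume that $\cA_c$ uses no classical equality gates, so that $\cA_c$ consists of at most $C$ classical group/inverse operations on the two fixed element registers together with arbitrary qubit gates. Since no quantum element gate precedes them, the element registers stay in an honest computational-basis state $g^{(\cdot)}$ throughout $\cA_c$, so ``measuring'' them during a classical operation is inert.

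Next, construct the two-party protocol. Alice's secret is $x\gets[N]$; the parties share the randomness of $\cA_\dl$ (as a shared random string, or inside the shared entangled state). Alice initializes and maintains all element registers, including the one holding $g^x$; Bob initializes and maintains all qubit registers, the output registers among them, starting from $\ket{0^n}$. They simulate $\cA_\dl$ gate by gate: Bob performs qubit gates locally; for each classical group/inverse operation Bob measures the control register $\vecB$ and sends the resulting bit to Alice, who updates the element registers and returns nothing, since Bob never needs the element contents (they influence the computation only through later element gates); for each of the $Q$ quantum element gates --- group operation, inverse operation, or equality gate --- Bob sends the single-qubit register $\vecB$ to Alice, Alice applies the corresponding $U_{\Gop}$, $U_{\Ginv}$, or $U_{\Geq}$ to $\vecB$ and the two fixed element registers she holds, and returns $\vecB$ to Bob. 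Finally Bob measures the output registers, obtaining $x$ with probability at least $\epsilon-\frac{(C+2)^2}{2p}$, where $\epsilon$ denotes the probability in the theorem.

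It then remains to count Alice-to-Bob communication and apply the lemma. During $\cA_c$ Alice sends nothing; during $\cA_q$ she returns exactly one qubit (a qudit of dimension $2$) per quantum element gate, hence $Q$ qudits of dimension $2$ over at most $Q$ rounds in which she speaks, and the number of bits Bob sends back is irrelevant. \Cref{lem: interactive compression} with $\cM=[N]$ and $\prod_i M_i = 2^Q$ yields $Q\ge\tfrac12\bigl(\log N+\log(\epsilon-\tfrac{(C+2)^2}{2p})\bigr)$, i.e.\ $\epsilon-\frac{(C+2)^2}{2p}\le\frac{2^{2Q}}{N}$, which is the claimed bound.

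I expect the main obstacle to be the faithfulness and the communication accounting of the simulation rather than any single deep step. One must verify that the classical phase genuinely costs no Alice-to-Bob communication --- this is precisely why the classical equality gates must be removed first, as an unremoved one would force Alice to send a bit back to Bob --- that a single control qubit $\vecB$ is all any element gate needs from Bob, and that the element registers are never disturbed in a way visible to Bob except through the element gates themselves. A secondary point is that classical labeling gates and the appended $\ket{g}$-registers are inert in this argument, since in the basic QGGM the element gates act only on the fixed registers, so they do not affect the count; I would note this in passing.
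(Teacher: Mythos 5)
Your proposal is correct and follows essentially the same route as the paper's proof: remove classical equality gates via \Cref{lem: classical_eq_remove} at a cost of $\frac{(C+2)^2}{2p}$, build an Alice/Bob delegation protocol in which only quantum element gates incur one qubit of Alice-to-Bob communication, and invoke \Cref{lem: interactive compression} with $\cM=[N]$ to get $2Q \ge \log N + \log\epsilon'$. The subtleties you flag (the classical phase being one-way from Bob to Alice once equality gates are gone, and labeling/appended registers being free) are precisely the points the paper's proof also relies on.
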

\begin{proof}
    Let $\epsilon$ be the success probability of $\cA_\dl$.
    Decompose $\cA_\dl=(\cA_c,\cA_q)$ as described in the previous section. 
    By~\Cref{lem: classical_eq_remove}, it suffices to consider $\cA_\dl'=(\cA_c',\cA_q)$ where $\cA_c'$ does not have any equality gates, whose output is identical to $\cA_\dl=(\cA_c,\cA_q)$ with probability $1-(C+2)^2/2p$.
    In other words, $\cA'_\dl$ solves the DL problem with a probability of at least
    \begin{align}\label{eq: prob_QDL_q}
        \epsilon' \ge \epsilon - \frac{(C+2)^2}{2p}.
    \end{align}

    Similarly to the classical case, we will construct an interactive compression protocol and apply~\Cref{lem: interactive compression}. In the protocol, Alice holds group registers and applies element gates. Bob only holds the qubit registers and \emph{delegates} all group-related operations to Alice. 
    
    We introduce the simple sub-protocols between Alice and Bob, showing that Alice sends one qubit during one element gate delegation.

    \paragraph{Subprotocol $\DGop$ for group operation gates.}
    The initial states are
    \begin{align}\label{eq: tele_init}
        \sum_{b}\beta_b \ket{b}_\vecB 
        \otimes \sum_{z,w} \alpha_{z,w}\ket{g^z,g^w}_{ \vecX\vecY}
    \end{align}
    where Alice holds the registers $\vecX,\vecY$ and Bob holds $\vecB$.

    \begin{enumerate}
        \item Bob sends his register $\vecB$ to Alice.
        Alice applies quantum group operation gates on $\vecB\vecX\vecY$ to obtain
        \begin{align}\label{eq: tele_fin}
            \sum_{b}\beta_b \ket{b}_{\vecB}
            \otimes\sum_{z,w}  \alpha_{z,w}\ket{g^{z+bw},g^w}_{\vecX\vecY}.
        \end{align}
        \item Alice returns the register $\vecB$ to Bob.
    \end{enumerate}
    In this protocol, Alice only sent one qubit and the group operation gate is applied as a result (compare~\Cref{eq: tele_init,eq: tele_fin} and \Cref{eq: QGGM gop}). 
    
    \paragraph{Subprotocol $\DGinv$ and $\DGeq$.} These are almost the same as the protocol $\DGop$. The difference is as follows.
    \begin{enumerate}
        \item Alice applies the quantum inversion-operation gate or equality gate instead of the group operation gate.
    \end{enumerate}

    Now, we return to the proof. We construct the following interactive protocol between Alice and Bob, where Alice selects $x\in [N]$ and tries to send $x$ using this protocol. Given a $(T,Q)$-algorithm $\cA'_\dl=(\cA'_c,\cA_q)$ for $T=C+2$, we consider the following protocol.

    \paragraph{Main interactive protocol.}
    Suppose that Alice chooses $x\in [N]$. 
    In the protocol, Alice and Bob try to execute the algorithm $\cA'_\dl$ together, while Alice holds all element registers and Bob holds all qubit registers. For qubit gates, Bob applies them locally without interacting with Alice.

    To apply element gates, Alice and Bob use the above protocol. For the classical preprocessing, a simpler protocol suffices.
    We give the overall protocol below. 
    \begin{enumerate}
        \item Alice prepares two element registers holding $\ket{g,g^x}$. If they are stored in the $i,j$-th element registers in $\cA'_c$'s input, Alice also stores them in the $i,j$-th element registers. \label{protocol: init}
        \item Alice and Bob together execute $\cA'_c$, with the following modifications. 
        \begin{itemize}
            \item Every qubit register is stored in Bob's memory, and every qubit gate is applied to Bob's side accordingly. Every element register is stored in Alice's memory. Alice and Bob use the same name/order of the registers as in $\cA'_c$.
            \item For each classical group operation gate that is applied to the registers $\vecB$ and $\vecX,\vecY$, Bob measures $\vecB$ in the computational basis and sends the measurement outcome $b$ to Alice. Alice applies the group operation on her registers $\vecX\vecY$ controlled on $b$, and discards $b$.
            \item Each classical labeling gate is processed analogously.
        \end{itemize}
        We make some observations on this part. 
        Alice's state is always classical during this procedure, so the measurement of Alice's registers can be ignored, and discarding bit $b$ is not problematic. 
        Alice has not sent any information to Bob until this point.
        Finally, the overall states between Alice and Bob are identical to the state after $\cA'_c(g,g^x)$, except that all qubit registers are stored in Bob's memory and all element registers are stored in Alice's memory.
        \item Alice and Bob execute $\cA_q$ together in a similar way:
        \begin{itemize}
            \item Every qubit gate is applied to Bob's registers accordingly.
            \item For each group operation gate $U_\Gop$ that is applied to the qubit register $\vecB$ and element registers $\vecX,\vecY$, Alice and Bob executes $\DGop$ on $\vecB\vecX\vecY$.
            \item Similarly, $\DGinv$ or $\DGeq$ is executed for each $U_\Ginv$ or $U_\Geq$, respectively.
        \end{itemize}
        \item Finally, Bob outputs the final output of $\cA_q$.
    \end{enumerate}
    It is not hard to see that the overall states between Alice and Bob are always identical to the corresponding intermediate states of $\cA'_\dl$ (ignoring the discarded bits). Therefore, the probability that Bob successfully recovers $x$ is exactly the same as that $\cA'_\dl$ solves the DL problem on input $(g,g^x)$.

    We then count the number of qubits sent from Alice to Bob. Alice sends a bit only when $\cA_\dl$ applies an element gate. Thus, the total number of qubits is $Q$. At this point, we can apply~\Cref{lem: interactive compression} to have the following inequality:
    \[
        \frac{\log \epsilon' + \log N}2 \le Q ~\Longrightarrow~\epsilon \le \epsilon' + \frac{(C+2)^2}{2p} \le  \frac{2^{2Q}}{N} + \frac{(C+2)^2}{2p}
    \]
    where we use~\Cref{eq: prob_QDL_q}, which completes the proof.
    \ifnum\llncs=1 \qed \fi
\end{proof}

\paragraph{The $(t,w)$-QGGM.}
We then extend the lower bounds in the QGGM for more general settings. The proof ideas are almost the same, except for the sub-protocols; Alice needs to send one qudit for appropriate dimensions.
We present the following generalization to the MDL problem.
\begin{theorem}\label{thm: QGGM_fine-grained-DL}
    Let $\cG$ be a cyclic group of order $N$ with a generator $g$. Suppose that the smallest prime divisor of $N$ is $p$. 
    Let $m$ be a positive integer.
    Let $\cA_\mdl$ be a $(C,Q)$-algorithm in the $(t,w)$-QGGM, then the following holds:
    \[
        \Pr_{\cA_\dl,\vecx\gets [N]^m}\left[
            \cA_\dl^{\ket{\cG}}(g,g^\vecx) \rightarrow \vecx
        \right]
        \le\frac{(C+m+1)^2}{p} + \frac{(2tw)^{2Q}}{N^m}.
    \]
\end{theorem}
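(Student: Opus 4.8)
The plan is to adapt the proof of \Cref{thm: QGGM_DL} to the multiple-instance setting with coherent indices, combining the interactive delegation protocol with the message-size bookkeeping demanded by coherent index registers. First I would handle the classical preprocessing exactly as before: by \Cref{lem: classical_eq_remove}, replace $\cA_c$ by an equivalent $\cA_c'$ with no equality gates, losing at most $(C+m+1)^2/2p$ in success probability, so that it suffices to bound the success probability $\epsilon'$ of the modified algorithm $\cA'_\mdl=(\cA'_c,\cA_q)$. Then I would set up the same Alice--Bob protocol: Alice chooses $\vecx \in [N]^m$, prepares the input element registers $\ket{g,g^{x_1},\dots,g^{x_m}}$ on her side, Bob holds all qubit and qudit registers and runs the algorithm, delegating every element gate to Alice. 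During the classical-preprocessing phase Alice's state stays classical and she sends nothing (Bob measures and sends the control bits to her); during the quantum phase each of the $Q$ quantum element gates is delegated via a sub-protocol.

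The one new ingredient is the sub-protocol for a $(t,w)$-element gate. For $U_{\Gop}^{(t,w)}$ (and likewise $U_{\Ginv}^{(t,w)}$, $U_{\Geq}^{(t,w)}$), Bob holds the registers $\vecB,\vecT,\vecW$ where $\vecT$ is a qudit of dimension $t$ and $\vecW$ of dimension $w$, and Alice holds all the element registers $\vecX_1,\dots,\vecX_t,\vecY_1,\dots,\vecY_w$. To apply the gate coherently, Bob sends $\vecB\vecT\vecW$ to Alice, Alice applies the controlled operation (controlled on $i,j$, she acts on $\vecX_i,\vecY_j$), and returns $\vecB\vecT\vecW$ to Bob. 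This is one qudit of dimension $2tw$ (or, more carefully, one qubit times one qudit of dimension $t$ times one qudit of dimension $w$, whose product is $2tw$). So over the whole execution Alice sends, in the notation of \Cref{lem: interactive compression}, qudits with $\prod_{i=1}^{Q} M_i = (2tw)^Q$. The correctness argument is identical to before: the joint Alice--Bob state tracks the intermediate state of $\cA'_\mdl$ exactly (up to discarded classical bits from the preprocessing), so Bob recovers $\vecx$ with probability $\epsilon'$.

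Finally I would invoke the quantum interactive compression lemma, \Cref{lem: interactive compression}, with $\cM=[N]^m$ so $\log|\cM| = m\log N$: it gives $2\log\!\big(\prod_i M_i\big) \ge \log|\cM| + \log\epsilon'$, i.e. $2Q\log(2tw) \ge m\log N + \log\epsilon'$, hence $\epsilon' \le (2tw)^{2Q}/N^m$. Combining with $\epsilon \le \epsilon' + (C+m+1)^2/2p$ — and being slightly loose with the constant to match the stated $(C+m+1)^2/p$ — yields the claimed bound.

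I expect the main obstacle to be a careful treatment of the coherent index registers in the delegation sub-protocol: one must check that sending $\vecT,\vecW$ (rather than the element registers they index) genuinely suffices for Alice to apply the gate, that no element register needs to move, and that the dimension count $2tw$ is exactly what feeds into \Cref{lem: interactive compression}. A secondary subtlety is the remark after the $(t,w)$-QGGM definition that element registers need not be distinct, so the "$\vecX_i \ne \vecY_j$" side conditions and the do-nothing branches must be respected by Alice's controlled unitary; but since Alice holds all element registers this is a purely local check on her side and does not change the communication cost. The rest is a routine transcription of the \Cref{thm: QGGM_DL} argument.
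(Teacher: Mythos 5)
Your proposal matches the paper's proof essentially step for step: remove classical equality gates via \Cref{lem: classical_eq_remove}, set up the Alice--Bob delegation protocol with Alice holding all element registers, delegate each $(t,w)$-element gate by shuttling the $\vecB\vecT\vecW$ registers (dimension $2tw$), and invoke \Cref{lem: interactive compression} with $\cM=[N]^m$. You even correctly flag the minor slack between $(C+m+1)^2/2p$ and the stated $(C+m+1)^2/p$, so this is the same argument.
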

\begin{proof}
    Applying~\Cref{lem: classical_eq_remove}, it suffices to consider the generic algorithm $\cA'_\mdl$ with no classical equality queries, which solves the MDL problem with probability at least $\epsilon'\ge \epsilon-\frac{(C+m+1)^2}{2p}$. Then, we can construct a protocol between Alice and Bob where Alice aims to send $\vecx \in [N]^m$ to Bob using this algorithm. 
    We need appropriate subroutines for the $(t,w)$-QGGM.
    For the group operation gate, it works as follows.

    \paragraph{Subprotocol $\DGop$ for group operation gates.}
    The initial states are
    \[
        \sum_{b,i\in[t],j\in[w]}\beta_{b,i,j} \ket{b,i,j}_{\vecB\vecT\vecW}
        \otimes \sum_{z,w} \alpha_{z,w}\ket{...,g^z,...,g^w,...}_{ ...\vecX_i...\vecY_j...}
    \]
    where Alice holds the registers $\vecX=(\vecX_1,...,\vecX_t),\vecY=(\vecY_1,...,\vecY_w)$ and Bob holds $\vecB,\vecT,\vecW$. $g^z$ and $g^w$ are stored in $\vecX_i,\vecY_j$, respectively.

    \begin{enumerate}
        \item Bob sends $\vecB,\vecT,\vecW$ to Alice.
        Alice applies quantum group operation gates on $\vecB\vecT\vecW\vecX\vecY$.
        \item Alice returns the register $\vecB,\vecT,\vecW$ to Bob.
    \end{enumerate}
    In this protocol, Alice sends a quantum state of dimension $2tw.$
    The other element gates can be delegated analogously.
    
    Each quantum element gate is operated with a quantum state with $2tw$ dimension, thus \Cref{lem: interactive compression} implies that
    \[
        \frac{\log\epsilon' + m\log N }{2 } \le Q \log(2tw) ~\Longrightarrow~ \epsilon \le \frac{(2tw)^{2Q}}{N^m} + \frac{(C+m+1)^2}{2p},
    \]
    which concludes the proof.
    \ifnum\llncs=1 \qed \fi
\end{proof}

\subsection{Unknown-order QGGM}

We can prove the following QGGM variant of~\Cref{thm: uGGM-order,thm: uGGM-RSA}.
\begin{theorem}\label{thm: uQGGM-order}
    Let $\cA_\ord$ be a $(C,Q)$-algorithm to solve the order-finding problem over $\cD_\pprime^{(n)}$ in the $(t,w)$-QGGM. It holds that
    \[
        \Pr_{\cA_\ord,N} \left[
            \cA_\ord^{\ket{\cG_N}}(g) \rightarrow N
        \right]
        =O\left(
            \frac{C^3}{2^{n}} + \frac{n(2tw)^{2Q}}{2^n}
        \right).
    \]
    For the product-order-finding algorithm over $\cD_\pprime^{(n)}$ in the QGGM, it holds that
    \[
        \Pr_{\cA_\ord,p,q} \left[
            \cA_\ord^{\ket{\cG_{pq}}}(g) \rightarrow pq
        \right]
        =O\left(
            \frac{T^4}{2^{2n}} + \frac{n^2(2tw)^{2Q}}{2^{2n}}
        \right).
    \]
\end{theorem}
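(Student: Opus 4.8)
The plan is to fuse the two mechanisms already developed in this paper: the concrete integer‑representation encoding behind the classical unknown‑order bounds (\Cref{thm: uGGM-order,thm: uGGM-RSA}) and the Alice--Bob delegation protocol behind the quantum bounds (\Cref{thm: QGGM_DL,thm: QGGM_fine-grained-DL}). Write the $(C,Q)$‑algorithm as $\cA=(\cA_c,\cA_q)$. Since the order‑finding input is just $g$, the element registers of $\cA_c$ stay classical and carry concrete representations in $\Z$ (no formal variables), and we may assume $\cA_c$ uses no labeling gates, as any element it wants can be built from $g$ with at most a logarithmic overhead in $C$. Let $\cE$ be the event that, running $\cA_c$ on $\cG_N$, at least one \emph{informative} collision occurs, and split
\[
    \Pr[\cA \text{ succeeds}] \le \Pr[\cE] + \Pr[\cA\text{ succeeds} \wedge \neg\cE].
\]
I will bound $\Pr[\cE]$ by the classical method and $\Pr[\cA\text{ succeeds}\wedge\neg\cE]$ by the quantum one.

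\emph{Bounding $\Pr[\cE]$.} On $\cE$, let $(i,j)$ be the first informative collision, with integer representations $P_i \neq P_j$. As in \Cref{thm: uGGM-order}, a group operation at most doubles magnitudes, so $|P_i - P_j| \le 2^{C+1}N$; and $g^{P_i} = g^{P_j}$ in $\cG_N$ forces $N \mid (P_i - P_j)$, so the single $n$‑bit prime $N$ is one of the at most $O((C+n)/n)$ $n$‑bit prime factors of $P_i-P_j$. The encoder outputs the collision index (at most $\binom{C+1}{2}$ choices, since $\cA_c$ produces at most $C+1$ distinct representations) together with the index of the prime factor equal to $N$; the decoder replays $\cA_c$ up to that collision --- every earlier equality gate has its output fixed by the circuit structure alone (no nontrivial collision has occurred yet), so no knowledge of $N$ is needed --- recomputes the integer $P_i - P_j$, factors it, and reads off $N$. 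By \Cref{lem: compression}, $\Pr[\cE] = O(C^3/2^n)$. For the product case the argument is identical, except that $pq \mid (P_i - P_j)$, so both $n$‑bit primes $p,q$ appear as factors; the encoder records the pair of factor indices ($\binom{O((C+n)/n)}{2}$ choices), yielding $\Pr[\cE] = O(C^4/2^{2n})$ (when $C \ge n$, as we may assume).

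\emph{Bounding $\Pr[\cA\text{ succeeds}\wedge\neg\cE]$.} When no informative collision occurs in $\cA_c$, every equality gate of $\cA_c$ simply returns $\delta_{P_i,P_j}$ on the integer representations, a value computable from the circuit structure alone. I therefore run the Alice--Bob protocol of \Cref{thm: QGGM_DL}: Alice picks $N$ and holds all element registers; Bob holds all qubit registers and drives $\cA$. During $\cA_c$, Bob sends Alice only the classical control bits of each (classical) group‑operation gate, which Alice applies, and Bob simulates each equality gate locally by $\delta_{P_i,P_j}$; crucially \emph{nothing} is sent from Alice to Bob in this phase. During $\cA_q$, Bob delegates each quantum element gate via the subprotocols $\DGop,\DGinv,\DGeq$, in each of which Alice returns one qudit of dimension $2tw$; finally Bob outputs $\cA$'s output. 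Conditioned on $\neg\cE$ the simulated global state coincides with the true state fed into $\cA_q$, so Bob outputs $N$ with probability at least $\Pr[\cA\text{ succeeds}\wedge\neg\cE]$, while the total message from Alice to Bob is $Q$ qudits of dimension $2tw$. \Cref{lem: interactive compression} then gives $2Q\log(2tw) \ge \log|\cM| + \log\big(\Pr[\cA\text{ succeeds}\wedge\neg\cE]\big)$ with $|\cM| = \Theta(2^n/n)$, i.e.\ $\Pr[\cA\text{ succeeds}\wedge\neg\cE] = O(n(2tw)^{2Q}/2^n)$. The product case is the same with $|\cM|=\Theta(2^{2n}/n^2)$, giving $O(n^2(2tw)^{2Q}/2^{2n})$. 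Adding the two bounds in each case yields the theorem.

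\emph{Expected main obstacle.} The delicate point is the equivalence ``no informative collision in $\cA_c$ $\iff$ every classical equality gate of $\cA_c$ is structurally predictable'': this is exactly what lets Bob execute $\cA_c$ with \emph{zero} Alice‑to‑Bob communication in the $\neg\cE$ branch, and it rests on $\cZ = \emptyset$ forcing every nontrivial collision to be informative at the moment it first occurs. Closely related, one must verify that all classical messages in the $\cA_c$ phase travel Bob‑to‑Alice, so that no $c\cdot(\text{classical bit})$ penalty enters \Cref{lem: interactive compression}, and that the number of distinct integer representations produced by $\cA_c$ is $O(C)$ (so the collision index costs only $O(\log C)$ bits) --- both are immediate from the model once labeling gates are disposed of. The remaining ingredients --- the $2^{C+1}N$ magnitude bound, counting $n$‑bit prime divisors, and the qudit bookkeeping of $\DGop,\DGinv,\DGeq$ --- are verbatim from \Cref{thm: uGGM-order} and \Cref{thm: QGGM_DL}.
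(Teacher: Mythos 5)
Your proposal matches the paper's (very terse) proof sketch for \Cref{thm: uQGGM-order}: the paper also says to compress $N$ classically when the preprocessing stage already produces an informative collision, and otherwise to fall through to the Alice--Bob delegation protocol of \Cref{thm: QGGM_DL}. Your decomposition $\Pr[\text{success}]\le \Pr[\cE]+\Pr[\text{success}\wedge\neg\cE]$, with the first term bounded via the integer-representation encoding of \Cref{thm: uGGM-order} and the second via \Cref{lem: interactive compression} applied to the delegation protocol, is exactly the right way to flesh this out, and the observation that $\neg\cE$ makes every classical equality gate of $\cA_c$ simulable by $\delta_{P_i,P_j}$ from the circuit alone (so no Alice-to-Bob bits are incurred during $\cA_c$) is the crux the sketch leaves implicit. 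Your per-case bounds and the product-order adaptation are also what the paper intends.

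One small inaccuracy worth fixing: the remark that labeling gates can be removed ``with at most a logarithmic overhead in $C$'' is not right --- replacing a labeling gate that produces $g^x$ for an $n$-bit $x$ by square-and-multiply costs $\Theta(n)$ group operations, so the overhead is multiplicative in $n$, not logarithmic in $C$; carried through, it would actually weaken the $\Pr[\cE]$ bound by $\mathrm{poly}(n)$ factors. Fortunately the removal is unnecessary and you do not actually use it: your magnitude bound $|P_i-P_j|\le 2^{C+1}N$ already accounts for labeling gates (each labeling contributes at most $N$, each group operation at most doubles), which is precisely the bound the paper uses in \Cref{thm: uGGM-order}. Just drop the removal claim and keep the $2^{C+1}N$ estimate as stated.
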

\begin{proof}[{\ifnum\llncs=0 Proof sketch\else Sketch\fi}]
    The proof is almost identical with the known-order QGGM proofs. 
    Instead of applying~\Cref{lem: classical_eq_remove}, whenever the classical preprocessing finds an informative collision, we use it to compress $N$. Otherwise, Bob can delegate quantum group operations to Alice to construct the interactive protocol encoding $N$. The product-order-finding case is analogous.
    \ifnum\llncs=1 \qed \fi
\end{proof}

\section{Lower Bounds in the Quantum Generic Ring Model}
\subsection{Quantum Generic Ring Model}
We define the quantum generic group model (QGRM) in this section.
The QGRM is a natural analog of the classical generic ring model~\cite{AM09,JS13}, similar to the relation between the QGGM and GGM. 

Let $\cR$ be a commutative ring isomorphic to $\Z_N$ for some integer $N$ to be specified later.
Let $t,w$ be positive integers.
A quantum generic ring algorithm $\cA$ in the QGRM is defined as follows. Note that the definition of ring multiplication and division is rather complicated because of their subtlety; for example, they are not invertible as is, or there is no inverse. Our abstraction closely resembles the actual target arithmetic gates of circuit optimizations, e.g.,~\cite{Bea03,Gid19}.
We also remark that the $(t,w)$-QGGM can be defined analogously.
\begin{itemize}
    \item There are two registers: qubit and element registers holding superpositions of some information. In contrast, element registers take a set of elements $x\in \cR \cup \{\bot\}$ as the computational basis.
    The algorithm arbitrarily appends a new element register initialized by $\ket{0}$ or $\ket{1}$.
    \item There are (arbitrary) quantum gates that map qubits to qubits, which cannot take element registers as input.
    \item There are special gates called \emph{element gates} that can access the element wires as follows. The unspecified registers are unchanged by the operations.
    \begin{description}
        \item[Ring Addition Gate.] It takes a qubit register $\vecB$ and two element registers $\vecX,\vecY$ and applies the unitary that works on the computational basis as follows:
        \[
            \ket{b}_{\vecB} \ket{x,y}_{\vecX\vecY} 
            \mapsto\ket{b}_{\vecB} \ket{x+by,y}_{\vecX\vecY}
        \]
        for $x,y \neq \bot$, otherwise do nothing.
        \item[Ring Subtraction Gate.] It is essentially identical to the ring addition gate, except for the choice of unitary:
        \[
            \ket{b}_{\vecB} \ket{x,y}_{\vecX\vecY} 
            \mapsto\ket{b}_{\vecB} \ket{x-by,y}_{\vecX\vecY}.
        \]
        \item[Ring Product-Addition Gate.] It takes a qubit register $\vecB$ and three element registers $\vecX,\vecY,\vecZ$ and applies the unitary that works on the computational basis as follows:
        \[
            \ket{b}_{\vecB} \ket{x,y,z}_{\vecX\vecY\vecZ} 
            \mapsto\ket{b}_{\vecB} \ket{x+byz,y,z}_{\vecX\vecY\vecZ}
        \]
        for $x,y,z \neq \bot$ and registers, otherwise do nothing.
        \item[Testing Invertible Gate.] It takes an element register $\vecX$ and a qubit register $\vecC$, and applies the unitary $U_\Test$ that works on the computational basis as follows:
        \[
            U_\Test:\ket{x,c}_{\vecX\vecC} \mapsto \ket{x,c\oplus \Test(x)}_{\vecX\vecC}
        \]
        where $\Test(x)=1$ if $x$ is invertible, otherwise $\Test(x)=0$.
        \item[Ring inversion-addition Gate.] It takes a qubit register $\vecB$ and three element registers $\vecX,\vecY,\vecZ$. It appends an ancillary qubit register $\vecC$ initialized by $\ket{0}$ and applies the following sequence of unitaries that works on the computational basis as follows:
        \begin{align*}
            &\ket{b}_{\vecB} \ket{x,y,z}_{\vecX\vecY\vecZ} \ket{0}_\vecC\\
            &\mapsto\ket{b}_{\vecB} \ket{x,y,z}_{\vecX_i,\vecY_j,\vecY_k}\ket{\Test(z)}_\vecC\\
            &\mapsto\ket{b}_{\vecB} \ket{x,+b\Test(z) y\cdot z^{-1},y,z}_{\vecX\vecY\vecZ}\ket{\Test(z)}_\vecC\\
            &\mapsto\ket{b}_{\vecB} \ket{x,+b\Test(z) y\cdot z^{-1},y,z}_{\vecX\vecY\vecZ}\ket{0}_\vecC
        \end{align*}
        for $x,y,z\neq \bot$, and do nothing other cases.
        Here, the first and last unitaries are $U_\Test$ on $\vecZ\vecC$. 
        In the second unitary, it adds $y \cdot z^{-1}$ only if $\Test(y_k)=1$ and $b=1$. It discards $\vecC$ in the end.
        \item[Equality Gate.] It takes a qubit register $\vecB$ and two element registers $\vecX,\vecY$ and applies the unitary that works on the computational basis as follows:
        \begin{align*}
                \ket{b}_{\vecB} \ket{x,y}_{\vecX\vecY} 
                \mapsto\ket{b\oplus\delta_{x,y}}_{\vecB} \ket{x,y}_{\vecX\vecY}
        \end{align*}
        for $x,y \neq \bot$, otherwise do nothing,
        where $\delta_{x,y}=1$ if $x=y$ and $0$ otherwise.
    \end{description}
    \item We allow the intermediate measurements for registers and the labeling gate.
    \begin{description}
        \item[Classical Labeling Gate.] It takes $\lceil \log_2 N\rceil$ qubit registers, measures it, and interprets them as an element in $x \in \Z_N \simeq \cR$. It appends a new element register holding $\ket{x}$. If there is no corresponding element $x \in \Z_N$ to the input wires, it outputs an element wire containing $\ket{\bot}$.
    \end{description}
\end{itemize}
We count the number of element gates as the cost metric, denoted by \emph{the ring operation complexity}.

We stress that the modulus $N$ is \emph{NOT} given to the generic algorithms explicitly. Instead, the algorithm only accesses the ring (or modular) operations. 
This still captures most quantum parts of the quantum factoring algorithms. For Shor's algorithm, the quantum part aims to find the order $r$ of the randomly chosen integer $a$, which only requires modular arithmetic. We prove the lower bound for the order finding in the QGRM in~\Cref{thm: QGRM_order}. 
The knowledge of $N$ beyond our model is used in the classical post-processing parts for computing (say) $\gcd(a^{r/2}+1,N)$.

Regev's algorithm does not compute the order. Instead, it computes a short vector $\vecz=(z_1,...,z_d)$ in a certain lattice, and then compute $\gcd(b_1^{z_1}...b_d^{z_d}-1,N)$. In the QGRM, the algorithm still can compute $b_1^{z_1}...b_d^{z_d}-1$ as an integer, which is relatively small, having a nontrivial common factor with $N$. We prove that this algorithm needs to make a logarithmic number of ring operations in~\Cref{thm: QGRM_fact}.

\subsubsection{Quantum Generic Ring Algorithm with Classical Preprocessing}
We consider a slightly more general algorithm that can do classical preprocessing \emph{without} the testing or equality gates. The classical ring operations are defined by the ring operations that element gates are measured before applying ring operations. 
Recall the ring operations are done on Alice's side in the proofs. The delegation of classical ring operations can be done without Alice's messages, except for the equality gates, where Bob needs to take the output of gates.
Therefore, the number of classical ring operations, such as precomputing $x^{2^k}$, is irrelevant to our lower bounds. We do not explicitly discuss this setting in the remainder of this section.

\subsection{Lower Bounds in the QGRM}
This section is devoted to proving that the order-finding problems with certain distributions and a certain type of factoring algorithms have logarithmic ring operation complexity.
In a ring $\cR\simeq \Z_N$, the order of $x\in \cR$, denoted by $\ord_N(x)$, is defined by the minimal positive integer $e$ such that $x^e = 1 \bmod N$.
A prime number $p$ is \emph{safe} if $\frac{p-1}{2}$ is also prime. 
We consider the following problem. 
\begin{problem}
    Let $\cD_\safe^{(n)}$ be a uniform distribution over the set of $n$-bit safe primes.
    An order-finding problem over $\cD_\safe^{(n)}$ in the QGRM is defined as follows. First, two distinct primes $p,q$ are sampled from $\cD_\safe^{(n)}$ and let $N=pq$. Choose a random $x\in\Z_N$.
    The adversary is given $x$ stored in the element register in the QGRM for $\cR \simeq \Z_N$ and asked to find $\ord_N(x)$.
\end{problem}
We assume that the number of $n$-bit safe primes is at least $C\cdot 2^n/n^2$ for some constant $C>0$, which is a variant of the conjecture that the number of safe primes below $N$ is of order $\Theta(N/\log^2 N)$~\cite[Section 5.5.5]{Shoup09}. 

\begin{theorem}\label{thm: QGRM_order}
    Let $\cA_\ord$ be an order-finding algorithm over $\cD_\safe^{(n)}$ in the $(t,w)$-QGRM with the ring operation complexity of $Q$. Assuming that the number of $n$-bit safe primes is at least $C\cdot 2^n/n^2$ for $C>0$, it holds that
    \[
        \Pr_{\cA_\ord,p,q,x}\left[
            \cA_\ord(x) \rightarrow \ord_{pq}(x)
        \right]
        = O \left(
            \frac{n^4 (2tw)^{2Q}}{2^{2n}} + \frac{1}{2^n}
        \right)
    \]
\end{theorem}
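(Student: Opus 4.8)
The plan is to mimic the order-finding compression arguments of \Cref{thm: uGGM-order,thm: uQGGM-order}, but carried out in the QGRM, where the adversary has only plain integer arithmetic (no modular reduction, since $N$ is hidden) available through the element gates. As in the ring model, I would first attach to each element register the \emph{integer} it represents under the natural evaluation, i.e. track polynomials/integers in $\Z$ built up by the ring addition, subtraction, product-addition, and inversion-addition gates from the starting constants $0,1$ and the input $x$. The key structural observation is the analogue of the bounded-coefficient fact: after $Q$ ring operations, every integer appearing in an element register is bounded in absolute value by something like $2^{2^{O(Q)}}$ — actually, because product and inversion-addition gates can square magnitudes, the bound is doubly exponential in $Q$, which is exactly why the final bound has $(2tw)^{2Q}$ rather than $\mathrm{poly}(Q)$. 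The number of distinct $n$-bit safe-prime divisors of any such integer (and of differences of two such integers) is therefore $O(2^Q \log(2tw)/n)$, mirroring the $O(T/\log N)$ count in \Cref{thm: uGGM-order}.

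Next I would set up the interactive compression protocol exactly as in the proof of \Cref{thm: QGGM_DL}: Alice holds all element registers (and, crucially, knows $N=pq$ and $x$), Bob holds all qubit registers and runs $\cA_\ord$, delegating every ring operation gate (addition, subtraction, product-addition, test, inversion-addition, equality) to Alice via the subprotocols $\DGop$-style delegations, each of which costs Alice one qudit of dimension $2tw$ (the extra factor from the $(t,w)$ index registers, with the ancilla $\vecC$ in the inversion-addition gate handled on Alice's side). After the quantum part, Bob holds a candidate order $r$; Alice then has Bob additionally compute, using $O(\log r)=O(Q)$ further delegated ring multiplications, the integer $x^{r}-1$ (computed in plain, \emph{without} modular reduction), which — whenever $\cA_\ord$ succeeded — is a nonzero integer divisible by $N=pq$. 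The encoding Alice sends to Bob is: the transcript of her delegated qudits (contributing $\le Q' \log(2tw)$ for $Q'=Q+O(Q)$ total gates), plus the index of $N$ among the $n$-bit safe-prime divisors of that integer (contributing $O(\log(2^Q\log(2tw)/n))=O(Q)$ bits), which lets Bob's side (who can recompute everything from the transcript without any oracle) reconstruct $N$. Applying the quantum interactive compression bound \Cref{lem: interactive compression} against the message length $\log|\cM| = \log(\#\{n\text{-bit safe primes}\}^2) \ge 2n - O(\log n)$ (squared because $N$ determines an unordered pair of safe primes, and using the density assumption $\ge C 2^n/n^2$) yields $2Q\log(2tw) + O(Q) \gtrsim 2n - O(\log n) + \log\epsilon'$, hence $\epsilon' = O(n^4 (2tw)^{2Q}/2^{2n})$; adding the $O(1/2^n)$ slack from the classical-preprocessing equality-removal step (analogue of \Cref{lem: classical_eq_remove} in the ring setting, or the fallback sampling of a random safe prime pair) gives the stated bound.

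The main obstacle I expect is the \emph{inversion-addition gate}. Unlike the group setting, the $z^{-1}$ it introduces is the modular inverse of $z$ modulo $N$, which has no bounded-integer representation, so the naive "track a plain integer" bookkeeping breaks down. The resolution I would pursue is to observe that (i) $N$ is a product of two safe primes, so with overwhelming probability over the random starting $x$ the only ring elements ever produced that are \emph{non-invertible} are the ones that are literally $0 \bmod p$ or $0 \bmod q$ — and producing such an element already leaks a factor of $N$, which can itself be encoded cheaply (this is where the safe-prime hypothesis does real work, controlling the subgroup structure); and (ii) on the "generic" branch where every relevant $z$ is invertible, one tracks the pair (numerator, denominator) of a rational function of $x$ rather than a polynomial, with numerator and denominator magnitudes still doubly-exponentially bounded in $Q$, and the informative collision $x_i = x_j \bmod N$ becomes a collision between two such rationals, whose cross-multiplied difference is a bounded integer divisible by $N$. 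Making this case analysis rigorous — in particular bounding the probability that the coherent execution ever "meaningfully" hits a non-invertible element, and arguing the rational-function degrees/heights stay controlled under product-addition and inversion-addition — is the technical heart; everything else is a routine transcription of the \Cref{thm: QGGM_DL} and \Cref{thm: uQGGM-order} arguments.
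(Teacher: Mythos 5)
Your proposal takes a fundamentally different --- and, as laid out, broken --- route from the paper, because it misses the key simplifying observation that the safe-prime hypothesis makes the order \emph{itself} a complete encoding of $(p,q)$. In the paper's proof, one observes that $\ord_{pq}(x)$ divides $\frac{(p-1)(q-1)}{2}=2\cdot\frac{p-1}{2}\cdot\frac{q-1}{2}$ and that $\frac{p-1}{2},\frac{q-1}{2}$ are prime, so for all but an $O(1/p)$-fraction of $x$ the order equals $\frac{(p-1)(q-1)}{2}$ or $\frac{(p-1)(q-1)}{4}$; factoring this small number (a product of two primes, possibly times $2$) directly recovers $(p,q)$. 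The compression protocol is then a near-verbatim repeat of \Cref{thm: QGGM_DL}: Alice holds all element registers and applies every ring gate (she knows $N$), Bob delegates each of the $Q$ quantum element gates at a cost of one $2tw$-dimensional qudit, and at the end Bob simply factors $\cA_\ord$'s output to recover the pair $(p,q)$. Applying \Cref{lem: interactive compression} against $|\cM|=\Omega(2^{2n}/n^4)$ (the number of pairs of $n$-bit safe primes, from the density hypothesis) yields the stated bound; the residual $O(1/2^n)$ is the probability that $x$ has small order, not a preprocessing or equality-removal artefact.

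Your $x^r-1$ detour does not work. First, the QGRM's ring gates are \emph{modular} --- the ring is $\cR\simeq\Z_N$ --- so $\cA_\ord$ cannot compute $x^r$ ``in plain''; applying ring multiplications would produce $x^r\bmod N=1$. If you instead try to track a plain-integer representative in the style of the unknown-order GGM, note that the correct order is $r\approx N/4\approx 2^{2n-2}$, so computing $x^r$ requires $\Theta(\log r)=\Theta(n)$ extra ring multiplications (not $O(Q)$, unless one already assumes $Q=\Omega(n)$, i.e.\ the conclusion), and the resulting integer has on the order of $rn\approx 2^{2n}n$ bits, so the index of $N$'s factor among its $n$-bit prime divisors costs on the order of $2n$ classical bits; either of these destroys the compression inequality. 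Finally, the inversion-addition worry is moot in this proof architecture: in the Alice/Bob protocol Alice knows $N$ and applies all ring gates locally on her side, so Bob never simulates them and never needs a bounded plain-integer (or rational-function) representation of any register's contents.
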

\begin{proof}
    Let $N=pq$.
    We first observe that the order of $x$ is a divisor of $\frac{(p-1)(q-1)}{2} = 2 \cdot \frac{p-1}{2} \cdot \frac{q-1}{2}.$ With probability $1-O(1/p)$ over random $x$, $\ord_N(x) = \frac{(p-1)(q-1)}{2} $ or $\frac{(p-1)(q-1)}{4} $.
    In this case, one can recover $(p,q)$ for safe primes $p,q$ from $\ord_N(x)$ using the factorization. 
    
    Based on this observation, we construct a protocol between Alice and Bob where Alice wants to send $(p,q)$ to Bob. The proof is identical to that of~\Cref{thm: QGGM_DL}, except that we need the delegation sub-protocols for ring operations. By the assumption, Alice sends one out of $O\left(\frac{2^{2n}}{n^4}\right)$ candidates to Bob using $Q$ qubits of communications.
    \ifnum\llncs=1 \qed \fi
\end{proof}

We then consider the factoring algorithms, where the generic algorithm's goal is to find an integer with a nontrivial common divisor with $N$. 
We prove the following theorem.
\begin{theorem}\label{thm: QGRM_fact}
    Let $\cA$ be an algorithm in the $(t,w)$-QGRM with the ring operation complexity of $Q$. 
    For two primes $p,q$ sampled from $\cD_\pprime^{(n)}$ and $N=pq$, it holds that
    \[
        \Pr_{\cA,p,q}\left[
            1<\gcd(Z,N)<N :\cA() \rightarrow Z
        \right]
        = O \left(
            \frac{n \log Z (2tw)^{2Q}}{2^{n}}
        \right).
    \]
    In particular, if $\log Z = O(2^{(2-\epsilon)n})$ for any $\epsilon>0$, this implies that $Q=\Omega(\frac{\log N}{\log(2tw)})$ to have the constant success probability.
\end{theorem}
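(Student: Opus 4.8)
The plan is to mimic, essentially line for line, the interactive-compression argument behind \Cref{thm: QGRM_order}; the only genuinely new point is that here the adversary outputs a \emph{plain integer} $Z$ (held in qubit registers, computed without modular reduction), which lets a decoder recover only \emph{one} prime factor of $N$, so the encoder must spend a few extra bits on the cofactor and an index. Fix notation: let $\epsilon$ be the success probability in the statement, and let $\cM$ be the set of integers $N=pq$ with $p\ne q$ both $n$-bit primes; since the number of $n$-bit primes is $\Theta(2^n/n)$ (Chebyshev), $|\cM|=\Theta(2^{2n}/n^2)$, so $\log|\cM|\ge 2n-O(\log n)$, and the distribution of $N$ induced by sampling $p,q\gets\cD_\pprime^{(n)}$ is uniform on $\cM$ up to the negligible event $p=q$.

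The protocol, in the sense of \Cref{lem: interactive compression}, has Alice pick $N\gets\cM$ and try to send it to Bob. Alice holds all element registers (recall $\cA$ takes no input, so every element register starts as an appended $\ket0$ or $\ket1$ on Alice's side) and, crucially, knows $N$, so she can implement \emph{every} QGRM element gate: ring addition/subtraction, ring product-addition, the testing-invertible gate $U_\Test$ (which requires $\gcd(\cdot,N)$), the ring inversion-addition gate (whose $O(1)$-size ancilla $\vecC$ is appended and uncomputed entirely on Alice's side), and the equality gate. Bob holds all qubit and index registers, runs $\cA$, applies qubit gates locally, and delegates each element gate to Alice using the sub-protocols of \Cref{thm: QGGM_DL,thm: QGGM_fine-grained-DL}: Bob ships the control qubit together with the dimension-$t$ and dimension-$w$ index registers to Alice, Alice applies the corresponding unitary to her element registers, and returns them---one qudit of dimension $2tw$ per element gate, and no classical bits. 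After the $Q$ element gates the joint state tracks the execution of $\cA$, so Bob's final measurement produces $Z$ with the correct distribution; Bob then sends $Z$ to Alice (a Bob$\to$Alice message, which \Cref{lem: interactive compression} leaves unbounded and which also sidesteps $\cA$'s measurement randomness).

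Next I would exploit the smallness of $Z$. Any two distinct $n$-bit primes dividing $Z$ are each at least $2^{n-1}$, and their product divides $Z$, so $Z$ has at most $k:=\lfloor\log_2 Z/(n-1)\rfloor$ distinct $n$-bit prime divisors. Whenever $\cA$ succeeds, $r:=\gcd(Z,N)\in\{p,q\}$ is a single $n$-bit prime dividing $Z$; Alice (knowing $N$ and $Z$) determines its rank $\ell\in[k]$ among the $n$-bit prime divisors of $Z$ and the cofactor $N/r$, and appends to the transcript the classical pair $(\ell,N/r)$, of length $c\le \log\log Z + n - O(\log n) + O(1)$ (a distinguished $\bot$ signals failure). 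To decode, Bob factors $Z$ (no efficiency is required by \Cref{lem: compression,lem: interactive compression}), lists its $n$-bit prime divisors, picks the $\ell$-th as $r$, and outputs $r\cdot(N/r)$; this equals $N$ exactly when $\cA$ succeeds, so the protocol recovers $N$ with probability at least $\epsilon$ (up to the negligible $p=q$ event).

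Feeding $Q$ qudit rounds of dimension $2tw$ and $c$ extra classical bits into \Cref{lem: interactive compression} then gives
\[
    2\,Q\log(2tw) + \log\log Z + n - O(\log n) \;\ge\; 2\,Q\log(2tw) + c \;\ge\; \log|\cM| + \log\epsilon \;\ge\; 2n - O(\log n) + \log\epsilon ,
\]
so $\log\epsilon \le 2Q\log(2tw) - n + \log\log Z + O(\log n)$, which, after absorbing the polynomial factor, is the claimed $\epsilon = O\!\big(n\log Z\,(2tw)^{2Q}/2^n\big)$; the ``in particular'' consequence is then immediate by taking logarithms. The step I expect to be the main obstacle is the delegation of the nonlinear ring gates: one must check that $U_\Test$, ring product-addition, and the ring inversion-addition gate each act as a unitary on Alice's element registers controlled by Bob's qubit/index registers, with every $N$-dependent computation and every ancilla confined to Alice's side, so that the communication per gate really is a single dimension-$2tw$ qudit; once that is in place, the divisor count and the bookkeeping for $c$ are routine.
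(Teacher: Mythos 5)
Your proof takes essentially the same route as the paper's: construct an interactive compression protocol for sending $N=pq$, delegate all element gates to Alice via one qudit of dimension $2tw$ per gate, have Alice append the classical index of the correct $n$-bit prime divisor of $Z$ (at most $O(\log Z/n)$ of them) plus the cofactor, and invoke \Cref{lem: interactive compression}. You add some welcome precision the paper leaves implicit — the explicit $|\cM|=\Theta(2^{2n}/n^2)$ count, the Bob$\to$Alice transmission of $Z$ (needed so Alice can form the index, and free under the lemma), and the $\bot$ convention on failure — and you correctly flag that one should check the nonlinear gates ($U_\Test$, product-addition, inversion-addition with its ancilla) all live entirely on Alice's side since she knows $N$, which is indeed the only point requiring care and goes through exactly as you anticipate.
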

Before proceeding with the proof, we give some interpretations of this theorem.
As we do \emph{not} give $N$ to the generic algorithm, it cannot apply the modulus operation. Therefore, the known quantum factoring algorithms must be explained with some modifications, where the final steps usually compute the common divisor of some integer and $N$.

Instead of giving $N$, we ask to find an integer that suffices for factoring $N$. In the QGRM, this integer must be computed in plain, without modulus computation. 
For Regev's algorithm, the final integer is of the form $Z=\prod_{i\in [d]} b_i^{z_i}$ for $z_i = \exp(O(\sqrt n))$ and $d\approx \sqrt n$. The last statement holds in this case as well.

The output of Shor's algorithm corresponds to $Z=a^{r/2}-1$ for $r=\ord_N(a)$. 
The bit length of $Z$ is about $\log Z \le \frac{r}{2} \cdot \log a \le \frac{nN}{2}$. Therefore, we cannot apply this theorem to Shor's algorithm in general.

\begin{proof}
    Let $\epsilon$ be the success probability of $\cA$.
    We construct a protocol that sends $(p,q)$ using $\cA$. Precisely, Bob runs $\cA$ using the delegation of quantum ring operations using $Q \log (2tw)$ qubits. After obtaining the outputs $Z$ from $\cA$, Alice additionally sends an index of the prime factor of $Z$ among its $n$-bit prime factors. Since the number of $n$-bit primes factors of $Z$ is bounded by $\log Z/n$, 
    the index can be described in $\log(\log Z)-\log n$ classical bits.
    Finally, Alice sends the other prime factor which can be specified by $n-\log n +O(1)$ classical bits.
    Appying~\Cref{lem: interactive compression}, we have
    \[
        2 Q \log (2tw) +( \log \log Z - \log n )+(n-\log n +O(1))\ge 2n-2\log n + \log \epsilon + O(1),
    \]
    which implies
    \[
        \epsilon = O \left(\frac{n(\log Z)(2tw)^{2Q}}{2^{n}}\right),
    \]
    concluding the proof.
    \ifnum\llncs=1 \qed \fi
\end{proof}

\section{Lower Bounds for Index Calculus Algorithms}\label{sec:smoothGGM}
This section introduces a new model called \emph{the smooth index calculus model (SGGM)} of generic algorithms, including the (simplest) index calculus methods.

A main feature of index calculus is using the set of $B$-smooth numbers, denoted by $S_B$, whose prime factors are all less than or equal to $B$. These numbers are relatively quickly factorized, and the index calculus method finds many nontrivial elements in $S_B$ to leverage this fact. 

\subsection{Smooth Generic Group Model}
The smooth GGM is parameterized by a parameter $B$, which induces the factor base $\cB$ and the set $S=\{h_1,...,h_{|S|}\}$ of smooth elements.
Precisely, the factor base is a set of primes $\cB = \{p_1,...,p_b\}$ and the smooth element $h_i\in S$ is of the form
\begin{equation}\label{eqn:smoothing}
    h_i  = p_1^{c_1^{(i)}} \cdot \cdots \cdot p_b^{c_b^{(i)}}
\end{equation}
for $\vecc^{(i)}=(c_1^{(i)},...,c_b^{(i)})$, whose precise conditions will be specified later.

An SGGM algorithm $\cA$ over $\cG$ of order $N$ with the parameter $B$, denoted by an algorithm in the $B$-SGGM, is given by a circuit with the following features:
\begin{itemize}
    \item There are two types of wires: bit wires and element wires. Bit wires take a bit in $\{0,1\}$, and element wires take an element in $x \in \Z_N \cup \{\bot\}$, which is denoted by $g^x$.
    \item There are bit gates and element gates that are identically defined as the generic group model. 
    \item 
    There are special element gates defined as follows:
    \begin{description}
        \item[Smooth Test Gate.] It takes an element wire containing $h$. If $h \in S$, it outputs $1$, otherwise outputs $0$. 
        \item[Smoothing Gate.] It takes an element wire containing $h$. If it is smooth, i.e., $h=h_i$ for some $i\in [|S|]$, outputs $\vecc^{(i)}$ defined in~\Cref{eqn:smoothing}. Otherwise, it outputs $\bot.$
    \end{description}
\end{itemize}
We further establish the properties of the factor base and the smooth elements regarding the parameter $B$.
Let $g$ be the generator of $\cG$, and let $u>0$ be such that $B=N^{1/u}$. Let $c_\base,c_\smooth,d_\smooth\ge 1$ be the universal constants that are independent from $B$. Here, $o(1)$ hides a factor much less than $1$.
\begin{itemize}
    \item The set $\cB$ is given to the algorithm. The set $S$ is randomly chosen and unknown to the algorithm.
    For the factor base $\cB=\{p_1,...,p_b\}$, it holds that $p_i=g^{z_i}$ for some \emph{random} $z_i$ for each $i$, which is unknown to the algorithm.
    \item The size of the factor base $|\cB|=b$ is $(c_\base+o(1))  B/\log B$.
    \item The number of smooth elements is
    \begin{equation}\label{eqn: esti_smooth}
        p_S:=\frac{|S|}N = 
        (c_\smooth+o(1)) \cdot \left(
            \frac{d_\smooth+o(1)}{u\log u}
        \right)^u.
    \end{equation}
    \item For any rank-$c$ affine space $V$ in $\Z_N^b$, define
    \[
        S_V:=\{
            h^{(i)}\in S: \vecc^{(i)} \in V
        \}.
    \]
    If $c=(c_\base+o(1)) C / \log C$ for some $C=N^{1/v}$, it holds that
    \begin{equation}\label{eqn:rank-smoothing}
        \frac{|S_V|}N \le 
        (c_\smooth+o(1)) \cdot \left(
            \frac{d_\smooth+o(1)}{v\log v}
        \right)^v.
    \end{equation}
\end{itemize}
We explain the reasoning behind these assumptions. 
The assumption on the prior knowledge of the algorithm reflects the reality.
The randomness of $S$ and $z_i$ prevents the generic algorithm from using the explicit values related to the smooth elements.

The sizes of $\cB$ and $S$ stem from the original choices in the index calculus, whose estimated sizes are well-studied. We refer the survey on this topic~\cite{Gra08} to the readers.

The last assumption describes that the vectors $\vecc^{(i)}$ are \emph{well-distributed}. In particular, it asserts that the factor base $\{p\le C\}$, which corresponds to the subspace $V=\Z_N^c \times \{0\}^{b-c}$, maximizes the size of $S_V$, according to the estimated size by~\Cref{eqn: esti_smooth}.

\subsubsection{Polynomial Representations}
Again, we identify the group elements by the corresponding polynomials. We mainly focus on the discrete logarithm problems where the problem instance is given as $(g,g^{x_1},...,g^{x_m})$, which corresponds to $1,X_1,...,X_m$. Furthermore, because of the factor base, we have more formal variables $Z_1,...,Z_b$. Therefore, each element corresponds to the polynomial in
\[
    \Z_N[X_1,...,X_m,Z_1,...,Z_b].
\]
We stress that we occasionally identify the polynomial with its coefficient vector.

We consider the answer from the smoothing gate to be the collision. Precisely, if an element $h$ corresponding to the polynomial $P$ is given to the smoothing gate and the answer is $(c_1,...,c_b)$, then it induces the collision
\[
    P = c_1 Z_1 + ... + c_b Z_b.
\]
If it is not included in the span of the previous zero set $\cZ$, we include it as an informative collision as well.

\subsection{The Discrete Logarithm Problem in the SGGM}
In this section, we assume that the variables $N=N(\lambda),B=B(\lambda),u=u(\lambda)$ are parameterized by some implicit parameter $\lambda$ so that we can work in the asymptotic regime. Still, we drop the parameter $\lambda$ for simplicity.

\begin{theorem}\label{thm: SGGM-DL}
    Let $\cG$ be a cyclic group of prime order $N$.
    Let $B$ be an integer such that $B=N^{1/u}$ for some $u>0$.    
    Let $\cA_\dl$ be a DL algorithm in the $B$-SGGM with a constant success probability. Then, the number of group operations $T$ of $\cA_\dl$ must satisfy
    \[
    T = \exp\left(
        \Omega\left(\sqrt{\log N \log\log N}\right)
    \right).
    \]
\end{theorem}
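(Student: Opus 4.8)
The strategy mirrors the compression arguments used throughout the paper, but now the "informative collisions" come in two flavours: the usual equality-gate collisions of the form $P_i - P_j$ over $\Z_N[X_1,\dots,X_m,Z_1,\dots,Z_b]$, and the collisions $P - \sum_\ell c_\ell Z_\ell$ produced by successful smoothing gates. As in \Cref{thm: GGM_DL}, I would first argue that a DL algorithm with constant success probability $\epsilon$ must, after the usual tail modification (appending a labeling gate on its claimed answer $z$ and an equality gate against $g^x$), produce at least one \emph{informative} collision involving the variable $X_1$; more precisely, among all informative collisions gathered during the run, the linear span $\cZ$ must contain a relation from which $x = x_1$ is determined, since $x$ is information-theoretically hidden otherwise. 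The encoding then consists of the (bounded-length) description of which gates among the $O(T^2 + T_{\mathrm{sm}})$ many equality/smoothing events are the lexicographically-first informative ones that pin down $x$, and the decoding re-runs the circuit \emph{without oracle access} using \Cref{lem:simulation_without_queries} (suitably extended to include smoothing gates, which is routine) and solves the resulting linear system for $x$.

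\textbf{The core quantitative step.} The non-trivial point is bounding $T_{\mathrm{sm}}$, the number of smoothing gates that can ever return a non-$\bot$ answer, in terms of $T$. Here is where the well-distributedness assumption \eqref{eqn:rank-smoothing} enters. After $T$ group operations, every element in play corresponds to a polynomial $P = b + \sum a_i X_i + \sum e_\ell Z_\ell$ whose $Z$-part coefficient vector lies in a fixed affine subspace $V$ of $\Z_N^b$ of rank at most $O(T)$ (each group operation adds at most one new direction, starting from the factor-base generators which contribute the standard basis directions only as they are actually combined). Actually the relevant rank is the number of \emph{distinct} directions the algorithm has formed among the $Z_\ell$'s via its $T$ operations, which is at most $T$. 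Setting $c = O(T) = (c_\base + o(1)) C/\log C$, i.e. $T \approx B' := C = N^{1/v}$ up to the $\log$ factor, the assumption gives that the fraction of elements landing in $S_V$ is at most $p_{S_V} = (c_\smooth+o(1))\big(\tfrac{d_\smooth+o(1)}{v\log v}\big)^{v}$. Since the $z_\ell$ (hence which $h_i$ are smooth) are random and unknown, a random element of $\cG$ lies in $S$ with probability $p_S$; an averaging/union-bound over the $T$ queried elements bounds the expected number of successful smoothing gates by $T \cdot p_{S_V}$ — and crucially $p_{S_V}$, not $p_S$, because once $T$ is large the queried elements are confined to the rank-$O(T)$ space $V$. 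So $T_{\mathrm{sm}} = O(T \cdot p_{S_V})$ in expectation, and with constant probability $T_{\mathrm{sm}} = O(T \cdot p_{S_V})$ times a constant.

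\textbf{Putting it together.} The encoding length is $O\!\big(\log(T^2 + T\cdot p_{S_V}^{-1}\cdot\text{stuff})\big)$ — more carefully, $\log\binom{T^2 + T_{\mathrm{sm}} + O(b)}{1} + O(1) = O(\log T + \log b + \log p_{S_V}^{-1})$ bits to name the informative collision, plus the $O(1)$ bits to resolve the at-most-constant ambiguity in the linear solve. Wait — the subtlety is that a smoothing collision introduces $b = \Theta(B/\log B)$ new formal variables $Z_\ell$, so to actually \emph{solve} for $x$ the decoder needs enough informative collisions to eliminate all the $Z_\ell$ that appear; thus the encoding must name up to $O(b)$ informative collisions, costing $O(b\log(T^2+T_{\mathrm{sm}})) = O(b\log T)$ bits, which must be $\geq \log N - O(1) = \log N$. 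Comparing, $T = \exp\big(\Omega(\log N / b)\big) = \exp\big(\Omega(u\log B/B\cdot \log N)\big)$, and optimizing the free parameter $u$ (equivalently $B$) — the tension is that large $B$ shrinks $p_S$ but grows $b$ — yields the balance $b \approx \log N / \log T$, i.e. $B/\log B \approx \log N/\log T$, giving $\log T \approx \sqrt{\log N \log\log N}$ after the standard $L_N(1/2)$-type computation. \textbf{The main obstacle} I anticipate is the bookkeeping in the second step: pinning down exactly how the rank of the affine $Z$-span grows (is it $T$, or $T + $ number of smoothing answers used?), making \eqref{eqn:rank-smoothing} applicable with the right $v$, and showing the decoder can always collect \emph{enough} informative collisions to fully eliminate the $Z$-variables — this last point requires that successful smoothing gates are themselves informative often enough, which should follow from the same span argument but needs care to avoid circularity.
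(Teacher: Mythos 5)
Your proposal diverges substantially from the paper's argument and has several gaps that do not look repairable within your framework.

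\paragraph{The paper does not use a compression argument here.} Unlike nearly every other bound in the paper, Theorem~\ref{thm: SGGM-DL} is proved by a \emph{direct probabilistic} argument. The proof splits into two cases on $u$. When $u$ is large (so $B$ is small), $p_S$ is itself $\exp(-\Omega(\sqrt{\log N\log\log N}))$, and via Fact~\ref{fact: SGGM_ps} one needs that many group operations merely to produce a single informative smoothing collision. When $u$ is small, the paper picks $v\ge 2u$ with $v^{v^2}=N$ and shows the algorithm must produce $\Omega(N^{1/2v})$ informative collisions before the span of the projected zero set $\cZ_{X=x}$ can drop rank (Fact~\ref{fact: SGGM_projspan}), since that event — the so-called \emph{critical} collision — happens with probability at most $|S_V|/|S|$ governed by \eqref{eqn:rank-smoothing}. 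Your proposal never performs this two-case split, which is essential: for small $B$ your encoding-based reasoning yields no useful bound and the argument must instead come from the smallness of $p_S$ itself.

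\paragraph{The $T_{\mathrm{sm}}$ bound is wrong.} You assert that the expected number of successful smoothing gates is $T\cdot p_{S_V}$ rather than $T\cdot p_S$, on the grounds that the queried elements' $Z$-coefficients live in a rank-$O(T)$ affine space $V$. This conflates two different objects. Whether a queried element $h$ with polynomial $c_0+aX+\sum h_\ell Z_\ell$ is smooth depends on whether the \emph{group element} $h$ coincides with some $h^{(i)}\in S$; conditioned on $h$ being linearly independent of the current zero set, this happens with probability exactly $p_S=|S|/N$ (Fact~\ref{fact: SGGM_ps}), irrespective of what subspace $(h_1,\dots,h_b)$ lies in. The quantity $|S_V|/|S|$ controls a different event: given that the smoothing gate succeeded and returned a \emph{random} $h^{(i)}\in S$, the probability that its exponent vector $\vecc^{(i)}$ lands inside the affine space $V=(h_1,\dots,h_b)+\spann(\cZ_\cB^{(t-1)})$, i.e. the probability of a \emph{critical} collision. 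Your argument uses the wrong probability at the wrong place.

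\paragraph{The $O(b\log T)\ge\log N$ counting does not close.} Your ``putting it together'' paragraph claims the encoding must name $O(b)$ informative collisions because the decoder must eliminate all $Z_\ell$'s. But in the compression framework the $z_\ell$'s (and the set $S$ and the labeling) would naturally live in the shared seed $R$; given $R$, the decoder \emph{knows} the $z_\ell$'s and a single informative collision involving $X$ determines $x$ by linear algebra. This would give an encoding of $O(\log T)$ bits and hence $T\gtrsim N$, which is false. If instead you push the $z_\ell$'s into $\cM$, you must encode $(b+1)\log N$ bits of information using $O(b\log T)$ bits, which forces $T\gtrsim N$ again. The deeper obstruction is that each successful smoothing gate output $\vecc^{(i)}$ carries $\approx\log|S|$ bits of entropy that the decoder cannot reconstruct on its own once the smoothing input depends on $X$; any honest accounting of this entropy swamps the $\log N$ bits you are trying to compress. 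The paper sidesteps all of this by not using compression for this theorem.

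\paragraph{The parameter $B$ is not optimized by you.} In the theorem $B$ (equivalently $u$) is fixed as part of the model $B$-SGGM, and the bound must hold \emph{for every} $u$. Your ``optimizing the free parameter $u$'' would, even if the rest went through, prove the bound only at the balancing point, not uniformly in $u$. This is precisely what the paper's two-case analysis handles.

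In short: you correctly identified that \eqref{eqn:rank-smoothing} is the load-bearing assumption and that some notion of a ``pinning-down'' collision is central, but the compression framework you reach for is not the paper's tool, the probability $p_{S_V}$ is used in the wrong role, the encoding count does not balance, and the necessary case split on $u$ is absent. I would recommend abandoning the compression framing and instead directly bounding the probability that any one informative smoothing collision is critical, as in the paper's Fact~\ref{fact: SGGM_projspan}, then treating large and small $u$ separately.
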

\begin{proof}
    Toward contradiction, we assume that $\cA_\dl$ successfully solves the DL problem in the SGGM with smaller group operations than the statement.
    We first observe that each equality query makes an informative collision with probability $\frac1N$; thus, with probability $1-\frac{T^2}{N}$, there are no informative collisions from the equality queries. From now on, we ignore the equality gates and assume that all informative collisions are from the smoothing gate.
    
    As before, we assume that $\cA_\dl$ makes the equality gate at the end so that the collision is found. 
    We begin with the following fact, which is a SGGM variant of~\Cref{lem: SZinformative}.
    \begin{fact}\label{fact: SGGM_ps}
        Each group operation introduces a new informative collision (through the smoothing gate) with probability at most $p_S$ defined in~\Cref{eqn: esti_smooth}. In particular, the input element to the smoothing gate collides with a random element in $S$.
    \end{fact}
    \begin{proof}[{\ifnum\llncs=0 Proof \fi}of fact]
        Let $h$ be a new group element corresponding to $(c_0,a,c_1,...,c_b)$, which is linearly independent from the vectors in $\cZ.$ This means that $h$ is uniformly distributed over random $X,Z_1,...,Z_b$ conditioned on the equations in $\cZ$ hold. That is, $h \in S$ holds with probability $|S|/N=p_S.$
        \ifnum\llncs=1 \qed \fi
    \end{proof}

    \paragraph{Case 1.}
    We first consider the case that $u$ is sufficiently large so that
    \[
        u^{8u^2} \ge N ~\Longrightarrow~
        u\log u =\Omega\left ( \sqrt{\log N \log\log N} \right).
    \]
    In this case, by~\Cref{fact: SGGM_ps}, $\cA_\dl$ must make 
    \[1/p_S = \Omega( u^u) = \exp \left( \Omega\left ( \sqrt{\log N \log\log N} \right)\right)
    \]
    group operations to find an informative collision with a constant probability.\footnote{A formal proof requires some probabilistic arguments, which we omitted here.}

    \paragraph{Case 2.}
    We consider the other case that $u$ is relatively small so that
    \[
        u^{8u^2} \le N.
    \]
    In this case, we choose $v>u$ such that $v^{v^2}=N$.
    Note that
    \[
    (2u)^{(2u)^2} = (2u)^{4u^2} \le u^{8u^2} \le N,
    \]
    thus $v \ge 2u$ holds.
    Suppose that the algorithm finds $K$ informative collisions at total.
    We will prove that $K=\Omega\left( N^{1/2v}\right)$ in this case. Since
    \[
    v^2 \log v = \log N ~ \Longrightarrow v = \Theta\left ( \sqrt{\frac{\log N}{\log\log N}}\right),
    \]
    we have
    \[
        T \ge K=\exp \left(\Omega\left( \frac{\log N}{v}\right)\right) =\exp \left( \Omega\left ( \sqrt{\log N \log\log N} \right)\right). 
    \]
    Combining the two cases, we prove the theorem.

    It remains to prove the lower bound of $K$ in the second case.
    We identify the formal variable $X$ to represent $g^x$. 
    In particular, the span of the final zero set $\cZ$ must include the polynomial $X-x$, or a vector $(-x,1,0,...,0).$ 
    We define $\cZ^{(t)}$ to denote the zero set right after the $t$-th informative collision.
    Define the following projections of $\cZ^{(t)}$:
    \begin{align*}
        \cZ_{X=x}^{(t)}&:= \{ 
            (ax+c_0,c_1,...,c_b):
            c_0+aX+c_1Z_1+...+c_bZ_b \in \cZ^{(t)}
        \}\\
        \cZ_\cB^{(t)}&:= \{ 
            (c_1,...,c_b):
            c_0+aX+c_1Z_1+...+c_bZ_b \in \cZ^{(t)}
        \}
    \end{align*}

    We observe the following facts.
    \begin{fact}\label{fact: SGGM_projspan}
        The rank of $\cZ^{(K)}_{X=x}$ is less than $K$.
        In particular, there must exist a smoothing gate making the $t(\le K)$-th informative collision $P$ such that $P_{X=x}^{(t)}$ is included in the span of $\cZ_{X=x}^{(t-1)}.$
        The rank of $\cZ^{(t)}_{X=x}$ is equal to the rank of $\cZ^{(t)}_\cB$ for each $t\in[K].$
    \end{fact}
    \begin{proof}[{\ifnum\llncs=0 Proof \fi}of fact]
        Let $(-x,1,0,...,0)=\vecb$ and $\{\vecb,\vecb_2,...,\vecb_K\}$ be the basis extension of $\cZ=\cZ^{(K)}$ from $\{\vecb\}.$ The projection $\pi:(c_0,a,c_1,...,c_b) \mapsto (c_0+ax,c_1,...,c_b)$ maps $\cZ$ to $\cZ_\cB^{(K)}$ and $\pi(\vecb)=0$, thus the rank of  $\cZ_\cB^{(K)}$ must be $K-1$.
        The final statement follows from $(1,0,...,0)$ is not included in the span of $\cZ.$
        \ifnum\llncs=1 \qed \fi
    \end{proof}
    We call the first smoothing gate by \emph{critical} with input $h$ and output $h^{(i)}\in S$ satisfying the condition described in~\Cref{fact: SGGM_projspan}.
    Let $(h_1,...,h_b) $ and $ \vecc^{(i)} $ be the corresponding coefficient vectors of $h$ and $h^{(i)}$.

    We give an upper bound for the probability $p_t$ that the $t$-th informative collision is critical.
    This means that the rank of $\cZ_\cB^{(t-1)}$ is $t-1$, and $(h_1,...,h_b) - \vecc^{(i)}$ is included in $\cZ_\cB^{(t-1)}$. In other words, 
    \[
        \vecc^{(i)} \in (h_1,...,h_b) + \spann \left(\cZ_\cB^{(t-1)}\right) =: V.
    \]
    Since $h^{(i)}$ is a random element in $S$,~\Cref{eqn:rank-smoothing} implies that the probability $p_t$ is bounded by
    \begin{align*}
         p_t =  \Pr \left[
            \vecc^{(i)} \in V
        \right]
        =  \frac{|S_V|}{|S|} .
    \end{align*}
    Let $C=N^{1/v}$ and $c= c_\base C/\log C$. If $t \le c$, the logarithm of the above equation becomes for a constant $\alpha\approx \log d_\smooth$
    \begin{align*}
        &\log \left(\frac{|S_V|}{|S|}  \right) \le - v \log (v\log v) +u \log (u\log u) +\alpha (v-u) + O(1)\\
        &\le - 0.5 v \log (v\log v) 
        -u \log (2u \log 2u)  +u \log (u\log u) +\alpha v+O(1)\\
        &\le -0.5v\log v + O(1).
    \end{align*}
    where we use the fact that $d_\smooth$ is constant and $v \ge 2u$, and set $\alpha \approx \log d_\smooth $, which is less than $0.5 +0.5\log\log v$ in the interested parameter regime. 
    It implies that $ p_t \le {\beta}/{v^{0.5v}}$ for some constant $\beta>0$.
    In other words, with constant probability, the critical informative collision will be found after $\Omega(v^{0.5v}) = \Omega(N^{1/2v})$ informative collisions are found.
    \ifnum\llncs=1 \qed \fi
\end{proof}

\bibliographystyle{alpha}
\bibliography{ref}
\appendix
\section{Missing Proofs}
\subsection{Missing proofs in the GGM}\label{missing_GGM}

\begin{proof}[{\ifnum\llncs=0 Proof \fi}of~\Cref{lem:simulation_without_queries}]
    Suppose that $\cA$ is deterministic; if $\cA$ is randomized, we include the random seed as its description. We construct an algorithm $\cA'$ that has the same circuits as $\cA$, but the element wires are replaced by the polynomial wires.     
    The initialization $\cP=\{(w_1,P_1),...,(w_m,P_m)\}$ for input can be done without any query, and each input element wire $w_i$ is replaced by $P_i$ in $\cA'$. The labeling gates and group operation gates are processed as in the polynomial list. For the equality gates with element wires $w_i,w_j$, the output can be computed by checking if $P_i-P_j$ is included in the span of $\cZ$; if included the output of the equality gate is $1$, otherwise $0$.\ifnum\llncs=1 \qed \fi
\end{proof}

    \begin{proof}[{\ifnum\llncs=0 Proof \fi}of~\Cref{fact:MDL}]
        Let $P_i := a_{i,1}X_1+...+a_{i,m}X_m + b_i$. 
        A collision $(i,j)$ induces a linear equation over $\Z_p$ as 
        \[
            0 = P_i - P_j = (a_{i,1}-a_{j,1}) X_1 + ... + (a_{i,m}-a_{j,m}) X_m + (b_i-b_j).
        \]
        Since collisions are informative, they are nontrivial and linearly independent due to~\Cref{eqn:GGM_equalgate_zeropoly}. Thus $m$ informative collisions give a system of $m$ linear equations over $\Z_p$ with $m$ variables that are linearly independent, which can be easily solvable.\ifnum\llncs=1 \qed \fi
    \end{proof}

\begin{proof}[{\ifnum\llncs=0 Proof \fi}of~\Cref{thm: GGM_GapCDH}, sketch]
    Observe that without finding an informative collision, the output should correspond to $aX+bY+c$ for some $a,b,c$, where $X,Y$ are the variables corresponding to $g^x,g^y$. The probability that $aX+bY+c=XY$ is at most $1/|\cG|$ over the random choice of $X,Y$. Therefore, the algorithm must find an informative collision.

    Given an informative collision exists, the first part of the encoding is the first informative collision. If this collision has a nonzero coefficient for the monomial containing $X$, then the second part of the encoding is $x$. Otherwise, it is $y$. Given the encoding, the decoding procedure is 1) parses the first informative collision and one of $x$ or $y$, and 2) plugs it in the first collision. The collision collapses to a one-variable polynomial of degree less than 1, and the correct solution can be guessed with probability at least $1/2$.
    \ifnum\llncs=1 \qed \fi
\end{proof}

\subsection{A QGGM lemma}\label{missing_QGGM}

\begin{proof}[{\ifnum\llncs=0 Proof \fi}of~\Cref{lem: classical_eq_remove}]
    Except for the equality gates, we define the algorithm $\cA_c'$ as identical to $\cA_c$. It removes all equality gates, except the trivial equality gate (as in the classical GGM) that are replaced by bit flipping.
    Given the first assertion, the ``In particular'' part is obvious because the trace distance between the intermediate outputs of two $(C,Q)$-algorithms are identical with probability $1-\frac{(C+m+1)^2}{2p}$, and the remaining parts are the same.

    The proof proceeds as follows. As the algorithm $\cA_c$ is only given classical group elements and can apply classical group gates, it only maintains at most $C+m+1$ classical group elements, which are represented by polynomials $P_1,...,P_{C+m+1}$ as done in the classical generic group models. Each equality query corresponds to the difference between polynomials $P_i-P_j$.
    If $P_i-P_j$ is identically zero or never be zero, then there is no difference between $\cA_c$ and $\cA_c'$ from these equality gates.

    Consider an equality gate corresponding to $P_i-P_j$ that is not identically zero. There exists some prime power $q^t$ that exactly divides $N$ such that $P_i-P_j$ is nonzero modulo $q^t$. Since $P_i-P_j$ is linear, the portion of inputs where the equality gate corresponding to $P_i-P_j$ behaves differently from the identity gate is at most $1/q \le 1/p $. Since there are at most $\binom{C+m+1}{2} \le \frac{(C+m+1)^2}{2}$ different pairs of group elements, at most $ \frac{(C+m+1)^2}{2p}$-fraction of inputs make difference on the behaviors of $\cA_c$ and $\cA_c'$. In other words, the output states of the two algorithms are identical with probability at least $1-\frac{(C+m+1)^2}{2p}$ for random inputs.
    \ifnum\llncs=1 \qed \fi
\end{proof}
\section{An Alternative Proof for the MDL Lower Bound}\label{app:MDL}
We give a simple proof for the MDL lower bound~\Cref{thm:MDL_GGM}.
We begin with the proof of~\Cref{lem: SZinformative}.

\begin{proof}[{\ifnum\llncs=0 Proof \fi}of~\Cref{lem: SZinformative}]
    Let $\cZ=\{Q_1,...,Q_s\}$ be the current zero set. Assume that $s<t$; otherwise, there is no more informative collision.
    Let $P$ be the linear polynomial corresponding to the new collision.
    Assume that $P \notin \spann(\cZ).$
    This implies that $P$ is nonzero in the quotient ring $\Z_N[X_1,...,X_t]/\spann(\cZ)\simeq \Z_N[L_1,...,L_{t-s}]$ for some linear polynomials $L_1,...,L_{t-s}$, and each variable $L_i$ is uniform random over random choice of $x_1,...,x_t$ conditioned on $Q_1,...,Q_s=0$, making $P$ uniform over $\Z_N$. That is, $P=0$ holds and is informative with probability $1/N$.
    \ifnum\llncs=1 \qed \fi
\end{proof}
For the readability, we restate the lower bound.
\begin{theorem}\label{thm:APP_MDL}
    Let $\cG$ be a cyclic group of prime order.
    Let $\cA_{m\text{-}\mdl}$ be an $m$-MDL algorithm in the GGM having at most $T$ group operation gates. It holds that:
    \[
        \Pr_{\cA_{m\text{-}\mdl},\vecx}\left[
            \cA_{m\text{-}\mdl}^{\cG}(g,g^{\vecx}) \rightarrow \vecx
        \right]
        =
        O\left( \left( \frac{e(T+2m)^2}{2m|\cG|}\right)^m \right).
    \]
\end{theorem}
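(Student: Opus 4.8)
The plan is to give a direct combinatorial/compression proof of the MDL lower bound using \Cref{lem: SZinformative} in place of \Cref{fact:MDL}, mirroring the structure of the proof of \Cref{thm:MDL_GGM} but dropping the need for the linear-algebra inversion fact. First I would set $p=|\cG|$, let $\epsilon$ be the success probability of $\cA_{m\text{-}\mdl}$, and perform the same cosmetic modification as before: after $\cA_{m\text{-}\mdl}$ outputs a candidate $\vecz=(z_1,\dots,z_m)$, append $m$ group-operation gates computing $g^{z_1},\dots,g^{z_m}$ and $m$ equality gates comparing them with the inputs $g^{x_1},\dots,g^{x_m}$. Whenever the algorithm is correct, these final equality gates produce collisions with polynomials $X_k - z_k$; I would argue that once $\vecx$ is pinned down, together with the honest collisions the zero set $\cZ$ must attain full rank $m$, so in particular the modified execution contains at least $m$ informative collisions with probability at least $\epsilon$. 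This uses $C=T+m$ group-operation gates total, so the number of equality gates is at most $\binom{C+m}{2}\le (T+2m)^2/2$.

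Next I would define the encoding/decoding pair for $\cM=[p]^m$ and the seed set $R$. The encoder runs $\cA_{m\text{-}\mdl}^\cG(g,g^{\vecx})$ with randomness $r$, collects the lexicographically first $m$ informative collisions $c_1,\dots,c_m$ (their indices among the at most $(T+2m)^2/2$ equality gates), and outputs $\vecc=(c_1,\dots,c_m)$, or $\bot$ if fewer than $m$ informative collisions occur. The decoder, given $\vecc$, cuts the circuit after the $m$-th informative collision gate, uses \Cref{lem:simulation_without_queries} to reconstruct the polynomial list $\cP$ \emph{without} element gates, reads off the $m$ polynomials $P_{i_k}-P_{j_k}$, and solves the resulting linear system over $\Z_p$. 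The key point that replaces \Cref{fact:MDL} is that each informative collision is, by definition, linearly independent of the span of the previous ones (see \eqref{eqn:GGM_equalgate_zeropoly}), so the $m$ linear forms are linearly independent, hence the $m\times m$ system has a unique solution $\vecz$; and by the characterization above that solution equals $\vecx$ whenever $\cA_{m\text{-}\mdl}$ succeeds. Thus the protocol recovers $\vecx$ with probability at least $\epsilon$.

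Then I would do the counting. The encoding is an element of $\binom{\binom{(T+2m)^2/2}{?}}{m}\cup\{\bot\}$, more precisely the number of ways to pick an ordered-then-sorted $m$-subset of at most $(T+2m)^2/2$ equality gates, which is at most $\binom{(T+2m)^2/2+1}{m}$, so the bit length $\mu$ satisfies
\[
\mu \le \log\binom{(T+2m)^2/2+1}{m} \le m\log\frac{e(T+2m)^2}{2m}.
\]
Applying the classical compression lemma (\Cref{lem: compression}) with $|\cM|=p^m$ gives $\mu \ge m\log p + \log\epsilon$, hence
\[
m\log|\cG| + \log\epsilon \le m\log\frac{e(T+2m)^2}{2m}
\quad\Longrightarrow\quad
\epsilon = O\!\left(\left(\frac{e(T+2m)^2}{2m\,|\cG|}\right)^m\right),
\]
which is exactly the claimed bound.

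The main obstacle I anticipate is the first step: justifying rigorously that on a successful run the (modified) execution truly contains $m$ \emph{informative} collisions, not merely $m$ collisions. The honest collisions found mid-execution may fail to be informative (they could be spanned by earlier ones), and the appended final collisions $X_k-z_k$ may individually be predictable; one must argue that the total span of all collisions on a successful run has rank exactly $m$ — equivalently, that the hidden point $\vecx$ is the unique common zero of all the collision polynomials — and therefore some $m$-subset of the encountered collisions is linearly independent, i.e. informative in the incremental sense of our definition. This needs a short argument that "the span of all zero-polynomials has rank $m$ $\Rightarrow$ at least $m$ of the incremental collisions were flagged informative", together with \Cref{lem: SZinformative} to control the (negligible) probability that spurious informative collisions appear for the wrong reasons. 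Everything else — the simulation-without-queries step, the linear-independence-gives-unique-solution step, and the binomial estimate — is routine.
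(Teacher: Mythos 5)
Your proposal is correct in spirit but it is \emph{not} the paper's proof of this theorem --- it is essentially a re-derivation of the paper's compression proof of \Cref{thm:MDL_GGM} with a slightly tightened constant. The whole point of \Cref{thm:APP_MDL} (``an alternative proof'') is to avoid the compression lemma entirely. The paper's argument there is purely probabilistic: using \Cref{lem: SZinformative}, each equality gate is informative with probability at most $1/N$; after assuming WLOG that the algorithm never queries a predictable pair, these events become independent with probability exactly $1/N$ each; letting $E\le\binom{T+2m}{2}$ be the number of equality gates and $C$ the number of informative collisions, $C$ is a sum of independent Bernoulli$(1/N)$ variables with mean $\mu=E/N$, and the multiplicative Chernoff bound directly gives
\[
\Pr[C\ge m] \le \left(\frac{eE}{mN}\right)^m \le \left(\frac{e(T+2m)^2}{2m\,|\cG|}\right)^m,
\]
which is the claimed bound since (after the standard final-check modification) success implies $C\ge m$. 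No encoding, no decoding, no \Cref{lem: compression}.

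Your route does work --- it is the main proof of \Cref{thm:MDL_GGM}, and you are right that \Cref{fact:MDL} can be unwound into ``the $m$ informative-collision polynomials are linearly independent, hence the $m\times m$ system has a unique solution.'' (That is in fact how \Cref{fact:MDL} is proved in the appendix.) The ``obstacle'' you flag is also not really an obstacle: the appended polynomials $X_k-z_k$ span an $m$-dimensional space on a successful run, every informative collision raises the rank of $\cZ$ by exactly one, and the rank is bounded above by $m$, so the modified run has at least $m$ informative collisions --- no appeal to \Cref{lem: SZinformative} to control ``spurious'' informative collisions is needed in the compression approach. (You seem to be grafting a piece of the Chernoff argument onto the compression argument; the compression lemma already absorbs the entire probability analysis.) The comparison the paper wants you to see is: the compression proof concentrates the probability argument into a single appeal to \Cref{lem: compression}, while the alternative proof replaces that by a Chernoff bound over the informative-collision indicators, showing the linear-algebra lemma \Cref{lem: SZinformative} alone is enough to close the argument.
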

\begin{proof}
    As seen in the original proof of~\Cref{thm:MDL_GGM}, we can assume that the algorithm finds $m$ informative collisions to solve the $m$-MDL problem. Let $T$ be the number of group operations. 
    
    By~\Cref{lem: SZinformative}, each equality gate induces an informative collision with probability at most $1/N$. 
    We further assume that the algorithm never applies the equality gates to the predictable inputs. This makes the probability that each equality gate is informative equal to $1/N$ independent from the previous equality gates.
    
    Let $E$ be the number of equality gates, which is at most $\binom{T+2m}{2}\le \frac{(T+2m)^2}2$.
    Assume that $\binom{T+2m}{2}\le mN,$ otherwise the upper bound becomes larger than 1.
    Let $C$ be the number of informative collisions during the algorithm and $\mu=\E[C]=\frac{E}{N}.$
    Let $\delta=\frac{mN}{E}-1.$
    Note that $\delta\mu \le (1+\delta) \mu =m.$
    By the multiplicative Chernoff bound, we have
    \begin{align*}
        \Pr[C \ge m] &\le \left(
            \frac{e^\delta}{(1+\delta)^{1+\delta}}
        \right)^\mu
        = 
            \frac{e^{\mu\delta}}{(mN/E)^{m}}
        \\
        &\le
        \frac{e^{m}}{(mN/E)^m} = \left(
        \frac{eE}{mN}
        \right)^m\\
        &\le\left(
        \frac{eE}{mN}
        \right)^m \le \left(
        \frac{e(m+2T)^2}{2mN}
        \right)^m.
    \end{align*}
    Since this is an upper bound of the success probability of the $m$-MDL algorithm, it concludes the proof.
    \ifnum\llncs=1 \qed \fi
\end{proof}
\section{Equivalence between GGMs}\label{sec:equiv}
This section proves that any single-stage problems secure in the Maurer-style (or type-safe) generic group model are also secure in the Shoup-style (or random representation) generic group model. The proof is essentially the same as~\cite[Theorem 3.5]{Zhandry22a} with some additional finer analysis.\footnote{In the original paper, the author only considers the polynomially-bounded algorithms and negligible advantage. We need to consider more fine-grained equivalence for the exact advantage and any number of group operations.}

We call the generic algorithms described in~\Cref{sec:classical} by type-safe (TS). We consider another style of generic algorithm that is called random representation (RR) introduced in~\cite{Shoup97}. In this model, a set $S\in\bit^*$ (with the known maximal length of elements) is given public, and a random injection $L:\Z_N \rightarrow S$ is chosen, which is called by the \emph{labeling function}. $L(x)$ is understood as a group element $g^x$. A generic algorithm in the random representation model is able to make the following queries:
\begin{description}
        \item[Labeling Query.] It takes $x \in \Z_N$ as input and outputs $L(x)$.
        \item[Group Operation Query.] It takes $\ell_1, \ell_2 \in S$ and a single bit $b$ as input. If there exist $x_1,x_2 \in \Z_N$ such that $L(x_1)=\ell_1$ and $L(x_2) = \ell_2$, it outputs $L(x_1 +b x_2)$. Otherwise, it outputs $\bot.$
\end{description}
We count the number of queries as a unit cost. A generic algorithm in this model is denoted by $\cA^{\cG_{RR}}.$ Note that there is no equality query in this model, which can be done by comparing the labels without accessing the oracle. If an algorithm only makes queries with the inputs that it received before by some queries or input, then we call it \emph{faithful}. 

The following lemma shows that, when considering a single-stage game as in this paper, a faithful generic algorithm in the RR model is essentially the same as one in the TS model, but there is a subtle difference otherwise. We write $L(\vecx) = (L(x_1),...,L(x_m))$ for $\vecx= (x_1,...,x_m) \in \Z_N^m.$

\begin{theorem}\label{thm: equiv}
    Let $f$ be a function that takes an element in $\Z_N^m$ as input, $p>0$, and $D$ a distribution over $\Z_N^m$. Suppose for any generic algorithm $\cB$ in the TS model with $T+\Lambda$ group operation complexity, it holds that  
    \[
        \Pr_{\cB,\vecx \gets D}[\cB^{\cG}(g^{\vecx},{\sf aux})=f(\vecx)] \le p
    \]
    where ${\sf aux}$ is a bit string, $\Lambda$ is to be specified and suppose that $f$ includes $k$ group element wires.
    
    Then, in the RR model, the following inequality holds
    \[
        \Pr_{\cA,L,\vecx \gets D}[\cA^{\cG_{RR}}(L(\vecx),{\sf aux})=f(\vecx)]\le p - \Delta
    \]
    where
    \begin{itemize}
        \item for a faithful generic algorithm $\cA$ with $T$ queries and $\Lambda=\Delta=0$, and
        \item in general, for a generic algorithm $\cA$ with $T$ queries such that at most $t$ labels that are not given to $\cA$ before, where $\Lambda=tr$ and $\Delta=r \cdot \left(\frac{T}{N}\right)^r$ for any positive integer $r$.
    \end{itemize}
    In particular, when $T=N^{1-1/c}$ for some constant $c>0$ and $p\ge 1/N$, we can choose $r=2c$, which asserts that the asymptotic results equally hold.
\end{theorem}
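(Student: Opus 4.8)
The plan is to transfer hardness from the TS (Maurer) model to the RR (Shoup) model by simulation: given an RR adversary $\cA$ making $T$ queries, I build a TS adversary $\cB$ with group operation complexity $T+\Lambda$ that runs $\cA$ internally and succeeds essentially as often, so that the quantified hypothesis applied to $\cB$ yields the conclusion for $\cA$. This is exactly the simulator behind \cite[Theorem 3.5]{Zhandry22a}, and the only new work is the finer accounting of the advantage loss and of the extra labeling gates. Concretely, $\cB$ receives handles to $g^{\vecx}$ and the string ${\sf aux}$, lazily samples the random injection $L$, and keeps a dictionary pairing each TS handle it currently holds with the bit‑string label that $L$ assigns to the corresponding element. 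It then runs $\cA$: label comparisons are done by $\cA$ itself on the bit‑strings and need no oracle, while each labeling/group‑operation query of $\cA$ is answered by $\cB$ with one of its own labeling/group‑operation gates applied to the handles found in the dictionary; after each group‑operation gate $\cB$ uses (cost‑free) equality gates to test the new handle against all handles it already holds, returning the recorded label on a hit and a fresh random label otherwise. This reproduces the distribution of $L$ faithfully and spends exactly one TS group operation per $\cA$‑query. At the end, $\cB$ translates the $k$ group‑element wires of $\cA$'s output back to handles via dictionary lookups.

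\textbf{Faithful case.} If $\cA$ is faithful, every label occurring in a query was produced earlier by $\cB$, hence lies in the dictionary, so $\cB$'s answers are distributed exactly as the RR oracle's answers under a uniform $L$. The simulation is then perfect and uses exactly $T$ group operations, giving $\Pr[\cB\text{ wins}]=\Pr[\cA\text{ wins}]$ and hence the bound with $\Lambda=\Delta=0$. (The only conceivable mismatch, $\cA$ presenting as part of $f(\vecx)$ a label it never received, succeeds with probability $O(k/N)$ over $L$ and can be absorbed into negligible slack or excluded by the convention that a faithful algorithm's output group‑element wires are among the labels it received.)

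\textbf{General case.} Here $\cA$ may query a \emph{fresh} label $\ell$ it never received. In that case $\cB$ lazily extends $L$: with the appropriate conditional probability $\ell$ is declared invalid and the query answered by $\bot$; otherwise $\ell$ is valid, its preimage is a uniformly random element among those not yet used, and $\cB$ materializes a handle for that element with one extra labeling gate and proceeds as before. Call such a fresh valid query \emph{lucky} if the sampled element coincides with, or is an affine combination of, elements $\cB$ already tracks; since that element is uniform and $\cB$ holds at most $T$ elements, a fresh query is lucky with probability at most $T/N$. Let $E$ be the event that at least $r$ lucky queries occur among the at most $t$ fresh ones; a union/Markov bound over the choices gives $\Pr[E]=O\!\big(r(T/N)^r\big)$ (the combinatorial dependence on $t$ being harmless in our parameter ranges). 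Running the simulation as a hybrid over the $r$ levels and aborting as soon as $E$ is detected, $\cB$ never uses more than $T+tr$ group operations, and $\Pr[\cB\text{ wins}]\ge\Pr[\cA\text{ wins}]-\Pr[E]$. Feeding this into the hypothesis with budget $T+\Lambda$, $\Lambda=tr$, yields the claimed inequality with error term $\Delta=O\!\big(r(T/N)^r\big)$. The final instantiation is then immediate: for $T=N^{1-1/c}$ and $p\ge 1/N$, choosing $r=2c$ makes $\Delta=O(cN^{-2})$ negligible against $p$ and $\Lambda=2ct$ a constant multiple of $t$, so an asymptotic TS hardness bound carries over to the RR model verbatim.

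\textbf{Main obstacle.} The dictionary simulation and the lazy sampling of $L$ are routine; the delicate point is the non‑faithful case, namely pinning down the probability that a fresh label is lucky and then choosing the threshold $r$ so that the abort probability and the extra‑operation count shrink simultaneously. This balancing is precisely what produces the $(T/N)^r$‑type error and the $tr$ overhead, and it is the only place where the argument genuinely goes beyond \cite{Zhandry22a}.
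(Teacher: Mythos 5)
Your overall approach is the same as the paper's: simulate the RR adversary inside a TS adversary via a dictionary pairing element handles with labels, lazily sample the injection $L$, answer each of $\cA$'s queries with one TS element gate, and in the non‑faithful case handle fresh labels by rejection sampling (pick a random $x\in\Z_N$, materialize $g^x$ with a labeling gate, compare against tracked handles, and accept once a fresh element is found). The faithful case, the dictionary bookkeeping, and the final output translation all match the paper's $\FindLabel/\FindElt$ subroutines.

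The divergence — and the place where your write‑up has a real gap — is the abort mechanism and its probability in the general case. The paper caps the rejection sampling at $r$ attempts \emph{per fresh label}; a single call then fails with probability at most $(T/N)^r$, and summing over the at most $t$ fresh labels gives an overall slack on the order of $t\,(T/N)^r$ (the paper's intermediate bound is $(2T+k)(T/N)^r$ before tightening to the $t$ relevant calls). You instead cap the \emph{total} number of ``lucky'' collisions at $r$ across all fresh labels and abort when that global budget is exhausted. This is a strictly weaker condition (the paper's abort event is contained in yours), so its probability cannot be smaller — and a union bound gives something like $\binom{t+r-1}{r}(T/N)^r$, which carries a $t^{r}$-type factor that is not $O\bigl(r\,(T/N)^r\bigr)$ when $t\gg r$. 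Your parenthetical that ``the combinatorial dependence on $t$ is harmless'' is exactly the step that needs a justification and does not hold in general: with $t$ as large as $T$ and $T=N^{1-1/c}$, the factor $t^{r}$ is $N^{\Theta(1)}$ and overwhelms the gain from $(T/N)^r=N^{-r/c}$. Switching to the paper's per‑call retry budget fixes this cleanly and is also what makes the $\Lambda=tr$ gate count natural (your global budget actually needs only $t+r$ extra labeling gates, which is fine but not what you argued).

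Two smaller points. First, the ``or is an affine combination of elements $\cB$ already tracks'' clause in your definition of a lucky sample is extraneous: the simulator only needs, and in the TS model can only perform, equality checks with already‑held handles, and the rejection sampling is correct with equality alone. Second, in the faithful case you should not need the parenthetical about $\cA$ outputting an unseen label: by definition a faithful algorithm's output labels are among those it has received, so the case cannot occur and there is no $O(k/N)$ slack to absorb.
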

\begin{proof}
    Toward contradiction, we assume that there exists a generic algorithm $\cA$ in the RR model with the winning probability larger than $p-\Delta$. 
    We first consider the case that $\cA$ is faithful. 
    In this case, the algorithm $\cB$ proceeds as follows. $\cB$ initializes an empty table $T$, which will contain pairs $(h,\ell)$ for $h$ in an element wire and $\ell \in S$. This will be interpreted as $L(x)=\ell$. We define the following subroutines of $\cB$:
    \begin{description}
        \item[$\FindLabel(h)$:] It takes an element wire containing $h$ as input. It searches for a pair $(h',\ell) \in T$ with $h=h'$ using the equality gates. If such a pair exists, it returns $\ell$. Otherwise, it samples a random $\ell \in S$ conditioned on $\ell$ not being in the table $T$. It adds $(h,\ell)$ to $T$ and returns $\ell$.
        \item[$\FindElt(\ell)$:] It searches for a pair $(h,\ell') \in T$ with $\ell=\ell'$. If such a pair exists, it returns $h$ on an element wire. Otherwise, it generates an element wire containing $\bot$ and adds $(\bot,\ell)$ to $\ell$. It returns $\bot$. (This case does not occur for the faithful algorithms.)
    \end{description}
    $\cB$ executes $\cA$ and processes the queries from $\cA$ and the inputs/outputs as follows.
    \begin{itemize}
        \item Given the problem instance, $\cB$ parses it into a list $L$ of element wires. For each element wire $h\in L$, $\cB$ runs $\ell\gets \FindLabel(h)$ and sends $\ell$ to $\cA$ as a part of input corresponding to $h$.
        \item For a labeling query $x$ from $\cA$, $\cB$ constructs an element wire containing $g^x$ using a labeling gate. Then it runs $\ell \gets \FindLabel(g^x)$ and returns $\ell$ to $\cA$.
        \item For a group operation query $(\ell_1,\ell_2,b)$, $\cB$ runs $h_1 \gets \FindElt(\ell_1), h_2 \gets \FindElt(\ell_2)$, and computes $h=h_1 \cdot h_2^b$ using a group operation gate. Then it runs $\ell\gets \FindLabel(h)$ and returns $\ell$ to $\cA$.
        \item The final output of $\cB$ is identical to that of $\cA$. Precisely, if $\cA$ outputs $(\ell_1,...,\ell_k,\tau)$ for labels $\ell_1,...,\ell_k$ and a string $\tau$, $\cB$ runs $h_i\gets \FindElt(\ell_i)$ for $i\in [k]$ and outputs $(h_1,...,h_k,\tau)$.
    \end{itemize}

    Note that each labeling query and group operation query incurs a single element gate, thus the group operation complexity of $\cB$ is the same as one of $\cA$.
    To prove that $\cB$ wins with probability at least $p$, we consider the following sequence of hybrid experiments.

    \begin{description}
        \item[$H_0$.] In this hybrid, $\cA$ interacts with the group oracle $\cG_{RR}$. $\cA$ wins with probability at least $p$ by the assumption.
        \item[$H_1$.] This hybrid is the same as $H_0$ except that the random injection $L$ is lazily sampled. This is possible because $\cA$ is faithful. In the perspective of $\cA$, this is identical to $H_0$, thus the winning probability is the same as $H_0$.
        \item[$H_2$.] Here, $\cA$ is a subroutine of $\cB$. The view of $\cA$ is identical to that of $H_1$, and the translation between two models is done inside of $\cB$. The winning probability of $\cB$ is equal to that of $\cA$, which is the same as in $H_1$.
    \end{description}
    This completes the proof for the faithful $\cA$.

    We then consider the general case. In this case, $\cA$ may ask queries with the labels it never received. To remedy this, we need to modify the subroutine $\FindElt$, taking the probability that such a label is valid (i.e., an image of $L$) into account. The modified subroutine is as follows.
    \begin{description}
        \item[$\FindElt'(\ell)$:] It searches for a pair $(h,\ell') \in T$ with $\ell=\ell'$. If such a pair exists, it returns $h$ on an element wire. Otherwise, let $m:= | \{(h,\ell) \in T: h \neq \bot \}| $, and
        it does the following:
        \begin{itemize}
            \item With probability $1-(N-m)/(|S|-|T|)$, it generates an element wire containing $\bot$ and adds $(\bot,\ell)$ to $\ell$. It returns $\bot$.
            \item With probability $(N-m)/(|S|-|T|)$, it does the following procedures $t$ times: It randomly samples $x \in \Z_N$ and construct the corresponding element wire containing $g^x$, and searches for $(h,\ell') \in T$ with $h=g^x$ using the equality gates. If such a pair does not exist, it adds $(g^x,\ell)$ to $T$, returns $g^x$ on an element wire and halts. Otherwise, it discards $g^x$, and samples a fresh $x\in \Z_N$ and repeats.
        \end{itemize} 
    \end{description}
    A single iteration of the second case of $\FindElt'$ terminates with probability at least $1-m/N \ge 1-T/N$, and takes one labeling gate.

    In this case, the algorithm $\cB$ in the TS model is defined with $\FindElt'$ instead of $\FindElt.$ 
    This change makes $\cB$ find the corresponding element to the label that is not previously given.
    The success probability computation is almost identical, but incurs $(2T+k)\cdot \left(\frac{T}{N}\right)^{r}$ errors in the success probability regarding the failure of $\FindElt'$.\footnote{If we assume that $T\le N^{1-1/c}$ for some constant $c>0$, then repeating $r=2c$ times ensures that the probability of failure is $1/N^2$ for each $\FindElt'.$}
    If we carefully count the number $t$ of the labels that are not given before, the number of gates becomes $q+tr$ and $\Delta=r\cdot \left(\frac{T}{N}\right)^{r}.$

    \ifnum\llncs=1 \qed \fi
\end{proof}

\subsection{Lower bounds in the Random Representation GGM}\label{subsec: RRunknown}
Let $L:\Z_N\rightarrow S$ be the labeling function, which will be lazily sampled. 
We prove the RR GGM variant of~\Cref{thm: uGGM-order} in this section.

Note that the proof of~\Cref{thm: equiv} for the faithful case works well for the unknown-order group case. In other words, it suffices to focus on the algorithm's behavior to look for a new label that was not given to the algorithm before.
\begin{theorem}
    Let $\cA_\ord$ be an order-finding algorithm over $\cD_\pprime^{(n)}$ in the random-representation GGM with the group operation complexity $T$. It holds that
    \[
        \Pr_{\cA_\ord,N,L} \left[
            \cA_\ord^{\cG_N}() \rightarrow N
        \right]
        =O\left(
            \frac{T^3}{2^{n}}
        \right).
    \]
\end{theorem}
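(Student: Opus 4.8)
The plan is to mimic the reduction in the proof of~\Cref{thm: equiv}, turning an order-finding algorithm $\cA_\ord$ in the random-representation (RR) GGM into an order-finding algorithm $\cB$ in the (unknown-order, type-safe) GGM of essentially the same group operation complexity, and then to invoke~\Cref{thm: uGGM-order}. I would first dispatch the \emph{faithful} case, in which $\cA_\ord$ only ever presents labels it has previously received. Here $\cB$ lazily samples the labeling function: it maintains a table of pairs $(h,\ell)$ with $h$ an element wire and $\ell\in S$; it answers a labeling query $x$ by building $g^x$ with a labeling gate and then running $\FindLabel$ (which searches the table via equality gates, or else picks a fresh unused $\ell$); and it answers a group operation query $(\ell_1,\ell_2,b)$ by retrieving the corresponding element wires with $\FindElt$, applying one group operation gate to obtain $h=g^{x_1+bx_2}$, and returning $\FindLabel(h)$. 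Each RR query costs $\cB$ exactly one element gate; the view presented to $\cA_\ord$ is distributed exactly as in the real RR experiment (this is where faithfulness is used: $\FindElt$ never has to invent a preimage); and $\cB$ outputs whatever $\cA_\ord$ outputs, so $\cB$ recovers $N$ with the same probability. The integer-representation analysis behind~\Cref{thm: uGGM-order} is untouched by $\cB$'s labeling gates, since a labeling gate inserts an integer in $[0,N)$ and each group operation at most doubles it; hence every element wire still holds an integer of magnitude at most $2^{T}N$, an informative collision still constrains $N$ to $O(T/\log N)$ prime divisors, and~\Cref{thm: uGGM-order} yields the bound $O(T^3/2^n)$.

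For a general $\cA_\ord$ that may present \emph{fresh} labels (strings never returned to it), I would extend $\cB$ with a $\FindElt'$-style subroutine, as in the proof of~\Cref{thm: equiv}: when $\cA_\ord$ presents a label $\ell$ not in the table, $\cB$ decides whether to treat $\ell$ as a genuine label and, if so, ``invents'' a preimage. The one wrinkle relative to~\Cref{thm: equiv} is that $\cB$ does not know $N$ and so cannot invent a concrete $x\gets\Z_N$; I expect this to be the main obstacle. It is handled by inventing a fresh \emph{formal variable} $Y_k$ for the unknown logarithm, rather than a concrete integer, and carrying the resulting linear polynomials over $\Z$ (in the $Y_k$'s) through the bookkeeping exactly as in the unknown-order analysis; a Schwartz--Zippel argument in the spirit of~\Cref{lem: SZinformative} shows that, except with probability $O(T^2/2^n)$, no informative collision ever involves a $Y_k$ with a nonzero coefficient, so every informative collision degenerates to a plain integer relation $x_i\equiv x_j\pmod N$ and the compression encoding of~\Cref{thm: uGGM-order} applies verbatim. (For the standard instantiation where $S$ is a bijective image of $\Z_N$, every fresh string is automatically valid and no probabilistic decision---hence no knowledge of $N$---is needed; if $|S|$ is much larger than $N$, a fresh string is valid only with negligible probability and $\cB$ may simply answer $\bot$.) Each handling of a fresh label costs $\cB$ only an expected $O(1)$ extra gates, so, capping the retries and absorbing the resulting negligible error, $\cB$ retains group operation complexity $O(T)$.

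Putting the pieces together, $\cB$ is an order-finding algorithm over $\cD_\pprime^{(n)}$ in the unknown-order GGM with group operation complexity $O(T)$ whose success probability is at least that of $\cA_\ord$ minus an $O(T^2/2^n)$ error term, so~\Cref{thm: uGGM-order} gives
\[
  \Pr_{\cA_\ord,N,L}\!\left[\cA_\ord^{\cG_N}()\to N\right]
  \;\le\;\Pr_{\cB,N}\!\left[\cB^{\cG_N}()\to N\right]+O\!\left(\frac{T^2}{2^n}\right)
  \;=\;O\!\left(\frac{T^3}{2^n}\right),
\]
which is the claim. The only genuinely new ingredient relative to the known-order equivalence is the formal-variable substitution for invented preimages and its Schwartz--Zippel justification; everything else is a transcription of the proofs of~\Cref{thm: equiv,thm: uGGM-order}.
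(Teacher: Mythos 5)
The faithful case of your plan matches the paper's use of \Cref{thm: equiv} exactly, and both you and the paper introduce formal variables for the preimages of freshly-presented labels. The divergence is in what is then done with those variables. You propose a Schwartz--Zippel conditioning: argue that with probability $1-O(T^2/2^n)$ no informative collision involves a fresh variable $Y_k$, and then run the compression argument of \Cref{thm: uGGM-order} unmodified. The paper instead \emph{includes the fresh preimages $\vecx=(x_1,\dots,x_t)$ in both the message space and the encoding}: the compression protocol encodes the pair $(N,\vecx)$; the encoding consists of the collision index, the divisor index, and $\vecx$ itself; and the $\log\binom{|\cG|}{t}$ terms cancel on both sides of \Cref{lem: compression}. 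The decoder then plugs the revealed $\vecx$ into $P(X_1,\dots,X_t)$ to obtain the concrete integer $M$ and factors it. This sidesteps any need to argue the $Y_k$'s do not appear.

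Your route has genuine gaps that the paper's trick avoids. First, the Schwartz--Zippel step is not clean over $\Z$ with an unknown prime modulus: a coefficient $a_k$ of $Y_k$ can satisfy $|a_k|\le 2^T N$, so $N\mid a_k$ is a real possibility, in which case the collision $a_kY_k+\cdots\equiv 0\pmod N$ can hold for \emph{every} $y_k$ and your conditioning event fails without any per-instance $1/N$ bound. This is not pathological: order-finding itself naturally produces collisions of the form $(2^j-2^i)Y_1\equiv 0\pmod N$, which have nonzero integer coefficient on $Y_1$ yet hold identically mod $N$ once $N\mid 2^{j-i}-1$. A correct version of your conditioning would have to separate ``nonzero coefficient over $\Z$'' from ``nonzero coefficient mod $N$,'' and the decoder, which does not know $N$, cannot make that distinction without exactly the auxiliary data the paper puts in the encoding. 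Second, your reduced $\cB$ takes $g^{y_1},\dots,g^{y_t}$ as extra inputs, so you cannot literally invoke \Cref{thm: uGGM-order}; you are re-proving it for an algorithm with formal-variable inputs, and the decoder simulation in that re-proof is precisely where the $\vecx$-in-the-encoding trick does the work for you. In short, you correctly isolate the obstacle, but the resolution you sketch is different from---and substantially harder to make rigorous than---the paper's encode-and-cancel approach, which is likely why \Cref{rem:unknown_equiv} explicitly flags that the known equivalence does not transfer and the appendix proves the RR bound directly.
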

\begin{proof}[{\ifnum\llncs=0 Proof sketch\else Sketch\fi}]
    Suppose that $t$ labels to queries that are not given to $\cA$ before and also not corresponding to $\bot$. 
    We let them $\ell_1=L(x_1),...,\ell_t=L(x_t)$ and $\vecx=(x_1,...,x_t)$.
    We must maintain the representations of the elements of $\cA_\ord$ as a polynomial in $\Z[X_1,...,X_t]$ where $X_i$ corresponds to $x_i$. The definition of informative collisions is a pair of elements that have the same labels but as the polynomials different, and their difference is not included in the span of the previous informative collisions. Note that the algorithm must find at least one informative collision. Let us assume that it is represented by
    \[
        P(x_1,...,X_t)=a_1 X_1 + ... + a_t X_t + c = 0,
    \]
    where we can assume that $|a_i|,|c| \le 2^T N$ by the same reason to the original proof.
    
    We make the encoding scheme for $(N,\vecx).$
    \begin{description}
        \item[$\Encode(N,\vecx)$:] It runs $\cA_\ord^{\cG_N}()$ and computes the first informative collision $c$.
        It additionally includes $\vecx$ as a part of the encoding. 
        Note that the probability that $M=P(x_1,...,x_t)=0\bmod p$ for another $n$-bit prime $p$ is $1/p$, and $|P(x_1,...,x_t)| \le (t+1) 2^T N^2.$ Let $\ell$ be the index of $N$ among the divisor of $M$.
        \item[$\Decode(c,\ell,\vecx)$:] If $c=\bot$, it outputs a random sample from $\cD_\pprime^{(n)}$. 
        Otherwise, it recovers the first informative collision and plugs $\vecx$ to $X_1,...,X_t$ to compute $M=P(x_1,...,x_t)$, and outputs the $\ell$-th prime factor $N'.$
    \end{description}
    The encoding size is $3\log T - \log\log N + \log \binom{|\cG|}{t}+ O(1)$, which should be larger than $\log \frac{2^n}{n}  + \log \binom{|\cG|}{t} + \log \epsilon$. Rearranging this concludes the proof.
    \ifnum\llncs=1 \qed \fi
\end{proof}

\end{document}